\documentclass[12pt]{article}
\usepackage{geometry,amsmath, amsthm, amssymb, psfrag, hyperref, enumerate, bbm, nicefrac, comment, color, listings}
\geometry{a4paper, top=20mm, left=25mm, right=25mm, bottom=25mm, headsep=0mm, footskip=10mm}
\usepackage{mathrsfs}
\usepackage{graphicx}

\geometry{a4paper,left=24mm,right=24mm}
\usepackage{courier}
\lstset{
	language=Python,
	showspaces=false,
	showstringspaces=false,
	basicstyle=\footnotesize\ttfamily,
	numberstyle=\tiny,
	captionpos=t, 
	abovecaptionskip=\bigskipamount, 
	numbers=left, 
	numbersep=8pt,
	frame=single,
	xleftmargin=.0in,
	xrightmargin=.0in
}

\makeatletter
\renewcommand\paragraph{\@startsection{paragraph}{4}{\z@}%
            {-2.5ex\@plus -1ex \@minus -.25ex}%
            {1.25ex \@plus .25ex}%
            {\normalfont\normalsize\bfseries}}
\makeatother
\setcounter{secnumdepth}{4}
\setcounter{tocdepth}{4}

\newcommand{\R}{\ensuremath{\mathbb R}}
\renewcommand{\P}{\ensuremath{\mathbb P}}
\newcommand{\N}{\ensuremath{\mathbb N}}
\newcommand{\E}{\ensuremath{\mathbb E}}

\DeclareMathOperator*{\smallsum}{\textstyle\sum}
\DeclareMathOperator*{\smallprod}{\textstyle\prod}

\newtheorem{theorem}{Theorem}[section]

\newtheorem{proposition}[theorem]{Proposition}
\newtheorem{corollary}[theorem]{Corollary}
\newtheorem{lemma}[theorem]{Lemma}

\newtheorem{algo}[theorem]{Framework}

\hyphenation{Ma-ru-ya-ma}

\newcommand\yesnumber{\refstepcounter{equation}\tag{\theequation}}

\title{Solving high-dimensional optimal stopping\\ problems using deep learning}

\author{
    Sebastian Becker$^1$,  
    Patrick Cheridito$^2$,\\
	Arnulf Jentzen$^3$,
    and Timo Welti$^4$
	\bigskip
	\\
	\small{$^1$ RiskLab, Department of Mathematics, ETH Z\"urich, Switzerland;}\\
	\small{email: \texttt{sebastian.becker\textcircled{\texttt{a}}math.ethz.ch}}
	\smallskip
	\\
	\small{$^2$ RiskLab, Department of Mathematics, ETH Z\"urich, Switzerland;}\\
	\small{email: \texttt{patrick.cheridito\textcircled{\texttt{a}}math.ethz.ch}}
	\smallskip
	\\
    \small{$^3$ Seminar for Applied Mathematics, Department of Mathematics, ETH Z\"urich, Switzerland;}\\
    \small{Faculty of Mathematics and Computer Science, University of M\"unster, Germany;}\\
    \small{email: \texttt{a.j\textcircled{\texttt{a}}uni-muenster.de}}
	\smallskip
	\\
    \small{$^4$ Seminar for Applied Mathematics, Department of Mathematics, ETH Z\"urich, Switzerland;}\\
	\small{D ONE Solutions AG, Z\"urich, Switzerland;}\\
    \small{email: \texttt{contact\textcircled{\texttt{a}}twelti.org}}
}

\usepackage{etoolbox}
\makeatletter
\patchcmd{\l@section}{1.0em}{0.85em}{}{}
\makeatother

\begin{document}

\maketitle

\begin{abstract}
Nowadays many financial derivatives, such as American or Bermudan options,
are of early exercise type. Often the pricing of early exercise options gives rise to high-dimensional 
optimal stopping problems, since the dimension corresponds to the number of underlying assets.
High-dimensional optimal stopping problems are, however, notoriously difficult to solve due to the well-known curse of dimensionality.
In this work, we propose an algorithm for solving such problems, which is based on deep learning and
computes, in the context of early exercise option pricing, both approximations of an optimal exercise strategy
and the price of the considered option. The proposed algorithm can also be applied to optimal stopping problems 
that arise in other areas where the underlying stochastic process can be efficiently simulated.
We present numerical results for a large number of example problems,
which include the pricing of many high-dimensional American and Bermudan options,
such as Bermudan max-call options in up to 5000~dimensions.
Most of the obtained results are compared to reference values computed by exploiting the specific problem design
or, where available, to reference values from the literature. These numerical results suggest that the proposed 
algorithm is highly effective in the case of many underlyings, in terms of both accuracy and speed.
\end{abstract}

\begin{center}
\begin{tabbing}
\emph{Key words:}
\=
American option,
Bermudan option,
financial derivative,
derivative pricing,\\
\>
option pricing,
optimal stopping,
curse of dimensionality,
deep learning
\end{tabbing}
\end{center}

\pagebreak
\tableofcontents

\section{Introduction}

Nowadays many financial derivatives, such as American or Bermudan options,
are of early exercise type. Contrary to European options, the holder of such an option has the right to 
exercise before the time of maturity. In models from mathematical finance
for the appropriate pricing of early exercise options, this aspect gives rise to optimal stopping problems.
The dimension of such optimal stopping problems can often be quite high
since it corresponds to the number of underlying assets.
Due to the curse of dimensionality (cf.\ Bellman~\cite{Bellman1957}),
high-dimensional optimal stopping problems are, however, notoriously difficult to solve.
Such optimal stopping problems can in nearly all cases not be solved explicitly
and it is an active topic of research to design and analyse approximation methods
which are capable of approximately solving possibly high-dimensional optimal stopping problems.
Many different approaches for numerically solving optimal stopping problems
and, in particular,
American and Bermudan option pricing problems
have been studied in the literature;
cf., e.g.,
\cite{Tilley1993,
DavisKaratzas1994,
BarraquandMartineau1995,
Carriere1996,
AitSahliaCarr1997,
BG97,
TsitsiklisVanRoy1999,
Andersen2000,
BroadieGlassermanHa2000,
LongstaffSchwartz2001,
TsitsiklisRoy2001,
Rogers2002,
Schweizer2002,
BallyPages2003b,
Garcia2003,
AndersenBroadie2004,
BouchardTouzi2004,
BroadieGlasserman2004,
Glasserman2004,
HaughKogan2004,
Egloff2005,
Firth2005,
GobetLemorWarin2005,
BenderKolodkoSchoenmakers2006,
KolodkoSchoenmakers2006,
PeskirShiryaev2006,
ChenGlasserman2007,
EgloffKohlerTodorovic2007,
Jamshidian2007,
BenderKolodkoSchoenmakers2008,
BS08,
BC08,
Kohler2008,
KohlerKryzakWalk2008,
LambertonLapeyre2008,
LordFangBervoetsOosterlee2008,
BelomestnyBendnerSchoenmakers2009,
Kohler2010,
KohlerKrzyzakTodorovic2010,
Rogers2010,
Belomestny2011,
Belomestny2011a,
LabartLelong2011arXiv,
DesaiFariasMoallemi2012,
JainOosterlee2012,
KohlerKryzak2012,
Be13,
BSD13,
SchoenmakersZhangHuang2013,
Christensen2014,
GuyonHenry2014,
BelomestnyDickmannNagapetyan2015,
BelomestnyLadkauSchoenmakers2015,
JiangPowell2015,
Lelong2018,
KulikovGusyatnikov2016,
BenderSchweizerZhuo2017,
CompanyEgorovaJodarSoleymani2017,
SirignanoSpiliopoulos2017,
BayerTemponeWolfers2018arXiv,
BeckerCheriditoJentzen2019,
GoldbergChen2018arXiv,
SirignanoSpiliopoulos2018,
WangChenSudjiantoLiu2018arXiv,
BayerHappolaTempone2019,
FujiiTakahashiTakahashi2019,
GoudenegeMolentZanette2019arXiv,
LapeyreLelong2019arXiv,
BeckerCheriditoJentzen2020,
Lelong2019arXiv,
ChenWan2019arXiv}.
For example,
such approaches include
approximating the Snell envelope or continuation values
(cf., e.g.,
\cite{Tilley1993,
BarraquandMartineau1995,
Carriere1996,
LongstaffSchwartz2001}),
computing optimal exercise boundaries
(cf., e.g.,
\cite{Andersen2000}),
and dual methods
(cf., e.g.,
\cite{Rogers2002,
HaughKogan2004}).
Whereas in~\cite{HaughKogan2004,KohlerKrzyzakTodorovic2010}
artificial neural networks with one hidden layer
were employed to approximate continuation values,
more recently
numerical approximation methods
for
American and Bermudan option pricing 
that are based on deep learning
were introduced;
cf., e.g.,
\cite{SirignanoSpiliopoulos2018,
SirignanoSpiliopoulos2017,
BeckerCheriditoJentzen2019,
FujiiTakahashiTakahashi2019,
BeckerCheriditoJentzen2020,
LapeyreLelong2019arXiv,
ChenWan2019arXiv}.
More precisely,
in~\cite{SirignanoSpiliopoulos2018,SirignanoSpiliopoulos2017}
deep neural networks are used to
approximately solve the
corresponding obstacle partial differential equation problem,
in~\cite{BeckerCheriditoJentzen2019}
the corresponding optimal stopping problem
is tackled directly
with a deep learning-based algorithm,
\cite{FujiiTakahashiTakahashi2019}
applies an extension of the deep BSDE solver from~\cite{HanJentzenE2018,EHanJentzen2017}
to the corresponding
reflected backward stochastic differential equation problem,
\cite{ChenWan2019arXiv}
suggests a different deep learning-based algorithm that relies on discretising backward stochastic differential equations,
and in~\cite{BeckerCheriditoJentzen2020,LapeyreLelong2019arXiv}
deep neural network-based variants of the classical algorithm
introduced by Longstaff \& Schwartz~\cite{LongstaffSchwartz2001}
are examined.

In this work, we propose an algorithm
for solving general possibly high-dimensional optimal stopping problems;
cf.~Framework~\ref{algo:general} in Subsection~\ref{sec:general_case}.
In spirit, it is similar to the algorithm introduced in~\cite{BeckerCheriditoJentzen2019}.
The proposed algorithm is based on deep learning
and computes both approximations of an optimal stopping strategy
and the optimal expected pay-off
associated with the considered optimal stopping problem.
In the context of pricing early exercise options,
these correspond to approximations of an optimal exercise strategy
and the price of the considered option, respectively.
The derivation and implementation of the proposed algorithm consist of essentially the following three steps.
\begin{enumerate}[(I)]
\item
\label{item:reformulation}
A neural network architecture for, in an appropriate sense, `randomised' stopping times
(cf.\ \eqref{eq:randomised_stopping_time} in Subsection~\ref{sec:NN})
is established in such a way that
varying the neural network parameters
leads to different randomised stopping times
being expressed.
This neural network architecture is used to
replace the supremum of the expected pay-off over suitable stopping times
(which constitutes the generic optimal stopping problem)
by the supremum of a suitable objective function over neural network parameters
(cf.\ \eqref{eq:objective_motivation}--\eqref{eq:objective_function}
in Subsection~\ref{sec:objective_function}).
\item
\label{item:SGD}
A stochastic gradient ascent-type optimisation algorithm
is employed to compute neural network parameters
that approximately maximise the objective function
(cf.\ Subsection~\ref{sec:SGD}).
\item
\label{item:final_approximations}
From these neural network parameters
and the corresponding randomised stopping time,
a true stopping time is constructed
which serves as the approximation of an optimal stopping strategy
(cf.\ \eqref{eq:stopping_time_construction} and~\eqref{eq:approx_optimal_exercise_strategy} in Subsection~\ref{sec:optimal_exercise_time}).
In addition,
an approximation of the optimal expected pay-off
is obtained by computing 
a Monte Carlo approximation of the expected pay-off under this approximately optimal stopping strategy
(cf.\ \eqref{eq:price_final_approximation} in Subsection~\ref{sec:optimal_exercise_time}).
\end{enumerate}
It follows from~\eqref{item:final_approximations}
that the proposed algorithm
computes a low-biased approximation of the optimal expected pay-off
(cf.~\eqref{eq:low_biased} in Subsection~\ref{sec:optimal_exercise_time}).
Yet, a large number of numerical experiments
where a reference value is available
(cf.~Section~\ref{sec:examples})
show that the bias appears to become small quickly during training
and that a very satisfying accuracy can be achieved in short computation time,
even in high dimensions
(cf.\ the end of this introduction below for a brief overview of the numerical computations that were performed).
Moreover,
in~\eqref{item:reformulation}
we resort to randomised stopping times
in order to circumvent the discrete nature of stopping times that attain only finitely many different values.
As a result, it is possible in \eqref{item:SGD} to tackle the arising optimisation problem
with a stochastic gradient ascent-type algorithm.
Furthermore,
while the focus in this article lies on American and Bermudan option pricing,
the proposed algorithm can also be applied to optimal stopping problems that arise in other areas
where the underlying stochastic process can be efficiently simulated.
Apart from this,
we only rely on the assumption that
the stochastic process to be optimally stopped
is a Markov process
(cf.~Subsection~\ref{sec:NN}).
But this assumption is no substantial restriction
since, on the one hand, it is automatically fulfilled in many relevant problems
and, on the other hand,
a discrete stochastic process that is not a Markov process
can be replaced by a Markov process of higher dimension that aggregates all necessary information
(cf., e.g., \cite[Subsection~4.3]{BeckerCheriditoJentzen2019}
and, e.g.,
Subsection~\ref{sec:example_TsitsiklisVanRoy1999}).

Next we compare our algorithm to the one introduced in~\cite{BeckerCheriditoJentzen2019}.
The latter splits the original problem into smaller optimal stopping problems at each time step
where stopping is permitted and decides to stop at that point in time or later
(cf.~\cite[(4)~in Subsection~2.1]{BeckerCheriditoJentzen2019}).
Starting at maturity, these auxiliary problems are solved recursively backwards until the initial time is reached.
Thereby, in every new step, neural network parameters are learned
for an objective function that depends, in particular, on the parameters found in the previous steps
(cf.~\cite[Subsection~2.3]{BeckerCheriditoJentzen2019}).
In contrast,
in~\eqref{item:reformulation} a single objective function is designed.
This objective function
allows to search in~\eqref{item:SGD} for neural network parameters that
maximise the expected pay-off
simultaneously over (randomised) stopping times which may decide to stop at any of the admissible points in time.
Therefore,
the algorithm proposed here does not rely on a recursion over the different time points.
In addition,
the construction of the final approximation of an optimal stopping strategy
in~\eqref{item:final_approximations}
%(cf.\ \eqref{eq:stopping_time_construction} and~\eqref{eq:approx_optimal_exercise_strategy} in Subsection~\ref{sec:optimal_exercise_time})
differs from a corresponding construction in~\cite{BeckerCheriditoJentzen2019}.
We refer to Subsection~\ref{sec:example_GA_put}
for a comparison between the two algorithms with respect to performance.

The remainder of this article is organised as follows.
In Section~\ref{sec:main_ideas}, we present the main ideas
from which the proposed algorithm is derived.
More specifically,
in Subsection~\ref{ssec:problem} we illustrate
how an optimal stopping problem in the context of American option pricing is typically formulated.
Thereafter,
a replacement of this continuous-time problem
by a corresponding discrete time optimal stopping problem is discussed by means of an example
in Subsection~\ref{sec:temporal_discretization}.
Subsection~\ref{sec:factorization_lemma} is devoted to the statement and proof of an elementary but crucial result
about factorising general discrete stopping times
in terms of compositions of measurable functions
(cf.\ Lemma~\ref{lem:yg_squared}),
which lies at the heart of the neural network architecture we propose in Subsection~\ref{sec:NN}
to approximate general discrete stopping times.
This construction, in turn, is exploited in Subsection~\ref{sec:objective_function}
to transform the discrete time optimal stopping problem from Subsection~\ref{sec:temporal_discretization}
into the search of a maximum of a suitable objective function
(cf.~\eqref{item:reformulation} above).
In Subsection~\ref{sec:SGD}, we suggest to employ 
stochastic gradient ascent-type optimisation algorithms
to find approximate maximum points of the objective function
(cf.~\eqref{item:SGD} above).
As a last step,
we explain in Subsection~\ref{sec:optimal_exercise_time}
how we calculate final approximations of both the American option price and an optimal exercise strategy
(cf.~\eqref{item:final_approximations} above).
In Section~\ref{sec:algorithm_details}, we introduce the proposed algorithm
in a concise way, first for a special case for the sake of clarity (cf.~Subsection~\ref{sec:specific})
and second in more generality so that, in particular, a rigorous description of our implementations is fully covered (cf.~Subsections~\ref{sec:general_case}--\ref{sec:comments}).
Following this,
in Section~\ref{sec:examples}
first a few theoretical results are presented (cf.~Subsection~\ref{sec:theory}),
which are used to design numerical example problems and to provide reference values.
Thereafter, we describe in detail
a large number of example problems,
on which our proposed algorithm was tested,
and present numerical results for each of these problems.
In particular,
the examples include
the optimal stopping of Brownian motions (cf.~Subsection~\ref{sec:example_BM}),
the pricing of certain exotic American geometric average put and call-type options (cf.~Subsection~\ref{sec:geometric_average}),
the pricing of Bermudan max-call options in up to 5000~dimensions (cf.~Subsection~\ref{sec:max_call}),
the pricing of an American strangle spread basket option in five dimensions (cf.~Subsection~\ref{sec:strangle_spread}),
the pricing of an American put basket option in Dupire's local volatility model in five dimensions (cf.~Subsection~\ref{sec:example_Dupire}),
and the pricing of an exotic path-dependent financial derivative of a single underlying, which is modelled as a 100-dimensional optimal stopping problem (cf.~Subsection~\ref{sec:example_TsitsiklisVanRoy1999}).
The numerical results for the examples in
Subsections~\ref{sec:example_BM_Bermudan},
\ref{sec:example_BM_American},
\ref{sec:example_GA_put},
\ref{sec:example_GA_call2},
\ref{sec:example_GA_call1},
and
\ref{sec:max-call_equity}
are compared to calculated reference values that can be easily obtained due to
the specific design of the considered optimal stopping problem.
Moreover, the examples in
Subsections~\ref{sec:example_GA_call2},
\ref{sec:max_std},
\ref{sec:max-call_equity},
\ref{sec:strangle_spread},
\ref{sec:example_Dupire},
and
\ref{sec:example_TsitsiklisVanRoy1999}
are taken from the literature, and
our corresponding numerical results are compared to reference values from the literature (where available).
Finally, {\sc Python} source codes employed to generate the numerical results in Subsections~\ref{sec:max_std}, \ref{sec:max-call_equity}, \ref{sec:strangle_spread}, \ref{sec:example_Dupire}, and~\ref{sec:example_TsitsiklisVanRoy1999}
can be found in Section~\ref{sec:source_codes}.

\section{Main ideas of the proposed algorithm}
\label{sec:main_ideas}

In this section, we outline the main ideas that lead to the formulation of the proposed algorithm
in Subsections~\ref{sec:specific}--\ref{sec:general_case}
by considering the example of pricing an American option.
The proposed algorithm in Framework~\ref{algo:general} in Subsection~\ref{sec:general_case} is, however, general enough to also be applied to optimal stopping problems
where there are no specific assumptions on the dynamics of the underlying stochastic process,
as long as it can be cheaply simulated
(cf.~Subsection~\ref{sec:comments}).
Furthermore,
often in practice
and, in particular, in the case of Bermudan option pricing
(cf.~many of the examples in Section~\ref{sec:examples}),
the optimal stopping problem of interest is not a continuous-time problem but is already formulated in discrete time.
In such a situation, there is no need for
a time discretisation, as described in Subsection~\ref{sec:temporal_discretization} below, and the proposed algorithm in Framework~\ref{algo:general} can be applied directly.

\subsection{The American option pricing problem}
\label{ssec:problem}

Let $ T \in (0,\infty) $, $ d \in \N = \{ 1, 2, 3, \ldots \} $,
let $ ( \Omega, \mathcal{F}, \P ) $ be a probability space
with a filtration $ \mathscr{F} = ( \mathscr{F}_t )_{ t \in [ 0, T ] } $
that satisfies the usual conditions
(cf., e.g., \cite[Definition~2.25 in Section~1.2]{ks91}),
let $ \xi \colon \Omega \to \R^d $
be an $\mathscr{F}_0$/$\mathcal{B}( \R^d )$-measurable function which satisfies
for all
$ p \in ( 0, \infty ) $
that
$ \E\bigl[
    \lVert \xi \rVert_{ \R^d }^p
\bigr] < \infty $,
let $ W \colon [0,T] \times \Omega \to \R^d $ be a standard
$ ( \Omega, \mathcal{F}, \P, \mathscr{F} ) $-Brownian motion with continuous sample paths,
let $ \mu \colon \R^d \to \R^d $
and $ \sigma \colon \R^d \to \R^{ d \times d } $
be Lipschitz continuous functions, 
let $ X \colon [0,T] \times \Omega \to \R^d $ 
be an $ \mathscr{F} $-adapted continuous solution process of the stochastic differential equation
\begin{equation}
\label{eq:FSDE}
  d X_t = \mu( X_t ) \, dt + \sigma( X_t ) \, dW_t
  ,
  \qquad
  X_0 = \xi ,
  \qquad
  t \in [0,T] ,
\end{equation}
% be an $ ( \mathbb{F}_t )_{ t \in [0,T] } $-adapted stochastic process with continuous sample paths 
% which satisfies that for all $ t \in [0,T] $ it holds $ \P $-a.s.\ that
% \begin{equation}
% \label{eq:FSDE}
%   X_t = \xi
%   + \int_0^t \mu( X_s ) \, ds
%   + \int_0^t \sigma( X_s ) \, dW_s
%   ,
% \end{equation}
let $ \mathbb{F} = ( \mathbb{F}_t )_{ t \in [0,T] } $ be the
filtration generated by $ X $,
and let $ g \colon [0,T] \times \R^d \to \R $ be a continuous and at most polynomially growing function.
We think of $ X $ as a model for the 
% discounted 
price processes of $ d $ underlyings (say, $ d $ stock prices) 
under the risk-neutral pricing measure $ \P $
(cf., e.g., Kallsen~\cite{Kallsen2009})
and we are interested in approximatively pricing the American option on the process 
$ ( X_t )_{ t \in [0,T] } $ 
with the discounted pay-off function $ g \colon [0,T] \times \R^d \to \R $, that is, we intend to 
compute the real number
\begin{equation}
\label{eq:option_price}
  \sup\!\left\{ 
    \E\big[ 
      g( \tau, X_{ \tau } )
    \big]
    \colon
    \substack{
      \tau \colon \Omega \to [0,T] \text{ is an}
    \\
      \mathbb{F}\text{-stopping time}
    }
  \right\}
  .
\end{equation}
In addition to the \emph{price} of the American option in the model~\eqref{eq:FSDE},
there is also a high demand from the financial engineering industry 
to compute an approximately \emph{optimal exercise strategy},
that is, to compute a stopping time which approximately 
reaches the supremum in \eqref{eq:option_price}.

In a very simple example of \eqref{eq:FSDE}--\eqref{eq:option_price}, we can think 
of an \emph{American put option} in the one-dimensional Black--Scholes model, in which there are
an interest rate $ r \in \R $,
a dividend yield $ \delta \in [ 0, \infty ) $,
a volatility $ \beta \in (0,\infty) $, 
and a strike price $ K \in (0,\infty) $
such that it holds 
for all $ x \in \R $, $ t \in [ 0, T ] $ that
$ d = 1 $, $ \mu( x ) = ( r - \delta ) \, x $, $ \sigma( x ) = \beta \, x $, 
and 
$ g( t, x ) = e^{ -r t } \max\{ K - x , 0 \} $.

\subsection{Temporal discretisation}
% Discretisations of stochastic differential equations}
\label{sec:temporal_discretization}

To derive the proposed approximation algorithm,
we first apply the Euler--Maruyama scheme to the stochastic 
differential equation \eqref{eq:FSDE} 
(cf.\ \eqref{eq:delta_Xb}--\eqref{eq:curlX} below)
and we employ a suitable time discretisation for 
the optimal stopping problem~\eqref{eq:option_price}.
For this let 
$ N \in \N $ be a natural number and let 
$ t_0, t_1, \dots, t_N \in [0,T] $ be real numbers with
\begin{equation}
  0 = t_0 < t_1 < \dots < t_N = T 
\end{equation}
(such that the maximal mesh size 
$ 
  \max_{ n \in \{ 0, 1, \dots, N - 1 \} } 
  ( t_{ n + 1 } - t_n ) 
$ 
is sufficiently small).
Observe that \eqref{eq:FSDE}
ensures that 
for all $ n \in \{ 0, 1, \dots, N - 1 \} $ it holds $ \P $-a.s.\ that
\begin{equation}
\label{eq:delta_X}
\begin{split}
  X_{ t_{ n + 1 } }
& =
  X_{ t_n }
  +
  \int_{ t_n }^{ t_{ n + 1 } }
  \mu( X_s ) 
  \, ds
  +
  \int_{ t_n }^{ t_{ n + 1 } }
  \sigma( X_s ) 
  \, dW_s
  .
\end{split}
\end{equation}
Note that \eqref{eq:delta_X}
suggests for every $ n \in \{ 0, 1, \dots, N - 1 \} $ that
\begin{equation}
\label{eq:delta_Xb}
\begin{split}
  X_{ t_{ n + 1 } }
& 
\approx
  X_{ t_n }
  +
  \mu( X_{ t_n } ) 
  \left( t_{ n + 1 } - t_n \right)
  +
  \sigma( X_{ t_n } ) 
  \left( W_{ t_{ n + 1 } } - W_{ t_n } \right)
  .
\end{split}
\end{equation}
The approximation scheme associated with \eqref{eq:delta_Xb} is referred to as 
the Euler--Maruyama scheme in the literature 
(cf., e.g., Maruyama~\cite{m55} and Kloeden \& Platen~\cite{KloedenPlaten1992}).
More formally,
let
$ 
  \mathcal{X} = ( \mathcal{X}^{(1)}, \ldots, \mathcal{X}^{(d)} )
  \colon 
  \{ 0, 1, \dots, N \} \times \Omega \to \R^d
$
be the stochastic process which satisfies 
for all $ n \in \{ 0, 1, \dots, N - 1 \} $
that
$ 
\mathcal{X}_0 = \xi
$
and
\begin{equation}
\label{eq:curlX}
  \mathcal{X}_{ n + 1 }
  =
  \mathcal{X}_n
  +
  \mu( \mathcal{X}_n )
  \left( t_{ n + 1 } - t_n \right)
  +
  \sigma( \mathcal{X}_n )
  \left( W_{ t_{ n + 1 } } - W_{ t_n } \right)
\end{equation}
and let $ \mathfrak{F} = ( \mathfrak{F}_n )_{  n \in \{ 0, 1, \dots, N \} } $ be the filtration generated by $ \mathcal{X} $.
Combining this with \eqref{eq:delta_Xb} suggests the approximation
\begin{equation}
\begin{split}
\label{eq:option_price_DISCRETE}
  \sup\!\left\{ 
    \E\big[ 
      g( t_{ \tau }, \mathcal{X}_{ \tau } )
    \big]
    \colon
    \substack{
      \tau \colon \Omega \to \{ 0, 1, \dots, N \} \text{ is an}
    \\
      \mathfrak{F} \text{-stopping time}
    }
  \right\}
\approx
  \sup\!\left\{ 
    \E\big[ 
      g( \tau, X_{ \tau } )
    \big]
    \colon
    \substack{
      \tau \colon \Omega \to [0,T] \text{ is an}
    \\
      \mathbb{F}\text{-stopping time}
    }
  \right\}
\end{split}
\end{equation}
for the price~\eqref{eq:option_price} of the American option in Subsection~\ref{ssec:problem}.
Below we employ, in particular, \eqref{eq:option_price_DISCRETE} to derive the proposed approximation algorithm.

For every $n \in \{0,1, \dots, N\}$
note that
the current pay-off $g(t_n, \mathcal{X}_n)$ does not carry any information that is not already contained 
in $\mathcal{X}_n$.
But typically,
optimal exercise strategies are learned more efficiently if 
it is added as an additional feature.
Therefore, we introduce the $(d+1)$-dimensional Markov process
$\mathcal{Z} \colon \{0,1, \dots, N\} \times \Omega \to \R^{d+1}$
which satisfies for all $n \in \{0,1, \dots, N\}$ that
$\mathcal{Z}_n = \bigl(\mathcal{X}_n^{(1)}, \dots, \mathcal{X}_n^{(d)}, g(t_n, \mathcal{X}_n)\bigr)$.
Observe that $ \mathcal{Z} $ and $\mathcal{X}$ generate the same filtration $\mathfrak{F}$.

\subsection{Factorisation lemma for stopping times}
\label{sec:factorization_lemma}

The derivation of the proposed approximation algorithm is in parts based on 
an elementary reformulation of time-discrete stopping times (cf.\ the left-hand side of \eqref{eq:option_price_DISCRETE} above)
in terms of measurable functions that appropriately characterise the behaviour
of the stopping time; cf.\ \eqref{eq:stopping_rep_1} and \eqref{eq:stopping_rep_2} in Lemma~\ref{lem:yg_squared} below.
The proof of Lemma~\ref{lem:yg_squared} employs the following well-known factorisation result, 
Lemma~\ref{lem:factorization}. Lemma~\ref{lem:factorization} follows, e.g., from Klenke~\cite[Corollary~1.97]{Klenke2014}.

\begin{lemma}[Factorisation lemma]
\label{lem:factorization}
Let $ ( S, \mathcal{S} ) $
be a measurable space, 
let $ \Omega $ be a set,
let $ B \in \mathcal{B}( \R \cup \{ - \infty, \infty \} ) $,
and let $ X \colon \Omega \to S $ and $ Y \colon \Omega \to B $ be functions.
Then it holds that
$ Y $ 
is 
$ \{ X^{ - 1 }( A ) \colon A \in \mathcal{S} \} $/$ \mathcal{B}( B ) $-measurable 
if and only if
there exists an 
$
  \mathcal{S}
$/$
  \mathcal{B}( B )
$-measurable function 
$ f \colon S \to B $
such that
\begin{equation}
  Y = f \circ X 
  .
\end{equation}
\end{lemma}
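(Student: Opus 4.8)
My plan is to prove the two implications separately, handling the forward direction with the standard approximation machine of measure theory. Throughout I would write $\mathcal{G} = \{X^{-1}(A) \colon A \in \mathcal{S}\}$ for the pullback $\sigma$-algebra on $\Omega$ generated by $X$, noting that $\mathcal{G}$ is indeed a $\sigma$-algebra because $\mathcal{S}$ is one and preimages commute with complements and countable unions. The reverse implication is immediate: if $Y = f \circ X$ with $f$ being $\mathcal{S}/\mathcal{B}(B)$-measurable, then for every $C \in \mathcal{B}(B)$ one has $Y^{-1}(C) = X^{-1}(f^{-1}(C))$ with $f^{-1}(C) \in \mathcal{S}$, so that $Y^{-1}(C) \in \mathcal{G}$ and $Y$ is $\mathcal{G}/\mathcal{B}(B)$-measurable. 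The substance of the lemma is thus the forward implication, which I would attack by constructing $f$ first on indicators, then on simple functions by linearity, and finally on general $Y$ by a pointwise limit.

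For the base case I would suppose $Y = \mathbbm{1}_{E}$ with $E \in \mathcal{G}$. By definition $E = X^{-1}(A)$ for some $A \in \mathcal{S}$, and taking $f = \mathbbm{1}_{A}$ yields an $\mathcal{S}/\mathcal{B}(B)$-measurable function with $f \circ X = \mathbbm{1}_{X^{-1}(A)} = \mathbbm{1}_{E} = Y$. Summing such representations then shows that every $\mathcal{G}$-measurable simple function $Y = \sum_{i=1}^{m} c_i \, \mathbbm{1}_{E_i}$ (with $E_i \in \mathcal{G}$) can be written as $f \circ X$ for the $\mathcal{S}$-measurable simple function $f = \sum_{i=1}^{m} c_i \, \mathbbm{1}_{A_i}$, where $E_i = X^{-1}(A_i)$.

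For a general $\mathcal{G}/\mathcal{B}(B)$-measurable $Y$ (with values in the extended reals), I would choose $\mathcal{G}$-measurable simple functions $Y_n$ with $Y_n \to Y$ pointwise on $\Omega$, and for each $n$ pick, by the simple-function case, an $\mathcal{S}$-measurable simple function $f_n$ with $Y_n = f_n \circ X$. The natural candidate is $f = \lim_n f_n$, but this limit need not exist at points of $S$ outside the range of $X$, and where it does exist it need not land in $B$. This is the crux of the argument and the only real obstacle. I would resolve it by working in the compact space $\R \cup \{-\infty, \infty\}$, where $g = \limsup_n f_n$ and $h = \liminf_n f_n$ are everywhere defined and $\mathcal{S}/\mathcal{B}(\R \cup \{-\infty,\infty\})$-measurable, setting $E = \{ s \in S \colon g(s) = h(s) \} \cap g^{-1}(B) \in \mathcal{S}$, fixing a point $b_0 \in B$, and defining $f = g$ on $E$ and $f = b_0$ on $S \setminus E$.

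A short case distinction on $f^{-1}(C)$ for $C \in \mathcal{B}(B)$, according to whether $b_0 \in C$, then confirms that $f$ is $\mathcal{S}/\mathcal{B}(B)$-measurable and takes values in $B$. Finally, for any $\omega \in \Omega$ the identity $f_n(X(\omega)) = Y_n(\omega) \to Y(\omega)$ forces $g(X(\omega)) = h(X(\omega)) = Y(\omega) \in B$, so that $X(\omega) \in E$ and hence $f(X(\omega)) = g(X(\omega)) = Y(\omega)$. This gives $Y = f \circ X$ and completes the forward direction, and with it the proof.
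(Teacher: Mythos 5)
Your proof is correct, but note that the paper itself does not prove this lemma at all: it simply defers to Klenke~\cite[Corollary~1.97]{Klenke2008}, so there is no in-paper argument to match against. What you have written is essentially the standard proof that such a reference supplies --- the easy implication by composing preimages, and the substantive implication by the indicator / simple-function / pointwise-limit machine. The one place where genuine care is needed, and where your write-up correctly goes beyond the bare textbook routine, is the codomain restriction: the lemma asks for $f \colon S \to B$ rather than $f \colon S \to \R \cup \{ -\infty, \infty \}$, and a naive pointwise limit of the $f_n$ need neither exist nor take values in $B$ at points of $S$ outside the range of $X$. Your fix --- forming $g = \limsup_n f_n$ and $h = \liminf_n f_n$ in the compact space $\R \cup \{ -\infty, \infty \}$, restricting to the measurable set $E = \{ g = h \} \cap g^{-1}(B)$, assigning a fixed default value $b_0 \in B$ off $E$, and then checking that $X(\Omega) \subseteq E$ --- is exactly the right device, and the measurability of $\{ g = h \}$ and of $g^{-1}(B)$ is standard. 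Two cosmetic caveats, neither of which affects validity: the choice of $b_0$ tacitly assumes $B \neq \emptyset$ (the statement is degenerate otherwise, since $Y$ maps into $B$), and in the simple-function step you should take the sets $E_i$ pairwise disjoint with finite coefficients $c_i$ (as the standard dyadic approximants indeed are), so that no $\infty - \infty$ ambiguity can arise in $\sum_{i} c_i \, \mathbbm{1}_{A_i}$ at points of $S$ where the chosen $A_i$ happen to overlap.
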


We are now ready to present the above-mentioned Lemma~\ref{lem:yg_squared}. This elementary lemma 
is a consequence of Lemma~\ref{lem:factorization} above.

\begin{lemma}[Factorisation lemma for stopping times]
\label{lem:yg_squared}
Let $ d, N \in \N $, 
let $ ( \Omega, \mathcal{F}, \P ) $ 
be a probability space,
let 
$ 
  \mathcal{Z} \colon 
  \{ 0, 1, \dots, N \} \times \Omega \to \R^{d+1}
$ 
be a stochastic process, 
and 
let $ \mathfrak{F} = ( \mathfrak{F}_n )_{ n \in \{ 0, 1, \dots, N \} } $
be the filtration generated by $ \mathcal{Z} $.
Then 
\begin{enumerate}[(i)]
\item
\label{item:yg_squared_ii}
for all
Borel measurable functions
$
  \mathbb{U}_n \colon ( \R^{d+1} )^{ n + 1 } \to \{ 0, 1 \}
$,
$ n \in \{ 0, 1, \dots, N \} $,
with 
$
  \forall \, z_0, z_1, \dots, z_N \in \R^{d+1} \colon
  \sum_{ n = 0 }^N
  \mathbb{U}_n( z_0, z_1, \dots, z_n )
  = 1
$
it holds that the function
\begin{equation}
\label{eq:stopping_rep_2}
  \Omega \ni \omega 
  \mapsto
    \sum_{ n = 0 }^N
    n \, 
    \mathbb{U}_n\big( 
      \mathcal{Z}_0( \omega ), 
      \mathcal{Z}_1( \omega ), 
      \dots, 
      \mathcal{Z}_n( \omega ) 
    \big)
  \in 
  \{ 0, 1, \dots, N \}
\end{equation}
is an $ \mathfrak{F} $-stopping time
and
\item
\label{item:yg_squared_i}
for every 
$ \mathfrak{F} $-stopping time 
$ \tau \colon \Omega \to \{ 0, 1, \dots, N \} $
there exist Borel measurable functions
$
  \mathbb{U}_n \colon ( \R^{d+1} )^{ n + 1 } \to \{ 0, 1 \}
$,
$ n \in \{ 0, 1, \dots, N \} $,
which satisfy
$
  \forall \, z_0, z_1, \dots, z_N \in \R^{d+1} \colon
  \sum_{ n = 0 }^N
  \mathbb{U}_n( z_0, z_1, \dots, z_n )
  = 1
$
and
\begin{equation}
\label{eq:stopping_rep_1}
    \tau 
    =
    \sum_{ n = 0 }^N
    n \, \mathbb{U}_n( \mathcal{Z}_0, \mathcal{Z}_1, \dots, \mathcal{Z}_n )
    .
\end{equation}
\end{enumerate}
\end{lemma}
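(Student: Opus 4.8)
The plan is to treat the two implications separately: direction~\eqref{item:yg_squared_ii} is a direct measurability computation, while direction~\eqref{item:yg_squared_i} rests on Lemma~\ref{lem:factorization} followed by a correction step that enforces the everywhere-constraint.

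For part~\eqref{item:yg_squared_ii}, first I would note that the hypothesis $\sum_{n=0}^N \mathbb{U}_n(x_0,\dots,x_n) = 1$ together with $\mathbb{U}_n \in \{0,1\}$ forces, for every $\omega \in \Omega$, exactly one index $n$ with $\mathbb{U}_n(\mathcal{X}_0(\omega),\dots,\mathcal{X}_n(\omega)) = 1$. Hence the function $\tau$ in~\eqref{eq:stopping_rep_2} is well defined with values in $\{0,1,\dots,N\}$, and for each $m$ one has the set identity $\{\tau = m\} = \{\mathbb{U}_m(\mathcal{X}_0,\dots,\mathcal{X}_m) = 1\}$: if $\mathbb{U}_m = 1$ then all remaining summands vanish and $\tau = m$, while conversely $\tau = m$ can occur only through the index $m$ carrying the unique $1$. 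Since $\mathbb{U}_m$ is Borel measurable and $(\mathcal{X}_0,\dots,\mathcal{X}_m)$ is $\mathbb{F}_m$/$\mathcal{B}((\R^d)^{m+1})$-measurable, the composition $\mathbb{U}_m(\mathcal{X}_0,\dots,\mathcal{X}_m)$ is $\mathbb{F}_m$-measurable, so $\{\tau = m\} \in \mathbb{F}_m$. As this holds for every $m \in \{0,1,\dots,N\}$, the function $\tau$ is an $\mathbb{F}$-stopping time.

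For part~\eqref{item:yg_squared_i}, fix an $\mathbb{F}$-stopping time $\tau$. First I would record that $\{\tau = n\} \in \mathbb{F}_n = \sigma(\mathcal{X}_0,\dots,\mathcal{X}_n)$, so $\mathbbm{1}_{\{\tau=n\}}$ is measurable with respect to the $\sigma$-algebra generated by $(\mathcal{X}_0,\dots,\mathcal{X}_n) \colon \Omega \to (\R^d)^{n+1}$. Applying Lemma~\ref{lem:factorization} with $S = (\R^d)^{n+1}$, $X = (\mathcal{X}_0,\dots,\mathcal{X}_n)$, $Y = \mathbbm{1}_{\{\tau=n\}}$, and $B = \{0,1\}$ then produces Borel measurable functions $f_n \colon (\R^d)^{n+1} \to \{0,1\}$ with $f_n(\mathcal{X}_0,\dots,\mathcal{X}_n) = \mathbbm{1}_{\{\tau=n\}}$ for each $n \in \{0,1,\dots,N\}$.

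The hard part is that these $f_n$ satisfy $\sum_{n=0}^N f_n(\mathcal{X}_0,\dots,\mathcal{X}_n) = 1$ identically on $\Omega$ — equivalently only on the range of $(\mathcal{X}_0,\dots,\mathcal{X}_N)$ — whereas the asserted constraint must hold for \emph{all} $x_0,\dots,x_N \in \R^d$, and off that range the factorisation lemma controls the $f_n$ in no way. To repair this I would pass to the ``first hit'' partition of unity, setting
\begin{equation*}
  \mathbb{U}_n(x_0,\dots,x_n)
  =
  f_n(x_0,\dots,x_n)
  \,\textstyle\prod_{k=0}^{n-1}\bigl(1 - f_k(x_0,\dots,x_k)\bigr)
\end{equation*}
for $n \in \{0,1,\dots,N-1\}$ and $\mathbb{U}_N(x_0,\dots,x_N) = \prod_{k=0}^{N-1}\bigl(1 - f_k(x_0,\dots,x_k)\bigr)$. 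These functions are Borel measurable and $\{0,1\}$-valued, and a short case distinction on the smallest index $k \le N-1$ (if any) with $f_k = 1$ shows $\sum_{n=0}^N \mathbb{U}_n(x_0,\dots,x_n) = 1$ for every $x_0,\dots,x_N \in \R^d$. Finally, since for each $\omega$ exactly one $f_k(\mathcal{X}_0(\omega),\dots,\mathcal{X}_k(\omega))$ equals $1$, namely $k = \tau(\omega)$, the product collapses to give $\mathbb{U}_n(\mathcal{X}_0,\dots,\mathcal{X}_n) = \mathbbm{1}_{\{\tau=n\}}$, whence $\sum_{n=0}^N n\,\mathbb{U}_n(\mathcal{X}_0,\dots,\mathcal{X}_n) = \sum_{n=0}^N n\,\mathbbm{1}_{\{\tau=n\}} = \tau$, which establishes~\eqref{eq:stopping_rep_1} and completes the construction.
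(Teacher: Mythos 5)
Your proposal is correct and follows essentially the same route as the paper: part~(i) via the set identity $\{\tau=m\}=\{\mathbb{U}_m(\mathcal{X}_0,\dots,\mathcal{X}_m)=1\}$, and part~(ii) via Lemma~\ref{lem:factorization} followed by a correction enforcing the sum-to-one constraint off the range of $(\mathcal{X}_0,\dots,\mathcal{X}_N)$. Your explicit ``first hit'' products $\mathbb{U}_n=f_n\prod_{k<n}(1-f_k)$ are exactly what the paper's recursive definition~\eqref{eq:defUn} unrolls to (the $\max\{\cdot\,,n+1-N\}$ there only bites at $n=N$, forcing the last factor to absorb the remaining mass), so the two constructions coincide.
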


\begin{proof}[Proof of Lemma~\ref{lem:yg_squared}]
Note that for all
Borel measurable functions
$
  \mathbb{U}_n \colon ( \R^{d+1} )^{ n + 1 } \to \{ 0, 1 \}
$,
$ n \in \{ 0, 1, \dots, N \} $,
with 
$
  \forall \, z_0, z_1, \dots, z_N \in \R^{d+1} \colon
  \sum_{ n = 0 }^N
  \mathbb{U}_n( z_0, z_1, \dots, z_n )
  = 1
$
and all 
$ k \in \{ 0, 1, \dots, N \} $
it holds that 
\begin{equation}
\begin{split}
&
  \left\{ 
    \omega \in \Omega \colon
    \sum_{ n = 0 }^N
    n \, 
    \mathbb{U}_n\big( 
      \mathcal{Z}_0( \omega ), 
      \mathcal{Z}_1( \omega ), 
      \dots, 
      \mathcal{Z}_n( \omega ) 
    \big)
    =
    k
  \right\}
\\ & =
  \left\{ 
    \omega \in \Omega
    \colon
    \mathbb{U}_k\big( 
      \mathcal{Z}_0( \omega ), 
      \mathcal{Z}_1( \omega ), 
      \dots, 
      \mathcal{Z}_k( \omega ) 
    \big)
    =
    1
  \right\}
\\ & =
  \left\{ 
    \omega \in \Omega
    \colon
    ( 
      \mathcal{Z}_0( \omega ), 
      \mathcal{Z}_1( \omega ), 
      \dots, 
      \mathcal{Z}_k( \omega ) 
    )
    \in
    \underbrace{
      ( \mathbb{U}_k )^{ - 1 }( \{ 1 \} )
    }_{
      \in \mathcal{B}( ( \R^{d+1} )^{ k + 1 } )
    }
  \right\}
  \in
  \mathfrak{F}_k
  .
\end{split}
\end{equation}
This establishes \eqref{item:yg_squared_ii}.
It thus remains to prove \eqref{item:yg_squared_i}. 
For this let 
$ \tau \colon \Omega \to \{ 0, 1, \dots, N \} $
be an $ \mathfrak{F} $-stopping time. 
Observe that for every function 
$ \varrho \colon \Omega \to \{ 0, 1, \dots, N \} $
and every $ \omega \in \Omega $
it holds that
\begin{equation}
\label{eq:rho_rep}
  \varrho( \omega )
  =
  \sum_{ n = 0 }^N
  n 
  \,
  \mathbbm{1}_{
    \{ \varrho = n \}
  }( \omega )
  .
\end{equation}
Next note that 
for every $ n \in \{ 0, 1, \dots, N \} $
it holds that the function
\begin{equation}
  \Omega 
  \ni 
  \omega 
  \mapsto
  \mathbbm{1}_{
    \{ \tau = n \}
  }( \omega )
  \in
  \{ 0, 1 \}
\end{equation}
is $ \mathfrak{F}_n $/$ \mathcal{B}( \{ 0, 1 \} ) $-measurable. 
This and the fact that 
\begin{equation}
  \forall \, n \in \{ 0, 1, \dots, N \} \colon
  \sigma_{ \Omega }\big( ( \mathcal{Z}_0, \mathcal{Z}_1, \dots, \mathcal{Z}_n ) \big)
  =
  \mathfrak{F}_n
\end{equation}
ensure that 
for every $ n \in \{ 0, 1, \dots, N \} $
it holds that the function
\begin{equation}
  \Omega 
  \ni 
  \omega 
  \mapsto
  \mathbbm{1}_{
    \{ \tau = n \}
  }( \omega )
  \in
  \{ 0, 1 \}
\end{equation}
is $ \sigma_{ \Omega }( ( \mathcal{Z}_0, \mathcal{Z}_1, \dots, \mathcal{Z}_n ) ) $/$ \mathcal{B}( \{ 0, 1 \} ) $-measurable. 
Lemma~\ref{lem:factorization} hence demonstrates that
there exist 
Borel measurable functions
$
  \mathbb{V}_n \colon ( \R^{d+1} )^{ n + 1 } \to \{ 0, 1 \}
$,
$ n \in \{ 0, 1, \dots, N \} $,
which satisfy
for all
$ 
  n \in \{ 0, 1, \dots, N \} 
$,
$ \omega \in \Omega $ that
\begin{equation}
\label{eq:defVn}
  \mathbbm{1}_{
    \{ \tau = n \}
  }( \omega )
  =
  \mathbb{V}_n\big( \mathcal{Z}_0( \omega ), \mathcal{Z}_1( \omega ), \dots, \mathcal{Z}_n( \omega ) \big)
  .
\end{equation}
Next let 
$
  \mathbb{U}_n \colon ( \R^{d+1})^{ n + 1 } \to \R
$,
$ n \in \{ 0, 1, \dots, N \} $,
be the functions which satisfy 
for all $ n \in \{ 0, 1, \dots, N \} $,
$ z_0, z_1, \dots, z_n \in \R^{d+1} $
that
\begin{equation}
\label{eq:defUn}
\begin{split}
&
  \mathbb{U}_n( z_0, z_1, \dots, z_n )
\\ & 
=
  \max\!\left\{
    \mathbb{V}_n( z_0, z_1, \dots, z_n ) 
    ,
    n + 1 - N
  \right\}
  \left[
    1 
    - 
    \sum_{ k = 0 }^{ n - 1 }
    \mathbb{U}_k( z_0, z_1, \dots, z_k ) 
  \right]
  .
\end{split}
\end{equation}
Observe that 
\eqref{eq:defUn}, in particular, ensures that 
for all $ z_0, z_1, \dots, z_N \in \R^{ d + 1 } $
it holds that
\begin{equation}
  \mathbb{U}_N( z_0, z_1, \dots, z_N )
=
  \left[
    1 
    - 
    \sum_{ k = 0 }^{ N - 1 }
    \mathbb{U}_k( z_0, z_1, \dots, z_k ) 
  \right]
  .
\end{equation}
Hence, we obtain that
for all $ z_0, z_1, \dots, z_N \in \R^{d+1} $
it holds that
\begin{equation}
\label{eq:sumUn}
  \sum_{ k = 0 }^N
  \mathbb{U}_k( z_0, z_1, \dots, z_k ) 
  = 1
  .
\end{equation}
In addition, note that \eqref{eq:defUn} 
assures that for all $ z_0 \in \R^{d+1} $ it holds that
\begin{equation}
\label{eq:U0_V0}
  \mathbb{U}_0( z_0 ) = \mathbb{V}_0( z_0 )
  .
\end{equation}
Induction, the fact that 
\begin{equation}
\label{eq:Vn_in_01}
  \forall \, 
  n \in \{ 0, 1, \dots, N \}, \,
  z_0, z_1, \dots, z_n \in \R^{d+1}
  \colon
  \mathbb{V}_n( z_0, z_1, \dots, z_n ) \in \{ 0, 1 \}
  ,
\end{equation}
and \eqref{eq:defUn}
hence demonstrate that
for all $ n \in \{ 0, 1, \dots, N \} $,
$ z_0, z_1, \dots, z_n \in \R^{ d + 1 } $
it holds that
\begin{equation}
  \left\{ 
    \mathbb{U}_0( z_0 ), \mathbb{U}_1( z_0, z_1 ), \dots, \mathbb{U}_n( z_0, z_1, \dots, z_n ) ,
    \sum_{ k = 0 }^n
    \mathbb{U}_k( z_0, z_1, \dots, z_k )
  \right\} 
  \subseteq 
  \{ 0, 1 \}
  .
\end{equation}
Moreover, note that
\eqref{eq:defUn},
induction,
and the fact that the functions 
$
  \mathbb{V}_n \colon ( \R^d )^{ n + 1 } \to \{ 0, 1 \}
$,
$ n \in \{ 0, 1, \dots, N \} $,
are Borel measurable 
ensure that 
for every $ n \in \{ 0, 1, \dots, N \} $
it holds that
the function 
\begin{equation} 
\label{eq:Un_measurable}
  ( \R^d )^{ n + 1 } \ni (z_0, z_1, \dots, z_n) \mapsto 
  \mathbb{U}_n( z_0, z_1, \dots, z_n ) \in \{ 0, 1 \}
\end{equation}
is also Borel measurable. 
In the next step, we observe that 
\eqref{eq:U0_V0}, \eqref{eq:defUn}, \eqref{eq:Vn_in_01}, 
and 
induction
assure that
for all 
$ 
  n \in \{ 0, 1, \dots, N \} 
$,
$ 
 z_0, z_1, \dots, z_n \in \R^{d+1}
$
with
$
  n + 1 - N
  \leq
  \sum_{ k = 0 }^n
    \mathbb{V}_k( z_0, z_1, \dots, z_k )
    \leq 1
$
it holds that
\begin{equation}
\label{eq:Un=Vn}
  \forall \, k \in \{ 0, 1, \dots, n \} 
  \colon
  \mathbb{U}_k( z_0, z_1, \dots, z_k )
  =
  \mathbb{V}_k( z_0, z_1, \dots, z_k )
  .
\end{equation}
In addition,
note that
\eqref{eq:defVn}
shows that
for all
$ \omega \in \Omega $
it holds that
\begin{equation}
\sum_{ k = 0 }^N
    \mathbb{V}_k\big( \mathcal{Z}_0( \omega ), \mathcal{Z}_1( \omega ), \dots, \mathcal{Z}_k( \omega ) \big)
=
\sum_{ k = 0 }^N
    \mathbbm{1}_{
        \{ \tau = k \}
    }( \omega )
= 1.
\end{equation}
This, \eqref{eq:Un=Vn}, and again \eqref{eq:defVn}
imply that
for all
$ 
  k \in \{ 0, 1, \dots, N \} 
$,
$ \omega \in \Omega $
it holds that
\begin{equation}
\mathbb{U}_k\big( \mathcal{Z}_0( \omega ), \mathcal{Z}_1( \omega ), \dots, \mathcal{Z}_k( \omega ) \big)
=
\mathbb{V}_k\big( \mathcal{Z}_0( \omega ), \mathcal{Z}_1( \omega ), \dots, \mathcal{Z}_k( \omega ) \big)
=
\mathbbm{1}_{
    \{ \tau = k \}
}( \omega )
.
\end{equation}
Equation~\eqref{eq:rho_rep}
hence proves that
for all $ \omega \in \Omega $
it holds that
\begin{equation}
\tau( \omega )
=
\sum_{ n = 0 }^N
    n 
    \,
    \mathbb{U}_n\big( \mathcal{Z}_0( \omega ), \mathcal{Z}_1( \omega ), \dots, \mathcal{Z}_n( \omega ) \big)
.
\end{equation}
Combining this with 
\eqref{eq:sumUn}
and 
\eqref{eq:Un_measurable}
establishes
\eqref{item:yg_squared_i}. 
The proof of Lemma~\ref{lem:yg_squared} is thus complete.
\end{proof}

\subsection{Neural network architectures for stopping times}
\label{sec:NN}

In the next step, we employ
multilayer neural network approximations
of the functions 
$ \mathbb{U}_n \colon ( \R^{d+1} )^{ n + 1 } \to \{ 0, 1 \} $,
$ n \in \{ 0, 1, \dots, N \} $,
in the factorisation lemma, Lemma~\ref{lem:yg_squared} above.
In the following,
we refer to these functions as `stopping time factors'.
Consider again the setting in Subsections~\ref{ssec:problem}--\ref{sec:temporal_discretization},
for every $ \mathfrak{F} $-stopping time $ \tau \colon \Omega \to \{ 0, 1, \dots, N \} $
let
$
  \mathbb{U}_{n, \tau} \colon ( \R^{d+1} )^{ n + 1 } \to \{ 0, 1 \}
$,
$ n \in \{ 0, 1, \dots, N \} $,
be Borel measurable functions
which satisfy
$
  \forall \, z_0, z_1, \dots, z_N \in \R^{d+1} \colon
  \sum_{ n = 0 }^N
  \mathbb{U}_{n, \tau}( z_0, z_1, \dots, z_n )
  = 1
$
and
\begin{equation}
\label{eq:defUn_rho}
\tau 
=
\sum_{ n = 0 }^N
n \, \mathbb{U}_{n, \tau}( \mathcal{Z}_0, \mathcal{Z}_1, \dots, \mathcal{Z}_n )
\end{equation}
(cf.~\eqref{item:yg_squared_i} of Lemma~\ref{lem:yg_squared}),
let $ \nu \in \N $
be a sufficiently large natural number,
and for every 
$ n \in \{ 0, 1, \dots, N \} $, 
$ \theta \in \R^{ \nu } $
let 
$ 
  u_{ n, \theta } \colon \R^{d+1} \to (0,1)
$
and
$ 
  U_{ n, \theta } \colon ( \R^{d+1} )^{n+1} \to (0,1)
$
be Borel measurable functions which satisfy for all 
$ z_0, z_1, \dots, z_n \in \R^{d+1} $ that
\begin{equation}
\label{eq:defUntheta}
  U_{ n, \theta }( z_0, z_1, \dots, z_n )
  = 
  \max\!\left\{ 
    u_{ n, \theta }( z_n ) 
    ,
    n + 1 - N
  \right\}
  \left[ 
    1 
    - 
    \sum_{ k = 0 }^{ n - 1 }
    U_{ k, \theta }( z_0, z_1, \dots, z_k )
  \right]
\end{equation}
(cf.\ \eqref{eq:defUn} above).
We think of $ \nu \in \N $
as the number of parameters in the
employed artificial neural networks
(for instance, in the case of \eqref{eq:activation}--\eqref{eq:A_matrix_0} below
we have $ \nu = N (l^2 + l(d+4) + 1 )$, where $l$ is the number of neurons in each of the two hidden layers),
for every $ n \in \{ 0, 1, \dots, N \} $,
$ \theta \in \R^{\nu} $ we think of $ u_{n,\theta} \colon \R^{d+1} \rightarrow (0,1) $
as an appropriate artificial neural network
(cf.\ the next paragraph for further details),
and for every appropriate $ \mathfrak{F} $-stopping time
$ \tau \colon \Omega \to \{ 0, 1, \dots, N \} $
and every $ n \in \{ 0, 1, \dots, N \} $
we think of the function
$ U_{ n, \theta } \colon ( \R^{d+1} )^{n+1} \to (0,1) $
for suitable $ \theta \in \R^{ \nu } $
as an appropriate approximation of the stopping time factor
$
  \mathbb{U}_{n, \tau} \colon ( \R^{d+1} )^{ n + 1 } \to \{ 0, 1 \}
$.
In addition,
it shall be noted
that
for all $ n \in \{ 0, 1, \dots, N \} $,
in comparison with
the function
$ \mathbb{V}_n \colon ( \R^{d+1} )^{ n + 1 } \to \{ 0, 1 \} $
in~\eqref{eq:defUn} above,
the functions
$ u_{ n, \theta } \colon \R^{d+1} \to (0,1) $,
$ \theta \in \R^{ \nu } $,
in~\eqref{eq:defUntheta}
are defined only on $ \R^{d+1} $ instead of $ ( \R^{d+1} )^{n+1} $
and, therefore, only depend on
$ z_n \in \R^{d+1} $ instead of the whole vector $ ( z_0, z_1, \ldots, z_n ) \in ( \R^{d+1} )^{n+1} $.
Even though this constitutes a significant simplification,
\cite[Theorem~1 and Remark~2 in~Subsection~2.1]{BeckerCheriditoJentzen2019}
suggest that, due to the fact that $ \mathcal{Z} \colon \{ 0, 1, \dots, N \} \times \Omega \to \R^{d+1} $
is a Markov process, the approximate stopping time factors
$ U_{ n, \theta } \colon ( \R^{d+1} )^{n+1} \to (0,1) $,
$ \theta \in \R^{ \nu } $,
$ n \in \{ 0, 1, \dots, N \} $,
still possess enough flexibility to represent an optimal stopping time for the discrete stopping problem corresponding to the left side of~\eqref{eq:option_price_DISCRETE}.
Furthermore, observe that
for all $ \theta \in \R^{ \nu } $, $z_0, z_1, \dots, z_N \in \R^{d+1} $ it holds that
\begin{equation}
\label{eq:sum_to_1}
\sum_{ n = 0 }^{ N }
    U_{ n, \theta }( z_0, z_1, \dots, z_n )
= 1.
\end{equation}
Because of this,
for every $ \theta \in \R^{ \nu } $
the stochastic process
\begin{equation}
\label{eq:randomised_stopping_time}
\{ 0, 1, \dots, N \} \times \Omega \ni ( n, \omega )
\mapsto
U_{ n, \theta }\bigl( \mathcal{Z}_0( \omega ), \mathcal{Z}_1( \omega ), \dots, \mathcal{Z}_n( \omega ) \bigr)
\in ( 0, 1 )
\end{equation}
can also be viewed as an
appropriate sense
`randomised stopping time'
(cf., e.g., \cite[Definition~1 in Subsection~3.1]{SolanTsirelsonVieille2012arXiv}
and, e.g.,
\cite[Section~1.1]{FergusonChap1online}). 

We suggest to choose the functions 
$ u_{ n, \theta } \colon \R^{d+1} \to (0,1) $,
$ \theta \in \R^{ \nu } $,
$ n \in \{ 0, 1, \dots, N - 1 \} $,
as multilayer feedforward neural networks
(cf.~\cite[Corollary~5 in~Subsection~2.2]{BeckerCheriditoJentzen2019}
and, e.g.,
\cite{Cybenko1989,HornikStinchcombeWhite1989,Barron1993}).
For example,
for every $ k \in \N $
let $ \mathcal{L}_k \colon \R^k \to \R^k $ be the function 
which satisfies for all $ x = ( x_1, \dots, x_k ) \in \R^k $ that
\begin{equation}
\label{eq:activation}
  \mathcal{L}_k( x ) 
  =
  \left(
    \frac{ \exp( x_1 ) }{
      \exp( x_1 ) + 1
    }
    ,
    \frac{ \exp( x_2 ) }{
      \exp( x_2 ) + 1
    }
    ,
    \dots
    ,
    \frac{ 
      \exp( x_k ) 
    }{
      \exp( x_k ) + 1
    }
  \right)
  ,
\end{equation}
for every 
$ \theta = ( \theta_1, \dots, \theta_{ \nu } ) \in \R^{ \nu } $, 
$ v \in \N_0 = \{ 0, 1, 2, \ldots \} $,
$ k, j \in \N $
with 
$
  v + k (j + 1 ) \leq \nu
$
let $ A^{ \theta, v }_{ k, j } \colon \R^j \to \R^k $ be the affine linear function which 
satisfies for all $ x = ( x_1, \dots, x_j ) \in \R^j $ that
\begin{equation}\label{eq:A_matrix_0}
A^{ \theta, v }_{ k, j }( x )
=
\begin{pmatrix}
  \theta_{ v + 1 }
&
  \theta_{ v + 2 }
&
  \dots
&
  \theta_{ v + j }
\\
  \theta_{ v + j + 1 }
&
  \theta_{ v + j + 2 }
&
  \dots
&
  \theta_{ v + 2 j }
\\
  \theta_{ v + 2 j + 1 }
&
  \theta_{ v + 2 j + 2 }
&
  \dots
&
  \theta_{ v + 3 j }
\\
  \vdots
&
  \vdots
&
  \vdots
&
  \vdots
\\
  \theta_{ v + ( k - 1 ) j + 1 }
&
  \theta_{ v + ( k - 1 ) j + 2 }
&
  \dots
&
  \theta_{ v + k j }
\end{pmatrix}
\begin{pmatrix}
  x_1
\\
  x_2
\\
  x_3
\\
  \vdots 
\\
  x_j
\end{pmatrix}
+
\begin{pmatrix}
  \theta_{ v + k j + 1 }
\\
  \theta_{ v + k j + 2 }
\\
  \theta_{ v + k l + 3 }
\\
  \vdots 
\\
  \theta_{ v + k j + k }
\end{pmatrix}
,
\end{equation}
let $l \in \N$, and assume for all 
$ n \in \{ 0, 1, \dots, N - 1 \} $,
$ \theta \in \R^{ \nu } $
that
$ \nu \geq N (l^2 + l(d+4) + 1 ) $
and
\begin{equation}
\label{eq:neural_network}
u_{ n, \theta }
=
\mathcal{L}_1
\circ
A^{ \theta, n(l^2 + l(d+4) + 1) + l(l+d+3)  }_{ 1, l }
\circ
\mathcal{L}_l
\circ
A^{ \theta, n(l^2 + l(d+4) + 1) + l(d+2) }_{l, l} 
\circ 
\mathcal{L}_l
\circ 
A^{ \theta, n (l^2 + l(d+4) + 1) }_{ l, d+1 } 
.
\end{equation}
The functions in \eqref{eq:neural_network} provide 
artificial neural networks with $ 4 $ layers 
($ 1 $ input layer with $ d+1 $ neurons, $ 2 $ hidden layers with $ l $ neurons each, and $ 1 $ output layer with $ 1 $ neuron)
and the multidimensional version of the standard logistic function
$ \R \ni x \mapsto \nicefrac{\exp(x)}{(\exp(x)+1)} \in ( 0, 1 ) $
(cf.~\eqref{eq:activation} above)
as activation functions.
In our numerical simulations in Section~\ref{sec:examples},
we use this type of activation function
only
just in front of the output layer
and we employ instead
the multidimensional version of the rectifier function
$ \R \ni x \mapsto \max\{ x, 0 \} \in [ 0, \infty ) $
as activation functions just in front of the hidden layers.
But in order to keep the illustration here
as short as possible,
we only employ the multidimensional version of the standard logistic function
as activation functions in \eqref{eq:activation}--\eqref{eq:neural_network} above.
Furthermore, note that
in contrast to the choice of the functions
$ u_{ n, \theta } \colon \R^{d+1} \to (0,1) $,
$ \theta \in \R^{ \nu } $,
$ n \in \{ 0, 1, \dots, N - 1 \} $,
the choice of the functions
$ u_{ N, \theta } \colon \R^{d+1} \to (0,1) $,
$ \theta \in \R^{ \nu } $,
has no influence on the approximate stopping time factors
$ U_{ n, \theta } \colon ( \R^{d+1} )^{n+1} \to (0,1) $,
$ \theta \in \R^{ \nu } $,
$ n \in \{ 0, 1, \dots, N \} $
(cf.~\eqref{eq:defUntheta} above).

\subsection{Formulation of the objective function}
\label{sec:objective_function}

Recall that we intend to compute the real number
\begin{equation}
\label{eq:option_price_discrete}
\sup\Bigl\{ 
\E\bigl[ 
  g( t_{ \tau }, \mathcal{X}_{ \tau } )
\bigr]
\colon
\substack{
  \tau \colon \Omega \to \{ 0, 1, \dots, N \} \text{ is an}
\\
  \mathfrak{F}\text{-stopping time}
}
\Bigr\}
\end{equation}
as an approximation of the American option price \eqref{eq:option_price}
(cf.~\eqref{eq:option_price_DISCRETE} in Subsection~\ref{sec:temporal_discretization}).
By employing neural network architectures for stopping times (cf.\ Subsection~\ref{sec:NN} above),
we next propose to replace the search over all $ \mathfrak{F} $-stopping times
for finding the supremum in~\eqref{eq:option_price_discrete}
by a search over the artificial neural network parameters $ \theta \in \R^\nu $ (cf.~\eqref{eq:objective_motivation} below).
For this,
observe that \eqref{eq:defUn_rho} implies for all
$ \mathfrak{F} $-stopping times
$ \tau \colon \Omega \to \{ 0, 1, \dots, N \} $
and all
$ 
  n \in \{ 0, 1, \dots, N \} 
$
that
\begin{equation}
\mathbbm{1}_{
\{ \tau = n \}
}
=
\mathbb{U}_{ n, \tau }( \mathcal{Z}_0, \mathcal{Z}_1, \dots, \mathcal{Z}_n )
.
\end{equation}
Therefore,
for all
$ \mathfrak{F} $-stopping times
$ \tau \colon \Omega \to \{ 0, 1, \dots, N \} $
it holds that
\begin{equation}
g( t_{ \tau }, \mathcal{X}_{ \tau } )
=
\sum_{ n = 0 }^N
\mathbbm{1}_{
\{ \tau = n \}
}
\,
g( t_n, \mathcal{X}_n )
=
\sum_{ n = 0 }^N
\mathbb{U}_{n, \tau}( \mathcal{Z}_0, \mathcal{Z}_1, \dots, \mathcal{Z}_n )
\,
g( t_n, \mathcal{X}_n )
.
\end{equation}
Combining this with \eqref{item:yg_squared_ii} of Lemma~\ref{lem:yg_squared}
and \eqref{eq:sum_to_1}
inspires the approximation
\begin{align*}
\label{eq:objective_motivation}
& \sup\Bigl\{ 
\E\bigl[ 
  g( t_{ \tau }, \mathcal{X}_{ \tau } )
\bigr]
\colon
\substack{
  \tau \colon \Omega \to \{ 0, 1, \dots, N \} \text{ is an}
\\
  \mathfrak{F}\text{-stopping time}
}
\Bigr\}
\\
& =
\sup\Biggl\{ 
\E\Biggl[ 
  \sum_{ n = 0 }^N
  \mathbb{U}_{n, \tau}( \mathcal{Z}_0, \mathcal{Z}_1, \dots, \mathcal{Z}_n )
  \,
  g( t_n, \mathcal{X}_n )
\Biggr]
\colon
\substack{
  \tau \colon \Omega \to \{ 0, 1, \dots, N \} \text{ is an}
\\
  \mathfrak{F}\text{-stopping time}
}
\Biggr\}
\\
& =
\sup\Biggl\{ 
\E\Biggl[ 
  \sum_{ n = 0 }^N
  \mathbb{V}_n( \mathcal{Z}_0, \mathcal{Z}_1, \dots, \mathcal{Z}_n )
  \,
  g( t_n, \mathcal{X}_n )
\Biggr]
\colon
\substack{
  \mathbb{V}_n \colon ( \R^{d+1} )^{ n + 1 } \to \{ 0, 1 \},\, n \in \{ 0, 1, \dots, N \},
\\
  \text{are Borel measurable functions with}\\
    \forall \, z_0, z_1, \dots, z_N \in \R^{d+1} \colon
    \sum_{ n = 0 }^N
    \mathbb{V}_n( z_0, z_1, \dots, z_n )
    = 1
}
\Biggr\}
\\
& = \yesnumber \allowdisplaybreaks
\sup\Biggl\{ 
\E\Biggl[ 
  \sum_{ n = 0 }^N
  \mathfrak{V}_n( \mathcal{Z}_0, \mathcal{Z}_1, \dots, \mathcal{Z}_n )
  \,
  g( t_n, \mathcal{X}_n )
\Biggr]
\colon
\substack{
  \mathfrak{V}_n \colon ( \R^{d+1} )^{ n + 1 } \to [ 0, 1 ],\, n \in \{ 0, 1, \dots, N \},
\\
  \text{are Borel measurable functions with}\\
    \forall \, z_0, z_1, \dots, z_N \in \R^{d+1} \colon
    \sum_{ n = 0 }^N
    \mathfrak{V}_n( z_0, z_1, \dots, z_n )
    = 1
}
\Biggr\}
\\
& =
\sup\Biggl\{ 
\E\Biggl[ 
  \sum_{ n = 0 }^N
  \mathfrak{U}_n( \mathcal{Z}_0, \mathcal{Z}_1, \dots, \mathcal{Z}_n )
  \,
  g( t_n, \mathcal{X}_n )
\Biggr]
\colon
\substack{
  \mathfrak{U}_n \colon ( \R^{d+1} )^{ n + 1 } \to ( 0, 1 ),\, n \in \{ 0, 1, \dots, N \},
\\
  \text{are Borel measurable functions with}\\
    \forall \, z_0, z_1, \dots, z_N \in \R^{d+1} \colon
    \sum_{ n = 0 }^N
    \mathfrak{U}_n( z_0, z_1, \dots, z_n )
    = 1
}
\Biggr\}
\\
& \qquad\qquad\qquad\qquad\qquad\quad\
\approx
\sup\Biggl\{ 
\E\Biggl[ 
  \sum_{ n = 0 }^N
  U_{n, \theta}( \mathcal{Z}_0, \mathcal{Z}_1, \dots, \mathcal{Z}_n )
  \,
  g( t_n, \mathcal{X}_n )
\Biggr]
\colon
\theta \in \R^\nu
\Biggr\}
.
\end{align*}
In view of this, our numerical solution for approximatively computing~\eqref{eq:option_price_discrete}
consists of trying to find an approximate maximiser of the objective function
\begin{equation}
\label{eq:objective_function}
\R^\nu \ni \theta
\mapsto
\E\Biggl[ 
  \sum_{ n = 0 }^N
  U_{n, \theta}( \mathcal{Z}_0, \mathcal{Z}_1, \dots, \mathcal{Z}_n )
  \,
  g( t_n, \mathcal{X}_n )
\Biggr]
\in \R.
\end{equation}

\subsection{Stochastic gradient ascent optimisation algorithms}
\label{sec:SGD}

Local/global maxima of the objective function~\eqref{eq:objective_function}
can be approximately reached by maximising
the expectation of the random objective function
\begin{equation}
\label{eq:random_objective_function}
\R^\nu \times \Omega \ni ( \theta, \omega )
\mapsto
  \sum_{ n = 0 }^N
  U_{n, \theta}( \mathcal{Z}_0( \omega ), \mathcal{Z}_1( \omega ), \dots, \mathcal{Z}_n( \omega ) )
  \,
  g( t_n, \mathcal{X}_n( \omega ) )
\in \R
\end{equation}
by means of a stochastic gradient ascent-type optimisation algorithm.
This yields a sequence of random parameter vectors along which we expect the objective function~\eqref{eq:objective_function} to increase.
More formally,
applying under suitable hypotheses 
stochastic gradient ascent-type optimisation algorithms
to \eqref{eq:objective_function} 
results 
in random approximations 
\begin{equation}
\Theta_m
=
( \Theta_m^{ (1) }, \dots, \Theta_m^{ (\nu) } ) \colon \Omega \to \R^{ \nu } 
\end{equation}
for $ m \in \{ 0, 1, 2, \dots \} $
of the local/global maximum points of the objective function \eqref{eq:objective_function},
where $ m \in \{ 0, 1, 2, \dots \} $
is the number of steps of the employed
stochastic gradient ascent-type optimisation algorithm.

\subsection{Price and optimal exercise time for American-style options}
\label{sec:optimal_exercise_time}

The approximation algorithm sketched in Subsection~\ref{sec:SGD} above
allows us to approximatively compute both 
the \emph{price}
and an \emph{optimal exercise strategy} for the American option
(cf.\ Subsection~\ref{ssec:problem}).
Let $ M \in \N $ and consider a realisation $ \widehat{\Theta}_M \in \R^{ \nu } $
of the random variable $\Theta_M \colon \Omega \to \R^{ \nu } $.
Then for sufficiently large $ N, \nu, M \in \N $
a candidate for a suitable approximation of the American option price is the real number
\begin{equation}
\label{eq:candidate_price}
\E \Biggl[
  \sum_{ n = 0 }^N
  U_{n, \widehat{\Theta}_M}( \mathcal{Z}_0, \mathcal{Z}_1, \dots, \mathcal{Z}_n )
  \,
  g( t_n, \mathcal{X}_n )
\Biggr]
\end{equation}
and
a candidate for a suitable approximation of an optimal exercise strategy for the American option is the function
\begin{equation}
\label{eq:candidate_strategy}
\Omega \ni \omega
\mapsto
  \sum_{ n = 0 }^N
  n \,
  U_{n, \widehat{\Theta}_M}( \mathcal{Z}_0( \omega ), \mathcal{Z}_1( \omega ), \dots, \mathcal{Z}_n( \omega ) )
\in [ 0, N ].
\end{equation}
Note, however, that in general
the function \eqref{eq:candidate_strategy}
does not take values in $ \{ 0, 1, \ldots, N \} $
and hence is not a proper stopping time.
Similarly, note that in general it is not clear
whether there exists an exercise strategy such that
the number \eqref{eq:candidate_price}
is equal to the expected discounted pay-off under this exercise strategy.
For these reasons, we suggest other candidates for
suitable approximations of the price
and an optimal exercise strategy for the American option.
More specifically,
for every $ \theta \in \R^{ \nu } $
let 
$ \tau_{ \theta } \colon \Omega \to \{ 0, 1, \dots, N \} $
be the 
$ \mathfrak{F} $-stopping time given by
\begin{equation}
\label{eq:stopping_time_construction}
\tau_\theta
= 
\min\Biggl\{ 
n \in \{ 0, 1, \dots, N \}
  \colon
    \sum_{ k = 0 }^n
    U_{ k, \theta }(
      \mathcal{Z}_0 ,
      \dots ,
      \mathcal{Z}_k
    )
    \geq 
    1 
    -
    U_{ n, \theta }(
      \mathcal{Z}_0 ,
      \dots ,
      \mathcal{Z}_n
    )
\Biggr\}
\end{equation}
(cf.~\eqref{eq:sum_to_1} above).
Then
for sufficiently large
$ N, \nu, M \in \N $
we use a suitable Monte Carlo approximation of the real number
\begin{equation}
\label{eq:price_final_approximation}
\E \Bigl[
g\bigl( 
  t_{ \tau_{ \widehat{\Theta}_M } }
  ,
  \mathcal{X}_{ \tau_{ \widehat{\Theta}_M } }
\bigr)
\Bigr]
\end{equation}
as a suitable implementable approximation
of the price of the American option
(cf.\ \eqref{eq:option_price} in Subsection~\ref{ssec:problem} above and \eqref{eq:price_Monte_Carlo_special} in Subsection~\ref{sec:specific} below)
and
we use the random variable
\begin{equation}
\label{eq:approx_optimal_exercise_strategy}
  \tau_{ \widehat{\Theta}_M }
  \colon 
  \Omega \to 
  \{ 0, 1, \dots, N \}
\end{equation}
as a suitable implementable 
approximation of an optimal exercise strategy 
for the American option.
Note that~\eqref{eq:sum_to_1} ensures that
\begin{equation}
\begin{split}
\tau_{ \widehat{\Theta}_M }
& = 
\min\Biggl\{ 
n \in \{ 0, 1, \dots, N \}
  \colon
  U_{ n, \widehat{\Theta}_M }(
        \mathcal{Z}_0 ,
        \dots ,
        \mathcal{Z}_n
      )
    \geq
    1
    -
    \sum_{ k = 0 }^n
    U_{ k, \widehat{\Theta}_M }(
      \mathcal{Z}_0 ,
      \dots ,
      \mathcal{Z}_k
    )
\Biggr\}
\\ & =
\min\Biggl\{ 
n \in \{ 0, 1, \dots, N \}
  \colon
  U_{ n, \widehat{\Theta}_M }(
        \mathcal{Z}_0 ,
        \dots ,
        \mathcal{Z}_n
      )
    \geq 
    \sum_{ k = n+1 }^N
    U_{ k, \widehat{\Theta}_M }(
      \mathcal{Z}_0 ,
      \dots ,
      \mathcal{Z}_k
    )
\Biggr\}
.
\end{split}
\end{equation}
This shows that
the exercise strategy $ \tau_{ \widehat{\Theta}_M } \colon \Omega \to \{ 0, 1, \dots, N \} $
exercises at the first time index $ n \in \{ 0, 1, \dots, N \} $
for which the approximate stopping time factor associated with the mesh point $ t_n $
is at least as large as the combined approximate stopping time factors associated with all later mesh points.
Finally, observe that

\begin{equation}
\label{eq:low_biased}
\E \Bigl[
g\bigl( 
  t_{ \tau_{ \widehat{\Theta}_M } }
  ,
  \mathcal{X}_{ \tau_{ \widehat{\Theta}_M } }
\bigr)
\Bigr]
\leq
\sup\Bigl\{ 
\E\bigl[ 
  g( t_{ \tau }, \mathcal{X}_{ \tau } )
\bigr]
\colon
\substack{
  \tau \colon \Omega \to \{ 0, 1, \dots, N \} \text{ is an}
\\
  \mathfrak{F}\text{-stopping time}
}
\Bigr\}
.
\end{equation}
This implies that Monte Carlo approximations of the number~\eqref{eq:price_final_approximation}
typically are low-biased approximations of the American option price~\eqref{eq:option_price}.

\section{Details of the proposed algorithm}
\label{sec:algorithm_details}

\subsection{Formulation of the proposed algorithm in a special case}
\label{sec:specific}

In this subsection, we describe the proposed algorithm 
in the specific situation 
where the objective is to solve the American option pricing problem 
described in Subsection~\ref{ssec:problem},
where \emph{batch normalisation} (cf.\ Ioffe \& Szegedy~\cite{IoffeSzegedy2015}) is not employed 
in the proposed algorithm,
and where the plain vanilla stochastic gradient ascent approximation method 
with a constant learning rate $ \gamma \in (0,\infty) $
and without mini-batches is the employed stochastic approximation algorithm.
The general framework, which includes the setting in this subsection 
as a special case, can be found in Subsection~\ref{sec:general_case} below.

\begin{algo}[Specific case]
\label{algo:special}
Let $ T, \gamma \in (0,\infty) $, $ d, N, l \in \N $,
$ \nu = N(l^2 + l(d+4) + 1) $,
let 
$ \mu \colon \R^d \to \R^d $,
$ \sigma \colon \R^d \to \R^{ d \times d } $,
and
$ g \colon [0,T] \times \R^d \to \R $
be Borel measurable functions,
let $ ( \Omega, \mathcal{F}, \P ) $ be a probability space,
let
$ \xi^m \colon \Omega \to \R^d $,
$ m \in \N $,
be independent random variables,
let 
$ W^m \colon [0,T] \times \Omega \to \R^d $, 
$ m \in \N $,
be independent $ \P $-standard Brownian motions with continuous sample paths,
assume that
$ ( \xi^m )_{ m \in \N } $ and $ ( W^m )_{ m \in \N } $
are independent,
let $ t_0, t_1, \dots, t_N \in [0,T] $
be real numbers with
$
  0 = t_0 < t_1 < \ldots < t_N = T
$,
let 
$
  \mathcal{X}^m = (\mathcal{X}^{m,(1)}, \ldots, \mathcal{X}^{m,(d)} ) \colon
  \{0, 1, \dots, N \} \times \Omega \to \R^d
$,
$ m \in \N $,
and 
$
  \mathcal{Z}^m \colon
  \{0, 1, \dots, N \} \times \Omega \to \R^{d+1},
$ 
$ m \in \N $,
be the stochastic processes which satisfy for all 
$ m \in \N $, $ n \in \{ 0, 1, \dots, N - 1 \} $,
$ \mathfrak{n} \in \{ 0, 1, \dots, N \} $
that
$ \mathcal{X}^m_0 = \xi^m $,
\begin{equation}
\mathcal{X}^m_{ n + 1 } 
=
\mathcal{X}^m_{n }
+
\mu\big( \mathcal{X}^m_{ n } \big)
(
t_{ n + 1 } 
-
t_n 
)
+
\sigma\big( \mathcal{X}^m_{ n } \big)
\big(
W_{ t_{ n + 1 } }^m
-
W_{ t_n }^m
\big),
\end{equation}
and
\begin{equation}
\mathcal{Z}^m_{\mathfrak{n}} = \big(\mathcal{X}^{m,(1)}_{\mathfrak{n}}, \dots, \mathcal{X}^{m,(d)}_{\mathfrak{n}}, g(t_{\mathfrak{n}}, \mathcal{X}^m_{\mathfrak{n}})\big),
\end{equation}
for every $k \in \N$ let $ \mathcal{L}_k \colon \R^k \to \R^k $ be the function 
which satisfies for all $ x = ( x_1, \dots, x_k ) \in \R^k $ that
\begin{equation}
\label{eq:logistic}
\mathcal{L}_k( x ) 
=
\left(
\frac{ \exp( x_1 ) }{
  \exp( x_1 ) + 1
}
,
\frac{ \exp( x_2 ) }{
  \exp( x_2 ) + 1
}
,
\dots
,
\frac{ 
  \exp( x_k ) 
}{
  \exp( x_k ) + 1
}
\right)
,
\end{equation}
for every $ \theta = ( \theta_1, \dots, \theta_{ \nu } ) \in \R^{ \nu } $, 
$ v \in \N_0 $,
$k,j \in \N $ with 
$ v + k ( j + 1 ) \leq \nu $
let $ A^{ \theta, v }_{ k, j } \colon \R^j \to \R^k $ be the function which 
satisfies for all $ x = ( x_1, \dots, x_j ) \in \R^j$ that
\begin{equation}
A^{ \theta, v }_{ k, j }( x )
=
\biggl(
\theta_{ v + k j + 1 }
+
\biggl[
\smallsum_{ i = 1 }^j 
x_i 
\, 
\theta_{ v + i }
\biggr]
%     ,
%     \theta_{ v + k l + 2 }
%     +
%     \left[
%       \sum\limits_{ i = 1 }^l 
%       x_i 
%       \, 
%       \theta_{ v + l + i }
%     \right]
,
\dots 
,
% \\
\theta_{ v + k j + k }
+
\biggl[
\smallsum_{ i = 1 }^j
x_i 
\, 
\theta_{ v + ( k - 1 ) j + i }
\biggr]
\biggr)
,
\end{equation}
for every 
$ \theta \in \R^{ \nu } $
let 
$ u_{ n, \theta } \colon \R^{d+1} \to (0,1) $, $ n \in \{ 0, 1, \dots, N \} $,
be functions
which satisfy
for all $ n \in \{ 0, 1, \dots, N - 1 \} $
that
\begin{equation}
u_{ n, \theta }
=
\mathcal{L}_1
\circ
A^{ \theta, n(l^2 + l(d+4) + 1) + l(l+d+3) }_{ 1, l }
\circ
\mathcal{L}_l
\circ
A^{ \theta, n(l^2 + l(d+4) + 1) + l(d+2) }_{ l, l } 
\circ 
\mathcal{L}_l
\circ 
A^{ \theta, n(l^2 + l(d+4) + 1) }_{ l, d+1 }
,
\end{equation}
for every
$ n \in \{ 0, 1, \dots, N \} $,
$ \theta \in \R^{ \nu } $
let 
$ 
  U_{ n, \theta } \colon 
  ( \R^{d+1} )^{n+1}
  \to (0,1) 
$
be the function
which satisfies 
for all 
$ z_0, z_1, \dots, z_n \in \R^{ d + 1 } $
that
\begin{equation}
  U_{ n, \theta }( z_0, z_1, \dots, z_n )
  = 
  \max\!\left\{ 
    u_{ n, \theta }( z_n ) 
    ,
    n + 1 - N
  \right\}
  \left[ 
    1 
    - 
    \sum_{ k = 0 }^{ n - 1 }
    U_{ k, \theta }( z_0, z_1, \dots, z_k )
  \right]
  ,
\end{equation}
for every 
$ m \in \N $
let
$
  \phi^m
  \colon
  \R^{ \nu } 
  \times 
  \Omega
  \to 
  \R
$
be the function which satisfies for all 
$ \theta \in \R^{ \nu } $,
$ \omega \in \Omega $
that
\begin{equation}
\begin{split}
  \phi^m( \theta, \omega ) 
  = 
  \sum_{ n = 0 }^N
  \Big[
    U_{ n, \theta }\big( 
      \mathcal{Z}^m_{ 0 }( \omega )
      ,
      \mathcal{Z}^m_{ 1 }( \omega )
      ,
      \dots 
      ,
      \mathcal{Z}^m_{ n }( \omega )
    \big)
    \,
    g\big(
      t_n ,
      \mathcal{Z}_{ n }^m( \omega )
    \big) 
  \Big]
  ,
\end{split}
\end{equation}
for every 
$ m \in \N $
let
$
  \Phi^m
  \colon
  \R^{ \nu } 
  \times 
  \Omega
  \to 
  \R^{ \nu }
$
be the function which satisfies 
for all
$ 
  \theta \in 
    \R^{ \nu } 
$,
$ \omega \in \Omega $
that
\begin{equation}
  \Phi^m( \theta, \omega ) 
  =
  ( \nabla_{ \theta } \phi^m )( \theta, \omega ) 
  ,
\end{equation}
let 
$ 
  \Theta 
  \colon \N_0 \times \Omega \to 
  \R^{ \nu }
$ 
be a stochastic process
which satisfies for all $ m \in \N $ that 
\begin{equation}
\label{eq:plain_vanilla}
  \Theta_m 
  =
  \Theta_{ m - 1 } 
  +
  \gamma \cdot
  \Phi^m(
    \Theta_{ m - 1 } 
  )
  ,
\end{equation}
and for every $ j \in \N $, $ \theta \in \R^{ \nu } $ 
let 
$ \tau_{ j, \theta } \colon \Omega \to \{0, 1, \dots, N \} $
be the random variable given by
\begin{equation}
\begin{split}
\tau_{ j, \theta }
= 
\min\biggl\{ &
n \in \{0,1,\dots, N\} 
\colon
    \smallsum_{ k = 0 }^n
    U_{ k, \theta }\bigl(
      \mathcal{Z}^j_{ 0 } ,
      \dots ,
      \mathcal{Z}^j_{ k }
    \bigr)
    \geq 
    1 
    -
    U_{ n, \theta }\bigl(
      \mathcal{Z}^j_{ 0 } ,
      \dots ,
      \mathcal{Z}^j_{ n }
    \bigr)
\biggr\}
.
\end{split}
\end{equation}
\end{algo}

Consider the setting of Framework~\ref{algo:special},
assume that $ \mu $ and $ \sigma $ are globally Lipschitz continuous,
and assume that $ g $ is continuous and at most polynomially growing.
In the case of sufficiently large
$ N, M, J \in \N $ 
and sufficiently small $ \gamma \in (0,\infty) $,
we then think of the random number
\begin{equation}
\label{eq:price_Monte_Carlo_special}
\frac{ 1 }{ J }
\sum_{ j = 1 }^J
g\bigl( 
  t_{ \tau_{ M + j, \Theta_M } }
  ,
  \mathcal{X}^{ M + j }_{ 
    \tau_{ M + j, \Theta_M }
  }
\bigr)
\end{equation}
as an approximation of the price of the American option 
with the discounted pay-off function $ g $
and for every $ j \in \N $ we think of the random variable
\begin{equation}
\tau_{ M + j, \Theta_M }
\colon \Omega \to \{ 0, 1, \dots, N \}
\end{equation}
as an approximation of an \emph{optimal exercise strategy} 
associated with the underlying time-discrete path 
$ ( \mathcal{X}^{ M + j }_{ n } )_{ n \in \{ 0, 1, \dots, N \} } $
(cf.\ Subsection~\ref{ssec:problem} above and Section~\ref{sec:examples} below).

\subsection{Formulation of the proposed algorithm in the general case}
\label{sec:general_case}

In this subsection, we extend the framework in Subsection~\ref{sec:specific} above
and describe the proposed algorithm in the general case.

\begin{algo}
\label{algo:general}
Let $ T \in (0,\infty) $, $ d, N, M, \nu, \varsigma, \varrho \in \N $,
let
$ g \colon [0,T] \times \R^d \to \R $
be a Borel measurable function,
let $ ( \Omega, \mathcal{F}, \P ) $ be a probability space,
let $ t_0, t_1, \dots, t_N \in [0,T] $
be real numbers with
$ 0 = t_0 < t_1 < \ldots < t_N = T $,
let
$ \mathcal{X}^{ m, j } = ( \mathcal{X}^{ m, j, (1) }, \ldots, \mathcal{X}^{ m, j, (d) } )
\colon \allowbreak \{0, 1, \dots, N \} \times \Omega \to \R^d $, $ m \in \N_0 $,
$ j \in \N $, be i.i.d.\ stochastic processes,
for every $ m \in \N_0 $, $ j \in \N $
let 
$ \mathcal{Z}^{ m, j } \colon \{0, 1, \dots, N \} \times \Omega \to \R^{d+1} $ be the stochastic process
which satisfies for all $n \in \{0,1,\dots, N\}$ that
$\mathcal{Z}^{ m, j }_n = \big(\mathcal{X}^{ m, j, (1) }_n , \dots, \mathcal{X}^{ m, j, (d) }_n , g(t_n, \mathcal{X}^{ m, j }) \big)$, for every
$ n \in \{ 0, 1, \dots, N \} $,
$ \theta \in \R^{ \nu } $,
$ \mathbf{s} \in \R^{ \varsigma } $
let 
$ u_{ n }^{ \theta, \mathbf{s} } \colon \R^{d+1} \to (0,1) $
be a
function, 
for every
$ n \in \{ 0, 1, \dots, N \} $,
$ \theta \in \R^{ \nu } $,
$ \mathbf{s} \in \R^{ \varsigma } $
let 
$ U_{ n }^{ \theta, \mathbf{s} } \colon 
  ( \R^{d+1} )^{ n + 1 }
  \to (0,1) $
be the function
which satisfies 
for all 
$ z_0, z_1, \dots, z_n \in \R^{d+1} $
that
\begin{equation}
\label{eq:defUnthetaS}
  U_{ n }^{ \theta, \mathbf{s} }( z_0, z_1, \dots, z_n )
  = 
  \max\!\left\{ 
    u_{ n }^{ \theta, \mathbf{s} }( z_n ) 
    ,
    n + 1 - N
  \right\}
  \left[ 
    1 
    - 
    \sum_{ k = 0 }^{ n - 1 }
    U_{ k }^{ \theta, \mathbf{s} }( z_0, z_1, \dots, z_k )
  \right]
  ,
\end{equation}
let $ ( J_m )_{ m \in \N_0 } \subseteq \N $ be a sequence,
for every
$ m \in \N $,
$ \mathbf{s} \in \R^{ \varsigma } $
let
$ \phi^{ m, \mathbf{s} }\colon \R^{ \nu } \times \Omega \to \R $
be the function which satisfies for all 
$ \theta \in \R^{ \nu } $,
$ \omega \in \Omega $
that
\begin{equation}
\label{eq:objective_S}
\begin{split}
\phi^{ m, \mathbf{s} }( \theta, \omega ) 
=
\frac{1}{J_m}
\sum_{j=1}^{J_m}
\sum_{ n = 0 }^N
\Big[
U_{ n }^{ \theta, \mathbf{s} }\big(
  \mathcal{Z}^{m,j}_{ 0 } ( \omega )
  ,
  \mathcal{Z}^{m,j}_{ 1 }( \omega )
  ,
  \dots 
  ,
  \mathcal{Z}^{m,j}_{ n } ( \omega )
\big)
\,
g \big(t_n, \mathcal{X}^{m,j}_n( \omega ) \big)
\Big]
,
\end{split}
\end{equation}
for every 
$ m \in \N $,
$ \mathbf{s} \in \R^{ \varsigma } $
let
$ \Phi^{ m, \mathbf{s} } \colon \R^{ \nu } \times \Omega \to \R^{ \nu } $
be a function which satisfies 
for all
$ \omega \in \Omega $,
$ \theta \in
\{ \eta \in \R^{ \nu } \colon
\phi^{ m, \mathbf{s} }( \cdot, \omega )\colon \R^{ \nu } \to \R
\text{ is differentiable at } \eta \} $
that
\begin{equation}
\Phi^{ m, \mathbf{s} }( \theta, \omega ) 
=
( \nabla_{ \theta } \phi^{ m, \mathbf{s} } )( \theta, \omega ) 
,
\end{equation}
let $ \mathcal{S} \colon \R^\varsigma \times \R^\nu \times ( \R^{d+1} )^{ \{ 0, 1, \ldots, N-1 \} \times \N } \to \R^\varsigma $
be a function,
for every $ m \in \N $
let $ \Psi_m \colon \R^\varrho \times \R^\nu \to \R^\varrho $
and
$ \psi_m \colon \R^\varrho \to \R^\nu $
be functions,
let
$ \mathbb{S} \colon \N_0 \times \Omega \to \R^\varsigma $,
$ \Xi \colon \N_0 \times \Omega \to \R^\varrho $,
and
$ \Theta \colon \N_0 \times \Omega \to \R^{ \nu } $
be stochastic processes
which satisfy for all $ m \in \N $ that
\begin{equation}
\label{eq:batch_normalization}
\mathbb{S}_m
=
\mathcal{S}\bigl(
\mathbb{S}_{ m - 1 },
\Theta_{ m - 1 },
( \mathcal{Z}^{ m, j }_{ n } )_{ ( n, j ) \in \{ 0, 1, \dots, N - 1 \} \times \N }
\bigr),
\end{equation}
\begin{equation}
\label{eq:gradient_ascent}
\Xi_m
= \Psi_m( \Xi_{ m - 1 }, \Phi^{ m, \mathbb{S}_m }( \Theta_{ m - 1 } ) ),
\qquad
\text{and}
\qquad
\Theta_m 
=
\Theta_{ m - 1 }
+
\psi_m( \Xi_m ),
\end{equation}
for every $ j \in \N $,
$ \theta \in \R^{ \nu } $,
$ \mathbf{s} \in \R^{ \varsigma } $
let 
$ \tau^{ j, \theta, \mathbf{s} } \colon \Omega \to \{ 0, 1, \dots, N \} $
be the random variable given by
\begin{align*}
\yesnumber
\tau^{ j, \theta, \mathbf{s} }
=
\smash{\min\biggl\{ }&
n \in \{0,1, \dots, N\}
\colon
    \smallsum_{ k = 0 }^n
    U_{ k }^{ \theta, \mathbf{s} }\bigl(
      \mathcal{Z}^{0,j}_{ 0 } ,
      \dots ,
      \mathcal{Z}^{0,j}_{ k }
    \bigr)
    \geq 
    1 
    -
    U_{ n }^{ \theta, \mathbf{s} } \bigl(
      \mathcal{Z}^{0,j}_{ 0 } ,
      \dots ,
      \mathcal{Z}^{0,j}_{ n }
    \bigr)
\biggr\}
,
\end{align*}
and let $ \mathcal{P} \colon \Omega \to \R $
be the random variable which satisfies for all $ \omega \in \Omega $ that
\begin{equation}
\label{apprprice}
\mathcal{P}( \omega )
=
\frac{ 1 }{ J_0 }
\sum_{ j = 1 }^{ J_0 }
g \big(t_{ \tau^{ j, \Theta_M( \omega ), \mathbb{S}_M( \omega )}}, 
   \mathcal{X}^{ 0, j }_{ \tau^{ j, \Theta_M( \omega ), \mathbb{S}_M( \omega ) }( \omega ) }( \omega ) \big)
.
\end{equation}
\end{algo}

Consider the setting of Framework~\ref{algo:general}. Under suitable further assumptions, in the case of sufficiently large
$ N, M, \nu, J_0 \in \N $,
we think of the random number
\begin{equation}
\mathcal{P}
=
\frac{ 1 }{ J_0 }
\sum_{ j = 1 }^{J_0}
g\bigl( t_{
  \tau^{ j, \Theta_M, \mathbb{S}_M }}
  ,
  \mathcal{X}^{ 0, j }_{ 
    \tau^{ j, \Theta_M, \mathbb{S}_M }
  }
\bigr)
\end{equation}
as an approximation of the price of the American option 
with the discounted pay-off function $ g $
and for every $ j \in \N $ we think of the random variable
\begin{equation}
\tau^{ j, \Theta_M, \mathbb{S}_M }
\colon \Omega \to \{ 0, 1, \dots, N \}
\end{equation}
as an approximation of an \emph{optimal exercise strategy} 
associated with the underlying time-discrete path 
$ ( \mathcal{X}^{ 0, j }_{ n } )_{ n \in \{ 0, 1, \dots, N \} } $
(cf.\ Subsection~\ref{ssec:problem} above and Section~\ref{sec:examples} below).

\subsection{Comments on the proposed algorithm}
\label{sec:comments}

Note that the lack in Framework \ref{algo:general} of any assumptions on the dynamics of the stochastic process
$ ( \mathcal{X}_n^{ 0, 1 } )_{ n \in \{ 0, 1, \dots, N \} } $ allows us to approximatively compute
the optimal pay-off as well as an optimal exercise strategy
for very general optimal stopping problems
where, in particular, the stochastic process under consideration
is not necessarily related to the solution of a stochastic differential equation.
We only require that $ ( \mathcal{X}_n^{ 0, 1 } )_{ n \in \{ 0, 1, \dots, N \} } $
can be simulated efficiently
and formally we still rely on the Markov assumption (cf.~Subsection~\ref{sec:NN} above).
In addition, observe that the choice of the functions
$ u_{ N }^{ \theta, \mathbf{s} } \colon \R^{d+1} \to (0,1) $,
$ \mathbf{s} \in \R^{ \varsigma } $,
$ \theta \in \R^{ \nu } $,
has no influence on the proposed algorithm
(cf.~\eqref{eq:defUnthetaS}).
Furthermore,
the dynamics in~\eqref{eq:gradient_ascent}
associated with the stochastic processes
$ ( \Xi_m )_{ m \in \N_0 } $
and
$ ( \Theta_m )_{ m \in \N_0 } $
allow us to incorporate different stochastic approximation algorithms such as
\begin{itemize}
\item
plain vanilla stochastic gradient ascent with or without mini-batches
(cf.\ \eqref{eq:plain_vanilla} above)
as well as
\item
adaptive moment estimation (Adam) with mini-batches
(cf.\ Kingma \& Ba~\cite{KingmaBa2015} and
\eqref{eq:adam_1}--\eqref{eq:adam_2} in Subsection~\ref{sec:setting} below)
\end{itemize}
into the algorithm in Subsection~\ref{sec:general_case}
(cf.\ E, Han, \& Jentzen~\cite[Subsection~3.3]{EHanJentzen2017}).
The dynamics in~\eqref{eq:batch_normalization}
associated with the stochastic process
$ ( \mathbb{S}_m )_{ m \in \N_0 }$,
in turn, allow us to incorporate
batch normalisation (cf.\ Ioffe \& Szegedy~\cite{IoffeSzegedy2015} and the beginning of Section~\ref{sec:examples} below)
into the algorithm in Subsection~\ref{sec:general_case}.
In that case,
we think of
$ ( \mathbb{S}_m )_{ m \in \N_0 } $
as a bookkeeping process
keeping track of
approximatively calculated means and standard deviations
as well as
of the number of steps $ m \in \N_0 $
of the employed stochastic approximation algorithm.

\section{Numerical examples of pricing American-style de\-riv\-a\-tives}
\label{sec:examples}

In this section, we test the algorithm of Framework~\ref{algo:general}
on several examples of pricing American-style financial derivatives.
In each of the examples below, we employ the general approximation algorithm of Framework~\ref{algo:general} above
in conjunction with the Adam optimiser (cf.\ Kingma \& Ba~\cite{KingmaBa2015})
with varying learning rates and with mini-batches
(cf.\ Subsection~\ref{sec:setting} below for a precise description).

In the example in~Subsection~\ref{sec:example_TsitsiklisVanRoy1999} below,
the initial value $ \mathcal{X}_{0}^{0,1} $ is random. Therefore, we use 
$ N  $ fully connected feedforward neural networks to model the functions 
$ u_{ 0 }^{ \theta, \mathbf{s} }, \ldots, u_{ N - 1 }^{ \theta, \mathbf{s} } \colon \allowbreak \R^{d+1} \to (0,1) $,
$ \mathbf{s} \in \R^{ \varsigma } $, $ \theta \in \R^{ \nu } $. However, in all the other examples,
$\mathcal{X}_{0}^{0,1} $ is deterministic. So it is enough to learn $N-1$ networks describing the functions
$ u_{ 1 }^{ \theta, \mathbf{s} }, \ldots, u_{ N - 1 }^{ \theta, \mathbf{s} } \colon \R^{d+1} \to (0,1) $,
$ \mathbf{s} \in \R^{ \varsigma } $, $ \theta \in \R^{ \nu } $. Then it can be decided 
whether it is better to stop at time $0$ or not by comparing the deterministic pay-off 
$g(0, \mathcal{X}_{0}^{0,1} )$ to a standard Monte Carlo estimate of the expected pay-off generated by the stopping 
strategy given by $u_{ 0 }^{ \theta, \mathbf{s} } = 0$ and the functions 
$ u_{ 1 }^{ \theta, \mathbf{s} }, \ldots, u_{ N - 1 }^{ \theta, \mathbf{s} } \colon \R^{d+1} \to (0,1) $;
cf.\ \cite[Remark~6 in~Subsection~2.3]{BeckerCheriditoJentzen2019}.

The standard network architecture we use in this paper consists of a 
$(d+1)$-dimensional input layer, two $(d+50)$-dimensional hidden layers, and a one-dimensional output layer. 
As non-linear activation functions just in front of 
the hidden layers, we employ the multidimensional version of the rectifier function
$ \R \ni x \mapsto \max\{ x, 0 \} \in [ 0, \infty ) $,
whereas just in front of the output layer we employ
the standard logistic function
$ \R \ni x \mapsto \nicefrac{\exp(x)}{(\exp(x)+1)} \in ( 0, 1 ) $.
In addition, batch normalisation (cf.~Ioffe \& Szegedy~\cite{IoffeSzegedy2015})
is applied just before the first linear transformation,
just before each of the non-linear activation functions in front of the hidden layers
as well as just before the non-linear activation function in front of the output layer.
We use Xavier initialisation (cf.~Glorot \& Bengio~\cite{GlorotBengio2010})
to initialise all weights in the neural networks.

Two hidden layers work well in all our examples. However, the examples in Subsection~\ref{sec:low_dimensional} 
have an underlying one-dimensional structure, and as a consequence, fewer hidden layers yield 
equally good results; see Tables~\ref{tab:ex_BM_Bermudan2}--\ref{tab:ex_BM_Bermudan3} below. 
On the other hand, the examples in Subsection~\ref{sec:high_dimensional}
are more complex. In particular, it can be seen from Table~\ref{tab:ex_max_layers} that
for the max-call option in Subsection~\ref{sec:max_std}, two hidden layers 
give better results than zero or one hidden layer, but more than two hidden layers do
not improve the results.

All examples presented below were implemented in {\sc Python}.
The corresponding\linebreak
{\sc Python} source codes (cf.\ Section~\ref{sec:source_codes}) were run,
unless stated otherwise
(cf.\ Subsection \ref{sec:max_big} as well as the last sentence in Subsection~\ref{sec:max-call_equity} below),
in single precision (float32)
on a NVIDIA GeForce RTX~2080 Ti GPU.
The underlying system consisted of an
AMD Ryzen 9 3950X CPU with 64 GB DDR4 memory running Tensorflow~2.1 on Ubuntu~19.10.
We would like to point out that no special emphasis was put on optimising computation speed.
In many cases, some of the algorithm parameters could be adjusted in order to obtain similarly accurate results in shorter runtime.

\subsection{Theoretical considerations}
\label{sec:theory}

Before we present the optimal stopping problem examples
on which we tested the algorithm of Framework~\ref{algo:general}
(cf.~Subsections~\ref{sec:low_dimensional}--\ref{sec:high_dimensional} below),
we recall a few theoretical results, which are used to design some of these examples,
determine reference values, and provide further insights.  

\subsubsection{Option prices in the Black--Scholes model}

The elementary and well-known result in Lemma~\ref{lem:id} below specifies the distributions of
linear combinations of independent and identically distributed centred Gaussian random variables
which take values in a separable normed $ \R $-vector space.

\begin{lemma}
\label{lem:id}
Let $ n \in \N $, $ \gamma = ( \gamma_1, \ldots, \gamma_n ) \in \R^n $,
let $ ( V, \left\| \cdot \right\|_V ) $
be a separable normed $ \R $-vector space, 
let $ ( \Omega, \mathcal{F}, \P ) $ be a probability space, 
and let $ X_i \colon \Omega \to V $, $ i \in \{ 1, \ldots, n \} $,
be i.i.d.\ centred Gaussian random variables.
Then
it holds that
\begin{equation}
\biggl( \smallsum_{i = 1}^{n} \gamma_i \, X_i \biggr) ( \P )_{ \mathcal{B}(V) }
= ( \| \gamma \|_{ \R^n } \, X_1 ) (\P)_{\mathcal{B}(V)}
.
\end{equation}
\end{lemma}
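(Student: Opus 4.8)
The plan is to prove this via characteristic functions (Fourier transforms of measures), exploiting that a Borel probability measure on a separable normed space is uniquely determined by the values it assigns to the characteristic functionals $v \mapsto e^{i \langle \ell, v \rangle}$ for $\ell$ ranging over the topological dual. First I would fix an arbitrary continuous linear functional $\ell \colon V \to \R$ and consider the real-valued random variable $\ell(X_1)$. Since $X_1$ is a centred Gaussian random variable in $V$, the pushforward $\ell(X_1)$ is a centred Gaussian random variable in $\R$; denote its variance by $s^2 = \E[(\ell(X_1))^2] \in [0,\infty)$.

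The key computation is then to evaluate the characteristic functional of the left-hand measure at $\ell$. By linearity of $\ell$ we have $\ell\bigl( \sum_{i=1}^{n} \gamma_i X_i \bigr) = \sum_{i=1}^{n} \gamma_i \, \ell(X_i)$. Because the $X_i$ are i.i.d., the real random variables $\ell(X_1), \ldots, \ell(X_n)$ are i.i.d.\ centred Gaussians each with variance $s^2$, so that $\sum_{i=1}^{n} \gamma_i \, \ell(X_i)$ is a centred Gaussian with variance $\bigl( \sum_{i=1}^{n} \gamma_i^2 \bigr) s^2 = \lVert \gamma \rVert_{\R^n}^2 \, s^2$. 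On the other hand, $\ell\bigl( \lVert \gamma \rVert_{\R^n} X_1 \bigr) = \lVert \gamma \rVert_{\R^n} \, \ell(X_1)$ is also a centred Gaussian with the very same variance $\lVert \gamma \rVert_{\R^n}^2 \, s^2$. Therefore the one-dimensional characteristic functions agree, i.e.\ for every $t \in \R$,
\begin{equation}
\E\Bigl[ \exp\Bigl( i t \, \ell\bigl( \textstyle\sum_{i=1}^{n} \gamma_i X_i \bigr) \Bigr) \Bigr]
= \exp\Bigl( - \tfrac{1}{2} t^2 \lVert \gamma \rVert_{\R^n}^2 s^2 \Bigr)
= \E\Bigl[ \exp\Bigl( i t \, \ell\bigl( \lVert \gamma \rVert_{\R^n} X_1 \bigr) \Bigr) \Bigr].
\end{equation}
Setting $t = 1$ shows that the two pushforward measures on $V$ have identical characteristic functionals at the arbitrary functional $\ell$.

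I would finish by invoking uniqueness: two Borel probability measures on a separable normed $\R$-vector space that share the same characteristic functional on all of the dual space coincide. Since $\ell$ was arbitrary, the characteristic functionals of $\bigl( \sum_{i=1}^{n} \gamma_i X_i \bigr)(\P)_{\mathcal{B}(V)}$ and $\bigl( \lVert \gamma \rVert_{\R^n} X_1 \bigr)(\P)_{\mathcal{B}(V)}$ agree everywhere, whence the two measures are equal, which is the claim. The main obstacle, and the one point requiring care, is the separability hypothesis: it is exactly what guarantees that the Borel $\sigma$-algebra is generated by the cylinder sets coming from the dual space and hence that a measure is determined by its characteristic functional (this is where a Gaussian random variable in $V$ must be interpreted so that $\ell(X_1)$ is genuinely a real Gaussian for every continuous $\ell$). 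I would cite a standard reference for this uniqueness theorem for measures on separable normed spaces rather than reprove it, and otherwise the argument reduces to the elementary fact that an $\R^n$-linear combination of i.i.d.\ centred real Gaussians is again centred Gaussian with the stated variance.
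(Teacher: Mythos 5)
Your proposal is correct and follows essentially the same route as the paper's proof: fix an arbitrary continuous linear functional, use that its compositions with the $X_i$ are i.i.d.\ centred real Gaussians to show the characteristic functionals of the two pushforward measures agree, and then invoke a uniqueness theorem for Borel probability measures on separable normed spaces determined by their characteristic functionals (the paper cites Jentzen, Salimova, \& Welti for this step). The only cosmetic difference is that you identify the common variance $\lVert \gamma \rVert_{\R^n}^2 s^2$ explicitly, whereas the paper manipulates the product of exponentials directly.
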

\begin{proof}[Proof of Lemma~\ref{lem:id}]
Throughout this proof
let $ Y_1, Y_2 \colon \Omega \to V $
be the random variables
given by
$ Y_1 = \sum_{i = 1}^{n} \gamma_i \, X_i $
and
$ Y_2 = \| \gamma \|_{ \R^n } \, X_1 $.
Note that for every continuous linear functional $\varphi \colon V \to \R$ it holds that
$ \varphi \circ X_i \colon \Omega \to \R $, $ i \in \{ 1, \ldots, n \} $,
are independent and identically distributed centred Gaussian random variables.
This implies for all
continuous linear functionals $\varphi \colon V \to \R$ that
\begin{equation}
\begin{split}
& \E \bigl[ 
 e^{ \mathbf{i} \, \varphi( Y_1 ) } 
\bigr] 
=
\smallprod_{i=1}^{n}
    \E \bigl[ 
         e^{ \mathbf{i} \, ( \gamma_i \, \varphi ) ( X_i ) }
    \bigr]
=
\smallprod_{i=1}^{n}
    \exp\bigl( - \tfrac{1}{2} \, \E \bigl[ | ( \gamma_i \, \varphi ) ( X_i ) |^2 \bigr] \bigr)
\\ & =
\smallprod_{i=1}^{n}
    \exp\bigl( - \tfrac{1}{2} \, \E \bigl[ | ( \gamma_i \, \varphi ) ( X_1 ) |^2 \bigr] \bigr)
=
\exp\biggl( - \tfrac{1}{2} \, \E \biggl[ \smallsum_{i=1}^{n} | \gamma_i \, \varphi( X_1 ) |^2 \biggr] \biggr)
\\ & =
\exp\bigl( - \tfrac{1}{2} \, \E \bigl[ | ( \| \gamma \|_{ \R^n } \, \varphi ) ( X_1 ) |^2 \bigr] \bigr)
=
\E \bigl[
    e^{ \mathbf{i} \, \| \gamma \|_{ \R^n } \, \varphi ( X_1 ) }
\bigr]
=
\E \bigl[ 
 e^{ \mathbf{i} \, \varphi( Y_2 ) } 
\bigr]
.
\end{split}
\end{equation}
This and, e.g., Jentzen, Salimova, \& Welti~\cite[Lemma~4.10]{JentzenSalimovaWelti2017}
establish that
%\begin{equation} \label{eq:equal_dist_X}
$Y_1( \P )_{ \mathcal{B}( V ) }
=
Y_2( \P )_{ \mathcal{B}( V ) }$.
%\end{equation}
The proof of Lemma~\ref{lem:id} is thus complete.
\end{proof}

The next elementary and well-known corollary follows directly from Lemma~\ref{lem:id}.

\begin{corollary}
\label{cor:id_BM}
Let $ d \in \N $,
$ \gamma = ( \gamma_1, \ldots, \gamma_d ) \in \R^d $,
let $ ( \Omega, \mathcal{F}, \P ) $ be a probability space,
and let $ W = ( W^{(1)}, \ldots, W^{(d)} ) \colon [0,T] \times \Omega \to \R^d $ be a $ \P $-standard Brownian motion
with continuous sample paths.
Then
it holds that
\begin{equation}
\begin{split}
\biggl(
    \smallsum_{i=1}^{d} \gamma_i \, W^{(i)}
\biggr) ( \P )_{ \mathcal{B}( C( [ 0, T ], \R ) ) }
=
\bigl( \| \gamma \|_{ \R^d } \, W^{(1)} \bigr) ( \P )_{ \mathcal{B}( C( [ 0, T ], \R ) ) }.
\end{split}
\end{equation}
\end{corollary}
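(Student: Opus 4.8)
The plan is to apply Lemma~\ref{lem:id} directly, with the separable normed $ \R $-vector space $ ( V, \| \cdot \|_V ) $ taken to be the space $ C( [0,T], \R ) $ of continuous functions on $ [0,T] $ equipped with the supremum norm, with $ n = d $, and with the $ V $-valued random variables $ X_i = W^{(i)} $ for $ i \in \{ 1, \ldots, d \} $. First I would observe that, since $ W $ has continuous sample paths, each component $ W^{(i)} \colon \Omega \to C( [0,T], \R ) $ is a well-defined $ V $-valued random variable, and I would recall the standard fact that $ C( [0,T], \R ) $ is a separable normed $ \R $-vector space, so that the hypotheses on the ambient space in Lemma~\ref{lem:id} are met.

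Next I would verify the three conditions that Lemma~\ref{lem:id} imposes on the random variables $ X_i $: that they are independent, identically distributed, and centred Gaussian. Independence of $ W^{(1)}, \ldots, W^{(d)} $ and the fact that they share the same distribution are immediate from the definition of a standard $ d $-dimensional Brownian motion. The one point meriting care is the centred Gaussian property \emph{as $ C( [0,T], \R ) $-valued random variables} in the sense used within the proof of Lemma~\ref{lem:id}, that is, that for every continuous linear functional $ \varphi \in V' $ the real-valued random variable $ \varphi( W^{(i)} ) $ is centred Gaussian. I would justify this by appealing to the classical description of Brownian motion: the law of $ W^{(i)} $ on $ C( [0,T], \R ) $ is the centred Wiener measure, equivalently, by the Riesz representation of $ \varphi $ together with the jointly Gaussian finite-dimensional distributions of Brownian motion, each $ \varphi( W^{(i)} ) $ is a centred Gaussian real random variable.

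With these hypotheses in place, Lemma~\ref{lem:id} applied to $ X_i = W^{(i)} $ yields at once
\[
\Biggl( \sum_{i=1}^{d} \gamma_i \, W^{(i)} \Biggr) ( \P )_{ \mathcal{B}( C( [0,T], \R ) ) }
=
\bigl( \| \gamma \|_{ \R^d } \, W^{(1)} \bigr) ( \P )_{ \mathcal{B}( C( [0,T], \R ) ) } ,
\]
which is precisely the claimed identity of pushforward measures. I expect no real obstacle beyond the functional-analytic bookkeeping just described: the substance of the corollary is entirely contained in Lemma~\ref{lem:id}, and the remaining work is simply to confirm that a standard $ d $-dimensional Brownian motion furnishes i.i.d.\ centred Gaussian random variables in the path space $ C( [0,T], \R ) $.
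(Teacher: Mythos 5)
Your proposal is correct and coincides with the paper's intended argument: the paper gives no separate proof but notes that the corollary ``follows directly from Lemma~\ref{lem:id}'', precisely by taking $ V = C( [ 0, T ], \R ) $ with the supremum norm and $ X_i = W^{(i)} $ as you do. The verifications you supply (separability of the path space, the i.i.d.\ property of the components, and the centred Gaussian property via continuous linear functionals) are exactly the bookkeeping the paper leaves implicit.
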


The next elementary result, Proposition~\ref{prop:dim_reduction},
states that the distribution of a product of multiple correlated geometric Brownian motions
is equal to the distribution of a single particular geometric Brownian motion.

\begin{proposition}
\label{prop:dim_reduction}
Let
$ T, \epsilon \in ( 0, \infty ) $,
$ d \in \N $,
$ \mathfrak{S} = ( \varsigma_1, \ldots, \varsigma_d ) \in \R^{ d \times d } $,
$ \xi = ( \xi_1, \ldots, \xi_d ),
\alpha = ( \alpha_1, \ldots, \alpha_d ),
\beta = ( \beta_1, \ldots, \beta_d )
\in \R^d $,
let $ ( \Omega, \mathcal{F}, \P ) $ be a probability space,
let $ \mathscr{F}^{(i)} = \linebreak ( \mathscr{F}_t^{(i)} )_{ t \in [0,T] } $, $ i \in \{ 1 ,2 \} $,
be filtrations on $ ( \Omega, \mathcal{F}, \P ) $
that satisfy the usual conditions,
let $ W = ( W^{(1)}, \ldots, W^{(d)} ) \colon [0,T] \times \Omega \to \R^d $
be a standard $ ( \Omega, \mathcal{F}, \P, \mathscr{F}^{(1)} ) $-Brownian motion
with continuous sample paths, 
let $ \textsc{w} \colon [0,T] \times \Omega \to \R $
be a standard $ ( \Omega, \mathcal{F}, \P, \mathscr{F}^{(2)} ) $-Brownian motion
with continuous sample paths, 
let $ \mu \colon \R^d \to \R^d $,
$ \sigma \colon \R^d \to \R^{ d \times d } $,
$ \mathbf{P} \colon C( [ 0, T ], \R^d ) \to C( [ 0, T ], \R ) $,
and $ \mathbf{G} \colon C( [ 0, T ], \R ) \to C( [ 0, T ], \R ) $
be the functions which satisfy
for all $ x = ( x_1, \ldots, x_d ) \in \R^d $,
$ u^{(1)} = ( u_s^{(1)} )_{ s \in [ 0, T ] }, \ldots, u^{(d)} = ( u_s^{(d)} )_{ s \in [ 0, T ] } \in C( [ 0, T ], \R ) $,
$ t \in [ 0, T ] $
that
$ \mu(x) = ( \alpha_1 x_1, \ldots, \alpha_d x_d ) $,
$ \sigma(x) = \operatorname{diag}( \beta_1 x_1, \ldots, \beta_d x_d ) \, \mathfrak{S}^* $,
$ ( \mathbf{G}[ u^{(1)} ] )_t
=
\exp
\bigl( \epsilon \,
    \bigl[
        \sum_{i=1}^{d} \alpha_i - \nicefrac{ \| \beta_i \varsigma_i \|_{ \R^d }^2 }{2}
    \bigr] t
     \mathop{+}
    \epsilon \,
    \| \mathfrak{S} \, \beta \|_{ \R^d } \, u_t^{(1)}
\bigr)
\prod_{i=1}^{d} | \xi_i |^\epsilon
$,
and
$ ( \mathbf{P}[ ( u^{(1)}, \ldots, u^{(d)} ) ] )_t = \prod_{i=1}^{d} \bigl| u_t^{(i)} \bigr|^\epsilon $,
let $ X = \linebreak ( X^{(1)}, \ldots, X^{(d)} ) \colon [0,T] \times \Omega \to \R^d $ 
be an $ \mathscr{F}^{(1)} $-adapted stochastic process with continuous sample paths,
let $ Y \colon [0,T] \times \Omega \to \R $
be an $ \mathscr{F}^{(2)} $-adapted stochastic process with continuous sample paths,
and assume
that for all $ t \in [ 0, T ] $
it holds $ \P $-a.s.\ that
\begin{align}
X_t
& = \xi +
\int_{0}^{t} \mu( X_s ) \, ds +
\int_{0}^{t} \sigma( X_s ) \, dW_s,
\\
Y_t
& =
\smallprod_{i=1}^{d} | \xi_i |^\epsilon
+
\biggl( \epsilon \,
    \biggl[
        \smallsum_{i=1}^{d} \alpha_i - \tfrac{ \| \beta_i \varsigma_i \|_{ \R^d }^2 }{2}
    \biggr]
    +
    \tfrac{ \| \epsilon \, \mathfrak{S} \, \beta \|_{ \R^d }^2 }{2}
\biggr)
\int_{0}^{t} Y_s \, ds
+
\epsilon \, \| \mathfrak{S} \, \beta \|_{ \R^d } \int_{0}^{t} Y_s \, d\textsc{w}_s
.
\end{align}
Then
\begin{enumerate}[(i)]
\item
\label{item:geometricBM}
for all $ i \in \{ 1, \ldots, d \} $,
$ t \in [ 0, T ] $
it holds $ \P $-a.s.\ that
\begin{equation}
X_t^{(i)}
=
\exp\Bigl(
\Bigl[ \alpha_i - \tfrac{ \| \beta_i \varsigma_i \|_{ \R^d }^2 }{2} \Bigr] t
+
\beta_i \langle \varsigma_i, W_t \rangle_{ \R^d }
\Bigr)
\, \xi_i,
\end{equation}
\item
\label{item:continuous}
it holds that
$ \mathbf{P} $ % \colon C( [ 0, T ], \R^d ) \to C( [ 0, T ], \R ) $
and $ \mathbf{G} $ % \colon C( [ 0, T ], \R ) \to C( [ 0, T ], \R ) $
are continuous functions,
and
\item
\label{item:equal_dist}
it holds that
\begin{equation}
( \mathbf{P} \circ X )( \P )_{ \mathcal{B}( C( [ 0, T ], \R ) ) }
=
( \mathbf{G} \circ \textsc{w} ) ( \P )_{ \mathcal{B}( C( [ 0, T ], \R ) ) }
=
Y ( \P )_{ \mathcal{B}( C( [ 0, T ], \R ) ) }.
\end{equation}
\end{enumerate}
\end{proposition}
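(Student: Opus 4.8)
The plan is to prove the three assertions in order, deriving \eqref{item:continuous} and \eqref{item:equal_dist} from the explicit representation \eqref{item:geometricBM}. For \eqref{item:geometricBM} I would verify directly that the proposed exponential solves the $i$-th component equation. Unwinding $\sigma(x) = \operatorname{diag}(\beta_1 x_1,\ldots,\beta_d x_d)\,\mathfrak{S}^*$ gives $\sigma(x)_{ij} = \beta_i x_i(\varsigma_i)_j$ (the $i$-th column of $\mathfrak{S}$ being $\varsigma_i$), so the $i$-th equation reads $dX_t^{(i)} = \alpha_i X_t^{(i)}\,dt + \beta_i X_t^{(i)}\,\langle\varsigma_i, dW_t\rangle_{\R^d}$; applying It\^o's formula to the logarithm of the candidate process reproduces exactly this linear dynamics with initial value $\xi_i$, and uniqueness for the linear SDE then yields \eqref{item:geometricBM}. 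Assertion \eqref{item:continuous} is routine: $\mathbf{G}$ is a positive prefactor times $\exp$ composed with the continuous affine path map $u^{(1)}\mapsto(\textnormal{const})\,t + \epsilon\|\mathfrak{S}\beta\|_{\R^d}u^{(1)}_t$, while $\mathbf{P}$ is continuous because $(y_1,\ldots,y_d)\mapsto\prod_{i=1}^d|y_i|^\epsilon$ is continuous and uniform convergence of paths confines their values to a common compact set.

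The substance is \eqref{item:equal_dist}, which I would split into two equalities. Substituting \eqref{item:geometricBM} into $\mathbf{P}$ and using positivity of the exponential gives, $\P$-a.s.\ as an equality of paths, the identity
\begin{equation*}
(\mathbf{P}\circ X)_t = \Bigl(\,{\textstyle\prod_{i=1}^d}|\xi_i|^\epsilon\Bigr)\exp\!\Bigl(\epsilon\bigl[{\textstyle\sum_{i=1}^d}\alpha_i - \tfrac{\|\beta_i\varsigma_i\|_{\R^d}^2}{2}\bigr]t + \epsilon\,{\textstyle\sum_{i=1}^d}\beta_i\langle\varsigma_i, W_t\rangle_{\R^d}\Bigr).
\end{equation*}
The key algebraic observation is that $\sum_{i=1}^d\beta_i\langle\varsigma_i, W_t\rangle_{\R^d} = \langle\mathfrak{S}\beta, W_t\rangle_{\R^d} = \sum_{j=1}^d(\mathfrak{S}\beta)_j W_t^{(j)}$ with $\beta=(\beta_1,\ldots,\beta_d)$, so that $\mathbf{P}\circ X$ is a single fixed continuous deterministic functional $\Gamma$ of the scalar path $\sum_{j=1}^d(\mathfrak{S}\beta)_j W^{(j)}$, whereas by construction $\mathbf{G}\circ\textsc{w} = \Gamma(\|\mathfrak{S}\beta\|_{\R^d}\textsc{w})$. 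Corollary~\ref{cor:id_BM} with $\gamma = \mathfrak{S}\beta$ shows that $\sum_{j=1}^d(\mathfrak{S}\beta)_j W^{(j)}$ and $\|\mathfrak{S}\beta\|_{\R^d}W^{(1)}$ share the same law on $C([0,T],\R)$, and since $W^{(1)}$ and $\textsc{w}$ are both standard scalar Brownian motions so do $\|\mathfrak{S}\beta\|_{\R^d}W^{(1)}$ and $\|\mathfrak{S}\beta\|_{\R^d}\textsc{w}$; pushing these equal laws forward under the Borel measurable map $\Gamma$ establishes the first equality in \eqref{item:equal_dist}.

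For the second equality I would solve the linear SDE defining $Y$ explicitly. It is a geometric Brownian motion $dY_t = a\,Y_t\,dt + b\,Y_t\,d\textsc{w}_t$ started at $Y_0 = \prod_{i=1}^d|\xi_i|^\epsilon$ with drift $a = \epsilon[\sum_{i=1}^d\alpha_i - \tfrac{\|\beta_i\varsigma_i\|_{\R^d}^2}{2}] + \tfrac{\|\epsilon\mathfrak{S}\beta\|_{\R^d}^2}{2}$ and volatility $b = \epsilon\|\mathfrak{S}\beta\|_{\R^d}$, so It\^o's formula gives $Y_t = Y_0\exp((a - \tfrac{b^2}{2})t + b\,\textsc{w}_t)$. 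The decisive point is that the It\^o correction $-\tfrac{b^2}{2} = -\tfrac{\epsilon^2\|\mathfrak{S}\beta\|_{\R^d}^2}{2}$ exactly cancels the extra drift contribution $+\tfrac{\|\epsilon\mathfrak{S}\beta\|_{\R^d}^2}{2}$ built into $a$, leaving $a - \tfrac{b^2}{2} = \epsilon[\sum_{i=1}^d\alpha_i - \tfrac{\|\beta_i\varsigma_i\|_{\R^d}^2}{2}]$. Hence $Y_t = (\mathbf{G}\circ\textsc{w})_t$ for all $t$, $\P$-a.s., which gives $Y(\P)_{\mathcal{B}(C([0,T],\R))} = (\mathbf{G}\circ\textsc{w})(\P)_{\mathcal{B}(C([0,T],\R))}$ and completes \eqref{item:equal_dist}.

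I expect the main obstacle to be organisational rather than conceptual: correctly reading off the contraction vector $\mathfrak{S}\beta$ from the convention $\sigma(x) = \operatorname{diag}(\ldots)\mathfrak{S}^*$, and then carrying the various constants (the factor $\epsilon$ and the several $\tfrac{1}{2}\|\cdot\|_{\R^d}^2$ terms) through It\^o's formula so that the cancellation in the previous paragraph is manifestly exact. Once \eqref{item:geometricBM} is available and Corollary~\ref{cor:id_BM} is invoked, both distributional identities reduce to transporting a single scalar Gaussian path through a fixed continuous functional.
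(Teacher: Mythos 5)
Your proposal is correct and follows essentially the same route as the paper: establish the explicit geometric Brownian motion representation of each component via It\^o's formula and uniqueness for the linear SDE, observe that $\sum_{i=1}^d \beta_i \langle \varsigma_i, W_t\rangle_{\R^d} = \langle \mathfrak{S}\beta, W_t\rangle_{\R^d}$ so that $\mathbf{P}\circ X$ is a fixed functional of the scalar path $\sum_j (\mathfrak{S}\beta)_j W^{(j)}$, invoke Corollary~\ref{cor:id_BM} with $\gamma = \mathfrak{S}\beta$, and solve the $Y$-equation explicitly so that the It\^o correction cancels the extra drift term. The only cosmetic difference is that the paper applies It\^o to the exponential candidate rather than to its logarithm.
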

\begin{proof}[Proof of Proposition~\ref{prop:dim_reduction}]
Throughout this proof
let $ \gamma = ( \gamma_1, \ldots, \gamma_d ) \in \R^d $
be the vector given by
$ \gamma = \mathfrak{S} \, \beta $,
let
$ Z^{(i)} \colon [ 0, T ] \times \Omega \to \R $, $ i \in \{ 1, \ldots, d \} $,
be the stochastic processes
which satisfy
for all $ i \in \{ 1, \ldots, d \} $, $ t \in [ 0, T ] $ that
$ Z_t^{(i)}
=
[ \alpha_i - \nicefrac{ \| \beta_i \varsigma_i \|_{ \R^d }^2 }{2} ] t
+
\beta_i \langle \varsigma_i, W_t \rangle_{ \R^d } $,
and let
$ \tilde{\mathbf{G}} \colon C( [ 0, T ], \R ) \to C( [ 0, T ], \R ) $
be the function which satisfies
for all $ u = ( u_s )_{ s \in [ 0, T ] } \in C( [ 0, T ], \R ) $, $ t \in [ 0, T] $ that
\begin{equation}
( \tilde{\mathbf{G}}[ u ] )_t
=
\exp
\biggl( \epsilon \,
    \biggl[
        \smallsum_{i=1}^{d} \alpha_i - \tfrac{ \| \beta_i \varsigma_i \|_{ \R^d }^2 }{2}
    \biggr] t
    +
    \epsilon \, u_t
\biggr)
\smallprod_{i=1}^{d} | \xi_i |^\epsilon
.
\end{equation}
Observe that
for all
$ i \in \{ 1, \ldots, d \} $,
$ t \in [ 0, T ] $
it holds $ \P $-a.s.\ that
\begin{equation}
\label{eq:X^i_evolution}
X_t^{(i)}
= \xi_i +
\alpha_i \int_{0}^{t} X_s^{(i)} \, ds +
\beta_i \int_{0}^{t} X_s^{(i)} \langle \varsigma_i, dW_s \rangle_{ \R^d }.
\end{equation}
In addition, note that for all
$ i \in \{ 1, \ldots, d \} $,
$ t \in [ 0, T ] $
it holds $ \P $-a.s.\ that
\begin{equation}
Z_t^{(i)}
=
\int_{0}^{t}
\Bigl[
\alpha_i - \tfrac{ \| \beta_i \varsigma_i \|_{ \R^d }^2 }{2}
\Bigr] \, ds
+
\int_{0}^{t} \beta_i \langle \varsigma_i, dW_s \rangle_{ \R^d }.
\end{equation}
It\^o's formula hence shows that
for all
$ i \in \{ 1, \ldots, d \} $,
$ t \in [ 0, T ] $
it holds $ \P $-a.s.\ that
\begin{align}
\nonumber
e^{Z_t^{(i)}} \xi_i
& =
\xi_i
+
\Bigl[ \alpha_i - \tfrac{ \| \beta_i \varsigma_i \|_{ \R^d }^2 }{2} \Bigr]
\int_{0}^{t}
    e^{Z_s^{(i)}} \xi_i
\, ds
+
\beta_i
\int_{0}^{t}
     e^{Z_s^{(i)}} \xi_i
    \langle \varsigma_i, dW_s \rangle_{ \R^d }
+
\tfrac{ \| \beta_i \varsigma_i \|_{ \R^d }^2 }{2}
\int_{0}^{t}
    e^{Z_s^{(i)}} \xi_i
\, ds
\\ & =
\xi_i +
\alpha_i \int_{0}^{t} e^{Z_s^{(i)}} \xi_i \, ds +
\beta_i \int_{0}^{t} e^{Z_s^{(i)}} \xi_i \langle \varsigma_i, dW_s \rangle_{ \R^d }.
\end{align}
Combining this and \eqref{eq:X^i_evolution}
with, e.g., Da~Prato \& Zabczyk~\cite[(i) in Theorem~7.4]{DaPratoZabczyk1992}
proves that
for all
$ i \in \{ 1, \ldots, d \} $,
$ t \in [ 0, T ] $
it holds $ \P $-a.s.\ that
\begin{equation}
X_t^{(i)}
=
e^{Z_t^{(i)}} \xi_i
=
\exp\Bigl(
\Bigl[ \alpha_i - \tfrac{ \| \beta_i \varsigma_i \|_{ \R^d }^2 }{2} \Bigr] t
+
\beta_i \langle \varsigma_i, W_t \rangle_{ \R^d }
\Bigr)
\, \xi_i.
\end{equation}
This establishes~\eqref{item:geometricBM}.
In the next step note that~\eqref{item:continuous} is clear.
It thus remains to prove~\eqref{item:equal_dist}.
For this
observe that
\eqref{item:geometricBM}
establishes that
for all
$ t \in [ 0, T ] $
it holds $ \P $-a.s.\ that
\begin{equation}
\begin{split}
( \mathbf{P}[ X ] )_t
& =
\smallprod_{i=1}^{d} \bigl| X_t^{(i)} \bigr|^\epsilon
=
\smallprod_{i=1}^{d}
\Bigl[
\exp\Bigl(
\epsilon \, \Bigl[ \alpha_i - \tfrac{ \| \beta_i \varsigma_i \|_{ \R^d }^2 }{2} \Bigr] t
+
\epsilon \, \beta_i \langle \varsigma_i, W_t \rangle_{ \R^d }
\Bigr)
| \xi_i |^\epsilon
\Bigr]
%\nonumber \\ & =
%\exp\biggl(
%\epsilon \, \biggl[ \sum_{i=1}^{d} \alpha_i - \frac{ \| \beta_i \varsigma_i \|_{ \R^d }^2 }{2} \biggr] t
%+
%\epsilon \, \biggl\langle \sum_{i=1}^{d} \varsigma_i \beta_i, W_t \biggr\rangle_{ \R^d }
%\biggr)
%\smallprod_{i=1}^{d}
%| \xi_i |^\epsilon
\\ & =
\exp\biggl(
\epsilon \, \biggl[ \smallsum_{i=1}^{d} \alpha_i - \tfrac{ \| \beta_i \varsigma_i \|_{ \R^d }^2 }{2} \biggr] t
+
\epsilon \, \langle \gamma, W_t \rangle_{ \R^d }
\biggr)
\smallprod_{i=1}^{d}
| \xi_i |^\epsilon
=
\biggl( \tilde{\mathbf{G}} \biggl[ \smallsum_{i=1}^{d} \gamma_i \, W^{(i)} \biggr] \biggr)_t
.
\end{split}
\end{equation}
Continuity
hence implies that
it holds $ \P $-a.s.\ that
\begin{equation}
\label{eq:law_P(X)}
\mathbf{P}[ X ]
=
\tilde{\mathbf{G}} \biggl[ \smallsum_{i=1}^{d} \gamma_i \, W^{(i)} \biggr]
.
\end{equation}
Moreover, note that
\eqref{item:geometricBM}
shows
that for all $ t \in [ 0, T ] $
it holds $ \P $-a.s.\ that
\begin{equation}
\begin{split}
Y_t
& =
\exp
\biggl( \biggl\{ \epsilon \,
    \biggl[
        \smallsum_{i=1}^{d} \alpha_i - \tfrac{ \| \beta_i \varsigma_i \|_{ \R^d }^2 }{2}
    \biggr]
    + \tfrac{ \| \epsilon \, \mathfrak{S} \, \beta \|_{ \R^d }^2 }{2}
    - \tfrac{ \| \epsilon \, \mathfrak{S} \, \beta \|_{ \R^d }^2 }{2}
    \biggr\} t
    +
    \epsilon \,
    \| \mathfrak{S} \, \beta \|_{ \R^d } \, \textsc{w}_t
\biggr)
\smallprod_{i=1}^{d} | \xi_i |^\epsilon
\\ & =
\exp
\biggl( \epsilon \,
    \biggl[
        \smallsum_{i=1}^{d} \alpha_i - \tfrac{ \| \beta_i \varsigma_i \|_{ \R^d }^2 }{2}
    \biggr] t
    +
    \epsilon \,
    \| \mathfrak{S} \, \beta \|_{ \R^d } \, \textsc{w}_t
\biggr)
\smallprod_{i=1}^{d} | \xi_i |^\epsilon
=
( \mathbf{G} [ \textsc{w} ] )_t
.
\end{split}
\end{equation}
This and continuity
establish that
it holds $ \P $-a.s.\ that
\begin{equation}
\label{eq:law_Y}
Y
=
\mathbf{G} [ \textsc{w} ]
.
\end{equation}
Furthermore,
observe that
Corollary~\ref{cor:id_BM}
ensures that
\begin{equation}
\begin{split}
\biggl(
    \smallsum_{i=1}^{d} \gamma_i \, W^{(i)}
\biggr) ( \P )_{ \mathcal{B}( C( [ 0, T ], \R ) ) }
=
\bigl( \| \gamma \|_{ \R^d } \, W^{(1)} \bigr) ( \P )_{ \mathcal{B}( C( [ 0, T ], \R ) ) }.
\end{split}
\end{equation}
The fact that
$ \tilde{\mathbf{G}} \colon C( [ 0, T ], \R ) \to C( [ 0, T ], \R ) $
is a %$ \mathcal{B}( C( [ 0, T ], \R ) ) $/$\mathcal{B}( C( [ 0, T ], \R ) ) $-measurable function,
Borel measurable function,
\eqref{item:continuous},
the fact that
$ \forall \, u \in C( [ 0, T ], \R ) \colon
\tilde{\mathbf{G}} [ \| \gamma \|_{ \R^d } \, u ] = \mathbf{G}[ u ] $,
\eqref{eq:law_P(X)}, and \eqref{eq:law_Y} hence
demonstrate that
\begin{equation}
\begin{split}
& ( \mathbf{P} \circ X )( \P )_{ \mathcal{B}( C( [ 0, T ], \R ) ) }
=
\biggl(
    \tilde{\mathbf{G}} \circ \biggl( \smallsum_{i=1}^{d} \gamma_i \, W^{(i)} \biggr)
\biggr)( \P )_{ \mathcal{B}( C( [ 0, T ], \R ) ) }
\\ & =
\bigl(
    \tilde{\mathbf{G}} \circ \bigl( \| \gamma \|_{ \R^d } \, W^{(1)} \bigr)
\bigr)( \P )_{ \mathcal{B}( C( [ 0, T ], \R ) ) }
=
\bigl( \mathbf{G} \circ W^{(1)} \bigr) ( \P )_{ \mathcal{B}( C( [ 0, T ], \R ) ) }
\\ & =
( \mathbf{G} \circ \textsc{w} ) ( \P )_{ \mathcal{B}( C( [ 0, T ], \R ) ) }
=
Y ( \P )_{ \mathcal{B}( C( [ 0, T ], \R ) ) }
.
\end{split}
\end{equation}
The proof of Proposition~\ref{prop:dim_reduction} is thus complete.
\end{proof}

In the next result, Lemma~\ref{lem:black_scholes},
we recall the well-known formula for the price of a European call option
on a single stock in the Black--Scholes model
(cf., e.g., {\O}ksendal~\cite[Corollary~12.3.8]{Oksendal2003}).

\begin{lemma} \label{lem:black_scholes}
Let
$ T, \xi, \sigma \in ( 0, \infty ) $,
$ r, c \in \R $,
let $ \Phi \colon \R \to \R $ be the function which 
satisfies for all $ x \in \R $ that
$
\Phi( x ) = \int_{ - \infty }^x \frac{ 1 }{ \sqrt{ 2 \pi } } \, e^{ - \frac{ 1 }{ 2 } y^2 } \, dy
$,
let $ ( \Omega, \mathcal{F}, \P ) $ be a probability space
with a filtration $  \mathscr{F} = ( \mathscr{F}_t )_{ t \in [0,T] } $
that satisfies the usual conditions,
let $ W \colon [0,T] \times \Omega \to \R $
be a standard $ ( \Omega, \mathcal{F}, \P, \mathscr{F} ) $-Brownian motion
with continuous sample paths,
and
let $ X \colon [0,T] \times \Omega \to \R $
be an $ \mathscr{F} $-adapted stochastic process with continuous sample paths
which satisfies that
for all $ t \in [ 0, T ] $
it holds $ \P $-a.s.\ that
\begin{equation}
X_t
= \xi +
( r - c ) \int_{0}^{t} X_s \, ds +
\sigma \int_{0}^{t} X_s \, dW_s
.
\end{equation}
Then it holds
for all $ K \in \R $ that
\begin{equation}
\begin{split}
& \E \bigl[
e^{ -r T }
\max \{ X_T - K, 0 \}
\bigr]
\\ & =
\begin{cases}
e^{ - c \, T} \xi
\,
\Phi \Bigl(
\tfrac{ {\textstyle (} r - c + \frac{ \sigma^2 }{2} {\textstyle )} T + \ln( \nicefrac{\xi}{K} ) }{ \sigma \sqrt{T} }
\Bigr)
-
K e^{ - r \, T}
\,
\Phi \Bigl(
\tfrac{ {\textstyle (} r - c - \frac{ \sigma^2 }{2} {\textstyle )} T + \ln( \nicefrac{\xi}{K} ) }{ \sigma \sqrt{T} }
\Bigr)
&
\colon
K > 0
\\[1ex]
e^{ - c \, T} \xi
- K e^{ - r \, T}
&
\colon
K \leq 0
\end{cases}
.
\end{split}
\end{equation}
\end{lemma}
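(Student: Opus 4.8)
The plan is to solve the linear SDE explicitly, reduce the resulting expectation to a one-dimensional Gaussian integral, and evaluate it by completing the square. First I would apply It\^o's formula (exactly as in the proof of Proposition~\ref{prop:dim_reduction} above) to the process $ t \mapsto \xi \exp( ( r - c - \frac{ \sigma^2 }{ 2 } ) t + \sigma W_t ) $ to verify that it solves the same SDE as $ X $, and hence, by the uniqueness result cited there (Da~Prato \& Zabczyk~\cite[(i) in Theorem~7.4]{dz92}), conclude that it holds $ \P $-a.s.\ that
\begin{equation}
X_T = \xi \exp\!\left( \left( r - c - \tfrac{ \sigma^2 }{ 2 } \right) T + \sigma W_T \right).
\end{equation}
Since $ W_T $ is a centred Gaussian random variable with variance $ T $, I would write $ W_T = \sqrt{T}\, Z $ in distribution with $ Z $ standard normally distributed, so that $ X_T $ becomes an explicit lognormal random variable whose law is known precisely.

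For the case $ K \leq 0 $, I would note that $ \xi \in ( 0, \infty ) $ forces $ X_T > 0 \geq K $ $ \P $-a.s., so that $ \max\{ X_T - K, 0 \} = X_T - K $ holds $ \P $-a.s.\ and the claim reduces to computing $ \E[ X_T ] = \xi\, e^{ ( r - c ) T } $. This follows from the moment generating function of the normal distribution; multiplying by $ e^{ - r T } $ then yields $ \xi e^{ - c T } - K e^{ - r T } $, as asserted.

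For the case $ K > 0 $, which is the heart of the argument, I would split
\begin{equation}
\E\!\left[ e^{ - r T } \max\{ X_T - K, 0 \} \right]
= e^{ - r T }\, \E\!\left[ X_T\, \mathbbm{1}_{ \{ X_T > K \} } \right] - K\, e^{ - r T }\, \P( X_T > K ).
\end{equation}
Using the explicit form of $ X_T $, the event $ \{ X_T > K \} $ translates into $ \{ Z > - d_2 \} $, where $ d_2 = \frac{ ( r - c - \frac{ \sigma^2 }{ 2 } ) T + \ln( \xi / K ) }{ \sigma \sqrt{T} } $, which immediately identifies the second term as $ K e^{ - r T } \Phi( d_2 ) $. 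For the first term I would substitute the explicit lognormal expression and evaluate the Gaussian integral $ \int_{ - d_2 }^{ \infty } e^{ \sigma \sqrt{T}\, z }\, \frac{ 1 }{ \sqrt{ 2 \pi } }\, e^{ - z^2 / 2 }\, dz $ by completing the square $ \sigma \sqrt{T}\, z - \frac{ z^2 }{ 2 } = \frac{ \sigma^2 T }{ 2 } - \frac{ 1 }{ 2 }( z - \sigma \sqrt{T} )^2 $ and shifting the variable of integration, which produces $ e^{ \sigma^2 T / 2 } \Phi( d_2 + \sigma \sqrt{T} ) $.

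The main obstacle is purely a bookkeeping one: I would verify that $ d_2 + \sigma \sqrt{T} = d_1 := \frac{ ( r - c + \frac{ \sigma^2 }{ 2 } ) T + \ln( \xi / K ) }{ \sigma \sqrt{T} } $ and that the accumulated prefactors collapse as $ e^{ - r T } \xi\, e^{ ( r - c - \frac{ \sigma^2 }{ 2 } ) T }\, e^{ \sigma^2 T / 2 } = \xi\, e^{ - c T } $, so that the first term equals $ \xi e^{ - c T } \Phi( d_1 ) $. Combining the two terms then gives exactly the claimed Black--Scholes formula, completing the proof.
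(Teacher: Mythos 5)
Your proposal is a correct and complete derivation of the Black--Scholes call formula: the explicit lognormal representation of $X_T$ via It\^o's formula and pathwise uniqueness, the reduction of the $K \leq 0$ case to $\E[X_T] = \xi e^{(r-c)T}$, the decomposition of the $K>0$ case into the two Gaussian integrals, and the completion-of-the-square bookkeeping ($d_2 + \sigma\sqrt{T} = d_1$ and the collapse of the prefactors to $\xi e^{-cT}$) all check out. There is nothing in the paper to compare against: the authors state Lemma~\ref{lem:black_scholes} explicitly as a recollection of a well-known result and give no proof, so your argument simply supplies the standard derivation they omit.
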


\subsubsection{Approximating American options with Bermudan options}

In our numerical simulations, we approximate Bermudan options 
with a finite number of execution times 
rather than American options, which theoretically can be executed 
at infinitely many time points (any time before maturity). However, the following result 
shows that the prices of American options can be approximated with prices of Bermudan options 
with equidistant execution times if the number of execution times is sufficiently large. 

\begin{lemma}
\label{lem:American_Bermudian}
Let $ T \in (0,\infty) $, $ d, N \in \N $, $ \mathfrak{c}, \mathfrak{C} \in \R $, 
let $ ( \Omega, \mathcal{F}, \P ) $ be a probability space 
with a filtration $ \mathbb{F} = ( \mathbb{F}_t )_{ t \in [ 0, T ] } $, 
let $ X \colon [0,T] \times \Omega \to \R^d $ be a stochastic 
process with continuous sample paths, 
let $ \left\| \cdot \right\| \colon \R^d \to [0,\infty) $ be a norm, 
assume 
$
  \E[ \sup_{ t \in [0,T] } \| X_t \| ] < \infty 
$, 
and let 
$ \lceil \cdot \rceil \colon [0,T] \to [0,T] $ 
and 
$ g \colon [0,T] \times \R^d \to \R $ satisfy for all 
$ t, \mathfrak{t} \in [0,T] $, $ x, \mathfrak{x} \in \R^d $ that 
$ 
  \lceil t \rceil = \min( [t,T] \cap \{ 0, \nicefrac{ T }{ N }, \dots, T \} )
$
and 
$ 
  | g(t,x) - g(\mathfrak{t},\mathfrak{x}) | \leq \mathfrak{c} | t - \mathfrak{t} |^{ \nicefrac{ 1 }{ 2 } } 
  + \mathfrak{C} \| x - \mathfrak{x} \| 
$. Then
\begin{enumerate}[(i)]
\item 
\label{item:i_AmericanBermudan}
it holds that 
$ 
  \E\bigl[ \sup_{ t \in [ 0, T ] } | g(t, X_t) | \bigr] 
  \leq 
  | g(0, 0) |
  +
  \mathfrak{c} T^{ \nicefrac{ 1 }{ 2 } } 
  +
  \mathfrak{C} \,
  \E\bigl[
    \sup\nolimits_{ t \in [0,T] }
    \| X_t \|
  \bigr]
  < \infty  
$
and 
\item 
\label{item:ii_AmericanBermudan}
it holds that
\begin{align}
&
  \left|
  \sup\!\left\{ 
    \E\bigl[ 
      g( \tau, X_{ \tau } )
    \bigr]
    \colon
    \substack{
      \tau \colon \Omega \to [0,T] \text{ is an}
    \\
      \mathbb{F}\text{-stopping time}
    }
  \right\}
  -
  \sup\!\left\{ 
    \E\bigl[ 
      g( \tau, X_{ \tau } )
    \bigr]
    \colon
    \substack{
      \tau \colon \Omega \to \{ 0, \nicefrac{ T }{ N }, \dots, T \}\text{ is an}
    \\
      ( \mathbb{F}_{ \nicefrac{ n T }{ N } } )_{ n \in \{ 0, 1, \dots, N \} } \text{-stopping time}
    }
  \right\}
  \right|
\nonumber
\\ &
\leq 
  \mathfrak{c} \left| \tfrac{ T }{ N } \right|^{ \nicefrac{ 1 }{ 2 } }
  +
  \mathfrak{C}
  \,
  \E\bigl[ 
    \sup\nolimits_{ t \in [0,T] }
    \| X_t - X_{ \lceil t \rceil } \|
  \bigr]
  .
\end{align}
\end{enumerate}
\end{lemma}
\begin{proof}[Proof of 
Lemma~\ref{lem:American_Bermudian}] 
Throughout this proof let 
$ \lfloor \cdot \rfloor \colon [0,T] \to [0,T] $ 
satisfy for all 
$
  t \in [0,T]
$
that 
$ 
  \lfloor t \rfloor = \max( [0,t] \cap \{ 0, \nicefrac{ T }{ N }, \dots, T \} )
$. 
Note that the assumption 
that for all
$ t, \mathfrak{t} \in [0,T] $, $ x, \mathfrak{x} \in \R^d $ 
it holds that 
$ 
  | g(t,x) - g(\mathfrak{t},\mathfrak{x}) | \leq \mathfrak{c} | t - \mathfrak{t} |^{ \nicefrac{ 1 }{ 2 } } 
  + \mathfrak{C} \| x - \mathfrak{x} \| 
$ 
implies that 
\begin{equation}
\begin{split}
  \E\bigl[ 
    \sup\nolimits_{ t \in [0,T] }
    | g( t, X_t ) |
  \bigr]
& 
\leq 
  \E\bigl[ 
    \sup\nolimits_{ t \in [0,T] }
    | g( t, X_t ) - g(0, 0) |
  \bigr]
  +
  | g(0, 0) |
\\ & \leq 
  \E\bigl[
    \sup\nolimits_{ t \in [0,T] }
    \big(
      \mathfrak{c} | t |^{ \nicefrac{ 1 }{ 2 } } 
      +
      \mathfrak{C} \| X_t \|
    \big)
  \bigr]
  +
  | g(0, 0) |
\\ &
\leq
  | g(0, 0) |
  +
  \mathfrak{c} T^{ \nicefrac{ 1 }{ 2 } } 
  +
  \mathfrak{C} \,
  \E\bigl[
    \sup\nolimits_{ t \in [0,T] }
    \| X_t \|
  \bigr]
  < \infty .
\end{split}
\end{equation}
This establishes~\eqref{item:i_AmericanBermudan}.
Next observe that the fact 
$ \forall \, r \in \{ 0, \nicefrac{ T }{ N }, \dots, T \}, \, t \in [r,T] \colon
  r \leq \lfloor t \rfloor 
$
ensures that 
for all $ r, t \in [0,T] $ 
with 
$
  \lceil r \rceil \leq t
$
it holds that 
$
  \lceil r \rceil \leq \lfloor t \rfloor
$. 
This implies that
for every 
$
  \mathbb{F}
$-stopping time 
$
  \rho \colon \Omega \to [0,T]
$
and every $ t \in [0,T] $ 
it holds that
\begin{equation}
  \left\{ \lceil \rho \rceil \leq t \right\}
  =
  \left\{ \lceil \rho \rceil \leq \lfloor t \rfloor \right\}
  =
  \left\{ \rho \leq \lfloor t \rfloor \right\} 
  .
\end{equation}
The fact that 
for every 
$
  \mathbb{F}
$-stopping time 
$
  \rho \colon \Omega \to [0,T]
$
and every $ t \in [0,T] $ 
it holds that
$
  \left\{ \rho \leq \lfloor t \rfloor \right\} 
  \in 
  \mathbb{F}_{ 
    \lfloor t \rfloor  
  }
$
and the fact that for every 
$ t \in [0,T] $
it holds that
$
  \mathbb{F}_{ 
    \lfloor t \rfloor  
  }
  \subseteq 
  \mathbb{F}_t
$
hence show that
for every 
$
  \mathbb{F}
$-stopping time 
$
  \rho \colon \Omega \to [0,T]
$
and every $ t \in [0,T] $ 
it holds that
$
  \left\{ \lceil \rho \rceil \leq t \right\}
  \in 
  \mathbb{F}_t
$. 
This proves that 
for every 
$
  \mathbb{F}
$-stopping time 
$
  \rho \colon \Omega \to [0,T]
$
it holds that
$
  \Omega \ni \omega \mapsto 
  \lceil \rho( \omega ) \rceil \in [0,T]
$
is an 
$
  \mathbb{F}
$-stopping time. 
Hence, we obtain that
for every 
$
  \mathbb{F}
$-stopping time 
$
  \rho \colon \Omega \to [0,T]
$
it holds that
$
  \Omega \ni \omega \mapsto 
  \lceil \rho( \omega ) \rceil \in \{ 0, \nicefrac{ T }{ N }, \dots, T \}
$
is an 
$
  ( \mathbb{F}_{ \nicefrac{ n T }{ N } } )_{ n \in \{ 0, 1, \dots, N \} }
$-stopping time.
This ensures that 
for every 
$
  \mathbb{F}
$-stopping time 
$
  \rho \colon \Omega \to [0,T]
$
it holds that
\begin{equation}
\begin{split}
&
  \E\bigl[ 
    g( \rho, X_{ \rho } )
  \bigr]
  -
%   \left[ 
  \sup\!\left\{ 
    \E\bigl[ 
      g( \tau, X_{ \tau } )
    \bigr]
    \colon
    \substack{
      \tau \colon \Omega \to \{ 0, \nicefrac{ T }{ N }, \dots, T \}\text{ is an}
    \\
      ( \mathbb{F}_{ \nicefrac{ n T }{ N } } )_{ n \in \{ 0, 1, \dots, N \} } \text{-stopping time}
    }
  \right\}
%   \right]
\\ &
\leq 
  \E\bigl[ 
    g( \rho, X_{ \rho } )
  \bigr]
  -
  \E\bigl[ 
    g( \lceil \rho \rceil, X_{ \lceil \rho \rceil } )
  \bigr]
\leq 
  \big|
  \E\bigl[ 
    g( \rho, X_{ \rho } )
  \bigr]
  -
  \E\bigl[ 
    g( \lceil \rho \rceil, X_{ \lceil \rho \rceil } )
  \bigr]
  \big|
\\ &
\leq
  \E\bigl[ 
    |
      g( \rho, X_{ \rho } )
      -
      g( \lceil \rho \rceil, X_{ \lceil \rho \rceil } )
    |
  \bigr]
\leq 
  \E\bigl[
    \mathfrak{c} | \rho - \lceil \rho \rceil |^{ \nicefrac{ 1 }{ 2 } }
    +
    \mathfrak{C} 
    \| 
      X_{ \rho } - X_{ \lceil \rho \rceil }
    \|
  \bigr]
  .
\end{split}
\end{equation}
Therefore, we obtain that 
for every 
$
  \mathbb{F}
$-stopping time 
$
  \rho \colon \Omega \to [0,T]
$
it holds that
\begin{equation}
\begin{split}
&
  \E\bigl[ 
    g( \rho, X_{ \rho } )
  \bigr]
  -
  \sup\!\left\{ 
    \E\bigl[ 
      g( \tau, X_{ \tau } )
    \bigr]
    \colon
    \substack{
      \tau \colon \Omega \to \{ 0, \nicefrac{ T }{ N }, \dots, T \}\text{ is an}
    \\
      ( \mathbb{F}_{ \nicefrac{ n T }{ N } } )_{ n \in \{ 0, 1, \dots, N \} } \text{-stopping time}
    }
  \right\}
\\ & 
\leq 
  \E\biggl[ 
    \sup_{ t \in [0,T] }
    \Bigl( 
      \mathfrak{c} \left| t - \lceil t \rceil \right|^{ \nicefrac{ 1 }{ 2 } }
      +
      \mathfrak{C}
      \| X_t - X_{ \lceil t \rceil } \|
    \Bigr)
  \biggr]
\leq 
  \E\biggl[ 
    \sup_{ t \in [0,T] }
    \Bigl( 
      \mathfrak{c} \left| \tfrac{ T }{ N } \right|^{ \nicefrac{ 1 }{ 2 } }
      +
      \mathfrak{C}
      \| X_t - X_{ \lceil t \rceil } \|
    \Bigr)
  \biggr]
\\ & = 
  \mathfrak{c} \left| \tfrac{ T }{ N } \right|^{ \nicefrac{ 1 }{ 2 } }
  +
  \mathfrak{C}
  \,
  \E\bigl[ 
    \sup\nolimits_{ t \in [0,T] }
    \| X_t - X_{ \lceil t \rceil } \|
  \bigr]
  .
\end{split}
\end{equation}
This implies that
\begin{align*}
&
  \left|
  \sup\!\left\{ 
    \E\bigl[ 
      g( \tau, X_{ \tau } )
    \bigr]
    \colon
    \substack{
      \tau \colon \Omega \to [0,T] \text{ is an}
    \\
      \mathbb{F}\text{-stopping time}
    }
  \right\}
  -
  \sup\!\left\{ 
    \E\bigl[ 
      g( \tau, X_{ \tau } )
    \bigr]
    \colon
    \substack{
      \tau \colon \Omega \to \{ 0, \nicefrac{ T }{ N }, \dots, T \}\text{ is an}
    \\
      ( \mathbb{F}_{ \nicefrac{ n T }{ N } } )_{ n \in \{ 0, 1, \dots, N \} } \text{-stopping time}
    }
  \right\}
  \right|
\\ &
=
  \sup\!\left\{ 
    \E\bigl[ 
      g( \tau, X_{ \tau } )
    \bigr]
    \colon
    \substack{
      \tau \colon \Omega \to [0,T] \text{ is an}
    \\
      \mathbb{F}\text{-stopping time}
    }
  \right\}
  -
  \sup\!\left\{ 
    \E\bigl[ 
      g( \tau, X_{ \tau } )
    \bigr]
    \colon
    \substack{
      \tau \colon \Omega \to \{ 0, \nicefrac{ T }{ N }, \dots, T \}\text{ is an}
    \\
      ( \mathbb{F}_{ \nicefrac{ n T }{ N } } )_{ n \in \{ 0, 1, \dots, N \} } \text{-stopping time}
    }
  \right\}
\\ & \yesnumber
\leq 
  \mathfrak{c} \left| \tfrac{ T }{ N } \right|^{ \nicefrac{ 1 }{ 2 } }
  +
  \mathfrak{C}
  \,
  \E\bigl[ 
    \sup\nolimits_{ t \in [0,T] }
    \| X_t - X_{ \lceil t \rceil } \|
  \bigr]
  .
\end{align*}
This shows~\eqref{item:ii_AmericanBermudan}. 
The proof of Lemma~\ref{lem:American_Bermudian} is thus complete.
\end{proof}

\subsection{Setting}
\label{sec:setting}

\begin{algo}
\label{algo:examples-setting}
Assume Framework~\ref{algo:general}, let $ \zeta_1 = 0.9 $,
$ \zeta_2 = 0.999 $, $ \varepsilon \in ( 0, \infty ) $,
$ ( \gamma_m )_{ m \in \N } \subseteq ( 0, \infty ) $,
$\xi = ( \xi_1, \ldots, \xi_d ) \in \R^d $,
let $ \mathscr{F} = ( \mathscr{F}_t )_{ t \in [ 0, T ] } $ be a filtration on $ ( \Omega, \mathcal{F}, \P ) $
that satisfies the usual conditions, 
let $ W^{m,j} = ( W^{ m,j, (1) }, \ldots, W^{ m,j, (d) } ) \colon [0,T] \times \Omega \to \R^d $, $ m \in \N_0 $, $ j \in \N $,
be independent standard $ ( \Omega, \mathcal{F}, \P, \mathscr{F} ) $-Brownian motions
with continuous sample paths,
let $ \mu \colon \R^d \to \R^d $
and $ \sigma \colon \R^d \to \R^{ d \times d } $
be Lipschitz continuous functions,
%let $\xi = ( \xi_1, \ldots, \xi_d ) \colon \Omega \to \R^d$
%be an $\mathscr{F}_0$/$\mathcal{B}( \R^d )$-measurable function,
let $ X = ( X^{(1)}, \ldots, X^{(d)} ) \colon [0,T] \times \Omega \to \R^d $ 
be an $ \mathscr{F} $-adapted stochastic process with continuous sample paths
which satisfies
that for all $ t \in [ 0, T ] $
it holds $ \P $-a.s.\ that
\begin{equation}
X_t
 = \xi +
\int_{0}^{t} \mu( X_s ) \, ds +
\int_{0}^{t} \sigma( X_s ) \, dW_s^{0,1},
\end{equation}
assume for all
$ n \in \{ 0, 1, \ldots, N \} $
that
$ \varrho = 2 \nu $,
$ \Xi_0 = 0 $,
and $ t_n = \frac{nT}{N} $,
and
assume for all $ m \in \N $,
$ x = ( x_1, \ldots, x_\nu ),\,
y = ( y_1, \ldots, y_\nu ),\,
\eta = ( \eta_1, \ldots, \eta_\nu )
\in \R^\nu $
that
\begin{equation}
\label{eq:adam_1}
\Psi_m( x, y, \eta )
=
\bigl(
    \zeta_1 x + ( 1 - \zeta_1) \eta,
    \zeta_2 y + ( 1 - \zeta_2 ) ( ( \eta_1 )^2, \ldots, ( \eta_\nu )^2 )
\bigr)
\end{equation}
and
\begin{equation}
\label{eq:adam_2}
\psi_m( x, y )
=
\Biggl(
    \biggl[
        \sqrt{
            \tfrac{ \left| y_1 \right| }{ 1 - ( \zeta_2 )^m }
        }
        + \varepsilon
    \biggr]^{ -1 }
    \frac{ \gamma_m x_1 }{ 1 - ( \zeta_1 )^m },
    \ldots,
    \biggl[
        \sqrt{
            \tfrac{ \left| y_\nu \right| }{ 1 - ( \zeta_2 )^m }
        }
        + \varepsilon
    \biggr]^{ -1 }
    \frac{ \gamma_m x_\nu }{ 1 - ( \zeta_1 )^m }
\Biggr)
.
\end{equation}
\end{algo}
Equations \eqref{eq:adam_1}--\eqref{eq:adam_2} in Framework~\ref{algo:examples-setting}
describe the Adam optimiser
with possibly varying learning rates
(cf.\ Kingma \& Ba~\cite{KingmaBa2015},
e.g., E, Han, \& Jentzen~\cite[(4.3)--(4.4) in Subsection~4.1 and (5.4)--(5.5) in Subsection~5.2]{EHanJentzen2017},
and
lines 93--96 in {\sc Python} code~\ref{code:common} in Subsection~\ref{sec:code_common} below).
Furthermore, in the context of pricing American-style financial derivatives, we think
\begin{itemize}
\item
of $ T $ as the maturity,
\item
of $ d $ as the dimension of the associated optimal stopping problem,
\item
of $ N $ as the time discretisation parameter employed,
\item
of $ M $ as the total number of training steps employed in the Adam optimiser,
\item
of $ g $ as the discounted pay-off function,
\item
of $ \{ t_0, t_1, \ldots, t_N \} $ as the discrete time grid employed,
%\item
%of $ ( \mathcal{X}_t^{ m, j } )_{ t \in \{ t_0, t_1, \dots, t_N \} } $,
%$ j \in \N $,
%$ m \in \N_0 $,
%as i.i.d.\ copies of discrete underlying price processes,
\item
of $ J_0 $ as the number of Monte Carlo samples employed in the final integration for the price approximation,
\item
of $ ( J_m )_{ m \in \N } $ as the sequence of batch sizes employed in the Adam optimiser,
\item
of $ \zeta_1 $ as the momentum decay factor,
of $ \zeta_2 $ as the second momentum decay factor,
and of $ \varepsilon $ as the regularising factor employed in the Adam optimiser,
\item
of $ ( \gamma_m )_{ m \in \N } $ as the sequence of learning rates employed in the Adam optimiser,
\item
and, where applicable, of
$ X $ as a continuous-time model for $ d $ underlying stock prices
with initial prices $ \xi $, drift coefficient function $ \mu $, and diffusion coefficient function $ \sigma $.
\end{itemize}
Moreover,
note that for every $ m \in \N_0 $, $ j \in \N $
the stochastic processes
$ W^{ m,j, (1) } = \linebreak ( W^{ m,j, (1) }_t )_{ t \in [ 0, T ] } $,
\ldots,
$ W^{ m,j, (d) } = ( W^{ m,j, (d) }_t )_{ t \in [ 0, T ] } $
are the components of the
$ d $-dimensional standard Brownian motion
$ W^{m,j} = ( W^{m,j}_t )_{ t \in [ 0, T ] } $
and hence constitute each
a one-dimen\-sion\-al standard Brownian motion.

\subsection{Examples with known one-dimensional representation}
\label{sec:low_dimensional}

In this subsection, we test the algorithm of Framework~\ref{algo:general}
on different $d$-dimensional optimal stopping problems that
can be represented as one-dimensional optimal stopping problems.
This representation allows us to employ a numerical method for the
one-dimensional optimal stopping problem to compute reference values
for the original $ d $-dimensional optimal stopping problem.
We refer to Subsection~\ref{sec:high_dimensional} below for more challenging examples
where a one-dimensional representation is not known.

\subsubsection{Optimal stopping of a Brownian motion}
\label{sec:example_BM}

\paragraph[A Bermudan three-exercise put-type example]{A Bermudan put-type example with three exercise opportunities}
\label{sec:example_BM_Bermudan}

In this subsection, we test the algorithm of Framework~\ref{algo:general}
on the example of optimally stopping
a correlated Brownian motion
under a put option inspired pay-off function
with three possible exercise dates.
Among other things, we examine the performance of the algorithm for different numbers of hidden layers of the employed neural networks.

Assume Framework~\ref{algo:examples-setting},
let
$ H \in \N_0 $,
$ r = 0.02 = 2 \% $,
$ \beta = 0.3 = 30 \% $,
$ \chi = 95 $,
$ K = 90 $,
$ Q = ( Q_{ i, j } )_{ ( i, j ) \in \{ 1, \ldots, d \}^2 },
\mathfrak{S} \in \R^{d \times d} $
satisfy
for all $ i \in \{ 1, \ldots, d \} $ that
$ Q_{ i, i } = 1 $,
$ \forall \, j \in \{ 1, \ldots, d \} \setminus \{ i \} \colon Q_{ i, j } = 0.1 $,
and
$ \mathfrak{S} \, \mathfrak{S}^* = Q $,
let $ \mathbb{F} = ( \mathbb{F}_t )_{ t \in [ 0, T ] } $
be the filtration generated by $ W^{ 0, 1 } $,
let $ \mathfrak{F} = ( \mathfrak{F}_t )_{ t \in [0,T] } $ be the
filtration generated by $ W^{ 0, 1, (1) } $,
assume that each of the
employed neural networks
has $ H $ hidden layers,
and
assume
for all
$ m, j \in \N $,
$ n \in \{ 0, 1, 2 \} $,
$ s \in [ 0, T ] $,
$ x = ( x_1, \ldots, x_d ) \in \R^d $
that
$ T = 1 $,
$ N = 2 $,
$ M = 500 $,
%$ t_0 = 0 $,
%$ t_1 = \nicefrac{1}{2} $,
%$ t_2 = T = 1 $,
$ \mathcal{X}_{ n }^{ m-1, j } = \mathfrak{S} \, W_{ t_n }^{ m-1, j } $,
$ J_0 = $ 4,096,000, % {4.096 \cdot 10^6} $,
$ J_m = 8192$,
$ \varepsilon = 10^{-8} $,
$ \gamma_m =
5 \, [ 10^{ -2 } \, \mathbbm{1}_{ [ 1, 100 ] }(m)
+ 10^{ -3 } \, \mathbbm{1}_{ ( 100, 300 ] }(m)
+ 10^{ -4 } \, \mathbbm{1}_{ ( 300, \infty ) }(m) ] $,
and
\begin{equation}
g(s,x)
=
e^{ - r s }
\max\biggl\{ 
K -
\exp\biggl(
  \bigl[ r - \tfrac{1}{2}\beta^2 \bigr] s
  +
  \tfrac{ \beta \sqrt{10} }{ \sqrt{ d ( d + 9 ) } }
  [ x_1 + \ldots + x_d ]
\biggr) 
\chi
, 0
\biggr\}
.
\end{equation}
%Note that the distribution of the random variable
%$ \Omega \ni \omega \mapsto
%( [ 0, T ] \ni t \mapsto g( t, \mathfrak{S} \, W_t^{ 0, 1 }( \omega ) ) \in \R )
%\in C( [ 0, T ], \R ) $
%does not depend on the dimension $ d $
%(cf.\ Corollary~\ref{cor:id_BM}).
The random variable $ \mathcal{P} $ given in \eqref{apprprice} provides approximations of the real number
\begin{equation}
\label{eq:exact_BM_Bermudan}
\sup\!\left\{ 
\E\bigl[
  g( \tau, \mathfrak{S} \, W_{ \tau }^{ 0, 1 } )
\bigr]
\colon
\substack{
  \tau \colon \Omega \to \{ t_0, t_1, t_2 \} \text{ is an}
\\
  ( \mathbb{F}_t )_{ t \in \{ t_0, t_1, t_2 \} }\text{-stopping time}
}
\right\}
.
\end{equation}
The numbers in Table~\ref{tab:ex_BM_Bermudan} were obtained with our standard 
network architecture with two hidden layers. It shows approximations of the mean of $ \mathcal{P} $,
of the standard deviation of $ \mathcal{P} $,
and of the relative $ L^1 $-approximation error associated with $ \mathcal{P} $,
the uncorrected sample standard deviation of the relative approximation error associated with $ \mathcal{P} $,
and the average runtime in seconds needed for calculating one realisation of $ \mathcal{P} $
for $ d \in \{ 1, 5, 10, 50, 100, 500, 1000 \} $. 
For each case, the calculations of the results in Tables~\ref{tab:ex_BM_Bermudan}--\ref{tab:ex_BM_Bermudan3}
are based on $10$ independent realisations of $ \mathcal{P} $,
which were obtained from an implementation in {\sc Python}.
Furthermore, in the approximative calculations of the relative approximation error associated with $ \mathcal{P} $,
the exact number~\eqref{eq:exact_BM_Bermudan}
was replaced, independently of the dimension $ d $, by the real number
\begin{equation}
\label{eq:BM_Bermudan}
\sup\biggl\{ 
\E\Bigl[ 
  e^{ - r \tau }
  \max\Bigl\{ 
  K -
  \exp\Bigl(
    \bigl[ r - \tfrac{1}{2}\beta^2 \bigr] \tau
    +
    \beta \, W_\tau^{0,1,(1)}
  \Bigr) 
  \chi
  , 0
  \Bigr\}
\Bigr]
\colon
\substack{
  \tau \colon \Omega \to \{ t_0, t_1, t_2 \} \text{ is an}
\\
  ( \mathfrak{F}_t )_{ t \in \{ t_0, t_1, t_2 \} }\text{-stopping time}
}
\biggr\}
\end{equation}
(cf.\ Corollary~\ref{cor:id_BM}),
which, in turn, was replaced by the value $ 7.894 $.
The latter was computed in {\sc Matlab} R2017b
using the binomial tree method implemented as {\sc Matlab}'s function \texttt{optstockbycrr}
with 20,000 nodes.
Note that
\eqref{eq:BM_Bermudan}
corresponds to the price of a Bermudan put option
on a single stock in the Black--Scholes model
with initial stock price $ \chi $,
interest rate $ r $,
volatility $ \beta $,
strike price $ K $,
maturity $ T $,
and $ N $ possible exercise dates.

\begin{table}[!b]
\centering
\begin{tabular}{|c|c|c|c|c|c|}
\hline
Dimen-&Mean&Standard&Rel.\ $L^1$-&Standard&Average\\
sion $d$&of $ \mathcal{P} $&deviation&approx.&deviation&runtime \\
&&of $ \mathcal{P} $&error&of the rel.&in sec.\ for \\
&&&&approx. &one realisa-\\
& & & & error &tion of $ \mathcal{P} $\\
\hline
1 & 7.896 & 0.005 & 0.0005 & 0.0003 & 2.9 \\
5 & 7.893 & 0.007 & 0.0007 & 0.0005 & 2.8 \\
10 & 7.895 & 0.004 & 0.0005 & 0.0003 & 2.7 \\
50 & 7.889 & 0.005 & 0.0007 & 0.0005 & 2.8 \\
100 & 7.890 & 0.002 & 0.0005 & 0.0002 & 3.0 \\
500 & 7.894 & 0.005 & 0.0004 & 0.0003 & 7.8 \\
1000 & 7.892 & 0.005 & 0.0006 & 0.0004 & 17.2 \\
\hline
\end{tabular}
\caption{\label{tab:ex_BM_Bermudan}
Numerical simulations of the algorithm of 
Framework~\ref{algo:general} for optimally stopping
a correlated Brownian motion in the case of the Bermudan put option with 
three exercise opportunities of Subsection~\ref{sec:example_BM_Bermudan} with $ H = 2 $.
In the approximative calculations of the relative approximation errors,
the exact number~\eqref{eq:exact_BM_Bermudan}
was replaced by the value $ 7.894 $,
which was approximatively computed in {\sc Matlab}.}
\end{table}

Due to the underlying one-dimensional structure, all examples in Subsection~\ref{sec:low_dimensional}
admit an optimal stopping rule which, at each possible exercise date $t_n$, checks whether the 
current pay-off is above a threshold $c_n \in \R$. Therefore, we also apply our algorithm
to the example in Subsection~\ref{sec:example_BM_Bermudan} using networks with one input neuron and no hidden layers.
This corresponds to learning the thresholds $c_n$ from simulated pay-offs with one-dimensional 
logistic regressions. We used the same number of simulations as in Table~\ref{tab:ex_BM_Bermudan}
and batch normalisation before the first linear transformation but no batch normalisation before the 
logistic function. As can be seen from Table~\ref{tab:ex_BM_Bermudan2}, 
the resulting approximations have the same accuracy as the ones of 
Table~\ref{tab:ex_BM_Bermudan} and, in addition, the computations times are shorter.
However, as we will see in Subsection~\ref{sec:high_dimensional}, it cannot be hoped 
that good results can be obtained with a simplified network architecture
if the stopping problem is more complex.

\begin{table}[!t]
\centering
\begin{tabular}{|c|c|c|c|c|c|}
\hline
Dimen-&Mean&Standard&Rel.\ $L^1$-&Standard&Average\\
sion $d$&of $ \mathcal{P} $&deviation&approx.&deviation&runtime \\
&&of $ \mathcal{P} $&error&of the rel.&in sec.\ for \\
&&&&approx. &one realisa-\\
& & & & error &tion of $ \mathcal{P} $\\
\hline
1 & 7.893 & 0.007 & 0.0006 & 0.0007 & 1.6 \\
5 & 7.896 & 0.005 & 0.0005 & 0.0003 & 1.6 \\ 
10 & 7.895 & 0.005 & 0.0005 & 0.0003 & 1.6 \\ 
50 & 7.892 & 0.004 & 0.0005 & 0.0003 & 1.7 \\ 
100 & 7.890 & 0.004 & 0.0005 & 0.0004 & 1.8 \\ 
500 & 7.893 & 0.005 & 0.0005 & 0.0005 & 3.0 \\ 
1000 & 7.894 & 0.003 & 0.0003 & 0.0002 & 5.7 \\ 
\hline
\end{tabular}
\caption{\label{tab:ex_BM_Bermudan2}
Numerical approximations of the Bermudan put option with three
exercise opportunities of Subsection~\ref{sec:example_BM_Bermudan}
based on logistic regressions using only the pay-off as input ($ H = 0 $).}
\end{table}

Table \ref{tab:ex_BM_Bermudan3} shows approximations of \eqref{eq:exact_BM_Bermudan} 
for $d =10$ obtained with networks with $(d+1)$-dimensional input layers and different numbers of hidden layers.
Again, it can be seen that in this example hidden layers do not improve the accuracy of the results.

\begin{table}[!t]
\centering
\begin{tabular}{|c|c|c|c|c|c|c|}
\hline
Number&Mean&Standard&Rel.\ $L^1$-&Standard&Average\\
of hidden&of $ \mathcal{P} $&deviation&approx.&deviation&runtime \\
layers $ H $&&of $ \mathcal{P} $&error&of the rel.&in sec.\ for\\
&&&&approx.&one realisa-\\
& & & &  error &tion of $ \mathcal{P} $\\
\hline
0 & 7.894 & 0.005 & 0.0005 & 0.0002 & 1.9 \\ 
1 & 7.892 & 0.005 & 0.0006 & 0.0004 & 1.9 \\ 
2 & 7.892 & 0.006 & 0.0006 & 0.0004 & 1.8 \\ 
3 & 7.891 & 0.004 & 0.0005 & 0.0003 & 1.7 \\ 
4 & 7.888 & 0.009 & 0.0007 & 0.0011 & 1.7 \\ 
5 & 7.889 & 0.006 & 0.0008 & 0.0007 & 1.8 \\ 
\hline
\end{tabular}
\caption{\label{tab:ex_BM_Bermudan3}
Approximations of the price of the three-exercise-opportunities put option 
of Subsection~\ref{sec:example_BM_Bermudan} for $d =10$ with networks 
with $(d+1)$-dimensional input layers and $H \in \{0,1,2,3,4,5\}$ hidden layers.}
\end{table}

\paragraph{An American put-type example}
\label{sec:example_BM_American}

In this subsection, we test the algorithm of Framework~\ref{algo:general}
on the example of optimally stopping
a standard Brownian motion
under a put option inspired pay-off function.

Assume Framework~\ref{algo:examples-setting},
let
$ r = 0.06 = 6 \% $,
$ \beta = 0.4 = 40 \% $,
$ \chi = K  = 40 $,
let $ \mathbb{F} = ( \mathbb{F}_t )_{ t \in [0,T] } $ be the
filtration generated by $ W^{ 0, 1 } $,
let $ \mathfrak{F} = ( \mathfrak{F}_t )_{ t \in [0,T] } $ be the
filtration generated by $ W^{ 0, 1, (1) } $,
and
assume
for all
$ m, j \in \N $,
$ n \in \{ 0, 1, \ldots, N \} $,
$ s \in [ 0, T ] $,
$ x = ( x_1, \ldots, x_d ) \in \R^d $
that
$ T = 1 $,
$ N = 50 $,
$ M =
1500 \, \mathbbm{1}_{ [ 1, 50 ] }(d) 
+ 1800 \, \mathbbm{1}_{ ( 50, 100 ] }(d)
+ 3000 \, \mathbbm{1}_{ ( 100, \infty ) }(d)
$,
$ \mathcal{X}_{ n }^{ m-1, j } = W_{ t_n }^{ m-1, j } $,
$ J_0 =$  4,096,000,
$ J_m =
8192 \, \mathbbm{1}_{ [ 1, 50 ] }(d) 
+ 4096 \, \mathbbm{1}_{ ( 50, 100 ] }(d)
+ 2048 \, \mathbbm{1}_{ ( 100, \infty ) }(d)
$,
$ \varepsilon = 0.001 $,
$ \gamma_m =
5 \, [ 10^{ -2 } \, \mathbbm{1}_{ [ 1, \nicefrac{M}{3} ] }(m)
+ 10^{ -3 } \, \mathbbm{1}_{ ( \nicefrac{M}{3}, \nicefrac{2M}{3} ] }(m)
+ 10^{ -4 } \, \mathbbm{1}_{ ( \nicefrac{2M}{3}, \infty ) }(m) ] $,
and
\begin{equation}
g(s,x) = e^{ - r s }
\max\Bigl\{ 
K -
\exp\Bigl(
  \bigl[ r - \tfrac{1}{2}\beta^2 \bigr] s
  +
  \tfrac{ \beta }{ \sqrt{ d } }
  [ x_1 + \ldots + x_d ]
\Bigr) 
\chi
, 0
\Bigr\}.
\end{equation}
The random variable $ \mathcal{P} $ given in \eqref{apprprice}
provides approximations of the real number
\begin{equation}
\label{eq:exact_BM_American}
\sup\!\left\{ 
\E\bigl[ 
  g( \tau, W_{ \tau }^{ 0, 1 } )
\bigr]
\colon
\substack{
  \tau \colon \Omega \to [0,T] \text{ is an}
\\
  \mathbb{F}\text{-stopping time}
}
\right\}
.
\end{equation}
We report approximations of the mean of $ \mathcal{P} $,
of the standard deviation of $ \mathcal{P} $,
and of the relative $ L^1 $-approximation error associated with $ \mathcal{P} $,
the uncorrected sample standard deviation of the relative approximation error associated with $ \mathcal{P} $,
and
the average runtime in seconds needed for calculating one realisation of $ \mathcal{P} $
for $ d \in \{ 1, 5, 10, 50, 100, 500, 1000 \} $
in Table~\ref{tab:ex_BM_American}.
For each case, the calculations of the results in Table~\ref{tab:ex_BM_American}
are based on $10$ independent realisations of $ \mathcal{P} $,
which were obtained from an implementation in {\sc Python}.
Furthermore, in the approximative calculations of the relative approximation error associated with $ \mathcal{P} $,
the exact number~\eqref{eq:exact_BM_American}
was replaced, independently of the dimension $ d $, by the real number
\begin{equation}
\label{eq:American_BM_American}
\sup\biggl\{ 
\E\Bigl[ 
  e^{ - r \tau }
  \max\Bigl\{ 
  K -
  \exp\Bigl(
    \bigl[ r - \tfrac{1}{2}\beta^2 \bigr] \tau
    +
    \beta \, W_\tau^{0,1,(1)}
  \Bigr) 
  \chi
  , 0
  \Bigr\}
\Bigr]
\colon
\substack{
  \tau \colon \Omega \to [0,T] \text{ is an}
\\
  \mathfrak{F}\text{-stopping time}
}
\biggr\}
\end{equation}
(cf.\ Corollary~\ref{cor:id_BM}),
which, in turn, was replaced by the value $ 5.318 $
(cf.\ Longstaff \& Schwartz~\cite[Table~1 in Section~3]{LongstaffSchwartz2001}).
This value was calculated using the binomial tree method on Smirnov's website~\cite{SmirnovWebsite}
with 20,000 nodes.
Note that
\eqref{eq:American_BM_American}
corresponds to the price of an American put option
on a single stock in the Black--Scholes model
with initial stock price $ \chi $,
interest rate $ r $,
volatility $ \beta $,
strike price $ K $,
and maturity $ T $. 
Moreover, note that in this example we have
that for all $ m \in \N $ the batch size is
$ J_m =
8192 \, \mathbbm{1}_{ [ 1, 50 ] }(d) 
+ 4096 \, \mathbbm{1}_{ ( 50, 100 ] }(d)
+ 2048 \, \mathbbm{1}_{ ( 100, \infty ) }(d)
$. In particular, for all $ m \in \N $ the batch size
$ J_m $ is decreasing as $ d $ increases. Such kind of fine-tuning is 
necessary to fit the examples into the memory available on the GPU
here and below.

\begin{table}[!hb]
\centering
\begin{tabular}{|c|c|c|c|c|c|}
\hline
Dimen-&Mean&Standard&Rel.\ $L^1$-&Standard&Average\\
sion $d$&of $ \mathcal{P} $&deviation&approx.&deviation&runtime \\
&&of $ \mathcal{P} $&error&of the rel.&in sec.\ for \\
&&&&approx. &one realisa-\\
& & & & error &tion of $ \mathcal{P} $\\
\hline
1 & 5.311 & 0.004 & 0.0014 & 0.0007 & 37.9 \\
5 & 5.310 & 0.003 & 0.0015 & 0.0006 & 43.2 \\
10 & 5.310 & 0.003 & 0.0016 & 0.0006 & 44.4 \\
50 & 5.307 & 0.003 & 0.0020 & 0.0005 & 75.0 \\
100 & 5.305 & 0.004 & 0.0024 & 0.0007 & 73.8 \\
500 & 5.299 & 0.004 & 0.0035 & 0.0007 & 347.7 \\
1000 & 5.296 & 0.005 & 0.0042 & 0.0009 & 696.7 \\
\hline
\end{tabular}
\caption{\label{tab:ex_BM_American}%
Numerical simulations of the algorithm of 
Framework~\ref{algo:general} for
optimally stopping
a standard Brownian motion
in the case of the American put-type example in Subsection~\ref{sec:example_BM_American}.
In the approximative calculations of the relative approximation errors,
the exact number~\eqref{eq:exact_BM_American}
was replaced by the value $ 5.318 $,
which was obtained using
the binomial tree method on Smirnov's website~\cite{SmirnovWebsite}.}
\end{table}

\subsubsection{Geometric average-type options}
\label{sec:geometric_average}

\paragraph{An American geometric average put-type example}
\label{sec:example_GA_put}

In this subsection, we test the algorithm of Framework~\ref{algo:general}
on the example of pricing
an American geometric average put-type option
on up to 200 distinguishable stocks in the Black--Scholes model.
Moreover,
we compare its performance with that of the algorithm introduced in~\cite{BeckerCheriditoJentzen2019}.

Assume Framework~\ref{algo:examples-setting},
assume that $ d \in \{ 40, 80, 120, \ldots \} $,
let
$ \beta = ( \beta_1, \ldots, \beta_d ) \in \R^d $,
$ \rho, \tilde{\delta}, \tilde{\beta}, \delta_1, \delta_2, \ldots, \delta_d \in \R $,
$ r = 0.6 $,
$ K = 95 $,
$ \tilde{\xi} = 100 $
satisfy
for all $ i \in \{ 1, \ldots, d \} $
that
$ \beta_i = \min \{ 0.04 \, [ ( i - 1 ) \bmod 40 ], 1.6 - 0.04 \, [ ( i - 1 ) \bmod 40 ] \} $,
$ \rho = \frac{1}{d} \| \beta \|_{ \R^d }^2 = \frac{1}{40} \sum_{i=1}^{40} ( \beta_i )^2 = 0.2136 $,
$ \delta_i = r - \frac{\rho}{d}\bigl( i - \frac{1}{2} \bigr) - \frac{1}{5\sqrt{d}}$,
$ \tilde{\delta} = r - \frac{1}{\sqrt{d}} \sum_{i=1}^{d} ( r - \delta_i ) + \frac{\sqrt{d}-1}{2 d} \| \beta \|_{ \R^d }^2
= r - \frac{\rho}{2} - \frac{1}{5} = 0.2932 $,
and
$ \tilde{\beta}
=
\tfrac{1}{\sqrt{d}} \| \beta \|_{ \R^d }
= \sqrt{\rho} $,
let $ Y \colon [0,T] \times \Omega \to \R $
be an $ \mathscr{F} $-adapted stochastic process with continuous sample paths
which satisfies
that for all $ t \in [ 0, T ] $ it holds $ \P $-a.s.\ that
\begin{equation}
Y_t
=
\tilde{\xi}
+
( r - \tilde{\delta} )
\int_{0}^{t} Y_s \, ds
+
\tilde{\beta}
\int_{0}^{t} Y_s \, dW_s^{0,1,(1)}
,
\end{equation}
let $ \mathbb{F} = ( \mathbb{F}_t )_{ t \in [0,T] } $ be the
filtration generated by $ X $,
let $ \mathfrak{F} = ( \mathfrak{F}_t )_{ t \in [0,T] } $ be the
filtration generated by $ Y $,
and assume for all
$ m, j \in \N $,
$ n \in \{ 0, 1, \ldots, N \} $,
$ i \in \{ 1, \ldots, d \} $,
$ s \in [ 0, T ] $,
$ x = ( x_1, \ldots, x_d ) \in \R^d $
that
$ T =  1 $,
$ N = 100 $,
$ M =
1800 \, \mathbbm{1}_{ [ 1, 120 ] }(d)
+ 3000 \, \mathbbm{1}_{ ( 120, \infty ) }(d)
$,
$ J_0 =$ 4,096,000,
$ J_m =
8192 \, \mathbbm{1}_{ [ 1, 120 ] }(d) 
+ 4096 \, \mathbbm{1}_{ ( 120, \infty ) }(d)
$,
$ \varepsilon = 10^{-8} $,
$ \gamma_m =
5 \, [ 10^{ -2 } \, \mathbbm{1}_{ [ 1, \nicefrac{M}{3} ] }(m)
+ 10^{ -3 } \, \mathbbm{1}_{ ( \nicefrac{M}{3}, \nicefrac{2M}{3} ] }(m)
+ 10^{ -4 } \, \mathbbm{1}_{ ( \nicefrac{2M}{3}, \infty ) }(m) ] $,
$ \xi_i = (100)^{ \nicefrac{1}{ \sqrt{d} } } $,
$ \mu(x) = ( ( r - \delta_1 ) \, x_1, \ldots, ( r - \delta_d ) \, x_d ) $,
$ \sigma(x) = \operatorname{diag}( \beta_1 x_1, \ldots, \beta_d x_d ) $,
that
\begin{equation}
\mathcal{X}_{ n }^{ m-1, j, (i) }
=
\exp\bigl(
\bigl[ r - \delta_i - \tfrac{1}{2}( \beta_i )^2 \bigr] t_n
+
\beta_i \, W_{ t_n }^{ m-1, j, (i) }
\bigr)
\, \xi_i,
\end{equation}
and that
\begin{equation}
g(s,x)
=
e^{ - r s }
\max \Biggl\{ K - \Biggl[ \prod_{k=1}^{d} | x_k |^{ \nicefrac{1}{\sqrt{d}} } \Biggr], 0 \Biggr\}
.
\end{equation}
The random variable $ \mathcal{P} $ given in \eqref{apprprice}
provides approximations of the price
\begin{equation}
\label{eq:exact_GA_put}
\sup\!\left\{ 
\E\bigl[ 
  g( \tau, X_{ \tau } )
\bigr]
\colon
\substack{
  \tau \colon \Omega \to [0,T] \text{ is an}
\\
  \mathbb{F}\text{-stopping time}
}
\right\}
.
\end{equation}
In Table~\ref{tab:example_GA_put},
we show
approximations
of the mean of $ \mathcal{P} $,
of the standard deviation of $ \mathcal{P} $,
and of the relative $ L^1 $-approximation error associated with $ \mathcal{P} $,
the uncorrected sample standard deviation of the relative approximation error associated with $ \mathcal{P} $,
and
the average runtime in seconds needed for calculating one realisation of $ \mathcal{P} $
for $ d \in \{ 40, 80, 120, 160, 200 \} $.
For each case, the calculations of the results in Table~\ref{tab:example_GA_put}
are based on $10$ independent realisations of $ \mathcal{P} $,
which were obtained from an implementation in {\sc Python}.
Furthermore, in the approximative calculations of the relative approximation error associated with $ \mathcal{P} $,
the exact value of the price~\eqref{eq:exact_GA_put}
was replaced, independently of the dimension $ d $, by the real number
\begin{equation}
\label{eq:American_GA_put}
\sup\!\left\{ 
\E\bigl[ 
  e^{ - r \tau }
  \max \{ K - Y_\tau, 0 \}
\bigr]
\colon
\substack{
  \tau \colon \Omega \to [0,T] \text{ is an}
\\
  \mathfrak{F}\text{-stopping time}
}
\right\},
\end{equation}
(cf.\ Proposition~\ref{prop:dim_reduction}),
which, in turn, was replaced by the value $ 6.545 $.
The latter was calculated using the binomial tree method on Smirnov's website~\cite{SmirnovWebsite}
with 20,000 nodes. Note that
\eqref{eq:American_GA_put}
corresponds to the price of an American put option
on a single stock in the Black--Scholes model
with initial stock price $ \tilde{\xi} $,
interest rate $ r $,
dividend yield $ \tilde{\delta} $,
volatility $ \tilde{\beta} $,
strike price $ K $,
and maturity $ T $.

\begin{table}[!ht]
\centering
\begin{tabular}{|c|c|c|c|c|c|}
\hline
Dimen-&Mean&Standard&Rel.\ $L^1$-&Standard&Average\\
sion $d$&of $ \mathcal{P} $&deviation&approx.&deviation&runtime \\
&&of $ \mathcal{P} $&error&of the rel.&in sec.\ for \\
&&&&approx. &one realisa-\\
& & & & error &tion of $ \mathcal{P} $\\
\hline
40 & 6.512 & 0.004 & 0.0051 & 0.0006 & 166.4 \\
80 & 6.509 & 0.003 & 0.0056 & 0.0005 & 263.8 \\
120 & 6.507 & 0.003 & 0.0058 & 0.0004 & 350.9 \\
160 & 6.504 & 0.003 & 0.0062 & 0.0005 & 358.5 \\
200 & 6.501 & 0.006 & 0.0067 & 0.0009 & 432.2 \\
\hline
\end{tabular}
\caption{\label{tab:example_GA_put}%
Numerical simulations of the algorithm of Framework~\ref{algo:general}
for pricing the American geometric average put-type option
from the example in Subsection~\ref{sec:example_GA_put}.
In the approximative calculations of the relative approximation errors,
the exact value of the price~\eqref{eq:exact_GA_put}
was replaced by the value $ 6.545 $,
which was obtained using
the binomial tree method on Smirnov's website~\cite{SmirnovWebsite}.}
\end{table}

In addition, we computed approximations of the price~\eqref{eq:exact_GA_put}
using the algorithm introduced in~\cite{BeckerCheriditoJentzen2019},
where the random variable $ \hat{L} $ from \cite[Subsection~3.1]{BeckerCheriditoJentzen2019}
plays the role analogous to $ \mathcal{P} $.
In order to maximise comparability,
the hyperparameters and neural network architectures
employed for the algorithm from~\cite{BeckerCheriditoJentzen2019}
were chosen to be identical to the corresponding ones
used for computing realisations of $ \mathcal{P} $.
Table~\ref{tab:example_GA_put_other_method}
shows
approximations
of the mean of $ \hat{L} $ (cf.~\cite[Subsection~3.1]{BeckerCheriditoJentzen2019}),
of the standard deviation of $ \hat{L} $,
and of the relative $ L^1 $-approximation error associated with $ \hat{L} $,
the uncorrected sample standard deviation of the relative approximation error associated with $ \hat{L} $,
and
the average runtime in seconds needed for calculating one realisation of $ \hat{L} $
for $ d \in \{ 40, 80, 120, 160, 200 \} $.
For each case, the calculations of the results in Table~\ref{tab:example_GA_put_other_method}
are based on $10$ independent realisations of $ \hat{L} $,
which were obtained from an implementation in {\sc Python}.
In the approximative calculations of the relative approximation error associated with $ \hat{L} $,
the exact value of the price~\eqref{eq:exact_GA_put}
again was replaced, independently of the dimension $ d $, by the value $ 6.545 $.
Comparing Table~\ref{tab:example_GA_put} with Table~\ref{tab:example_GA_put_other_method},
we note that in the present cases
the algorithm of Framework~\ref{algo:general}
and
the algorithm from~\cite{BeckerCheriditoJentzen2019}
exhibit very similar performance in terms of both accuracy and speed,
with a slight runtime advantage for the algorithm of Framework~\ref{algo:general}.

\begin{table}[!t]
\centering
\begin{tabular}{|c|c|c|c|c|c|}
\hline
Dimen-&Mean&Standard&Rel.\ $L^1$-&Standard&Average\\
sion $d$&of $ \hat{L} $&deviation&approx.&deviation&runtime \\
&&of $ \hat{L} $&error&of the rel.&in sec.\ for \\
&&&&approx. &one realisa-\\
& & & & error &tion of $ \hat{L} $\\
\hline
40 & 6.505 & 0.003 & 0.0061 & 0.0005 & 167.8 \\
80 & 6.503 & 0.003 & 0.0065 & 0.0005 & 264.9 \\
120 & 6.502 & 0.003 & 0.0066 & 0.0004 & 381.1 \\
160 & 6.505 & 0.006 & 0.0061 & 0.0008 & 359.7 \\
200 & 6.506 & 0.005 & 0.0060 & 0.0008 & 433.2 \\
\hline
\end{tabular}
\caption{\label{tab:example_GA_put_other_method}%
Numerical simulations of the algorithm from~\cite{BeckerCheriditoJentzen2019}
for pricing the American geometric average put-type option
from the example in Subsection~\ref{sec:example_GA_put}.
In the approximative calculations of the relative approximation errors,
the exact value of the price~\eqref{eq:exact_GA_put}
was again replaced by the value $ 6.545 $.}
\end{table}

\paragraph{An American geometric average call-type example}
\label{sec:example_GA_call2}

In this subsection, we test the algorithm of Framework~\ref{algo:general}
on the example of pricing
an American geometric average call-type option
on up to 200 correlated stocks in the Black--Scholes model.
This example is taken from Sirignano \& Spiliopoulos~\cite[Subsection~4.3]{SirignanoSpiliopoulos2018}.
%In addition, we examine the performance of the algorithm
%for different numbers of hidden layers
%of the employed neural networks.

Assume Framework~\ref{algo:examples-setting},
let
$ r = 0 \% $,
$ \delta = 0.02 = 2 \% $,
$ \beta = 0.25 = 25 \% $,
$ K = \tilde{\xi} = 1 $,
$ Q = ( Q_{ i, j } )_{ ( i, j ) \in \{ 1, \ldots, d \}^2 },
\mathfrak{S} = ( \varsigma_1, \ldots, \varsigma_d ) \in \R^{d \times d} $,
$ \tilde{\delta}, \tilde{\beta} \in \R $
satisfy
for all $ i \in \{ 1, \ldots, d \} $ that
$ Q_{ i, i } = 1 $,
$ \forall \, j \in \{ 1, \ldots, d \} \setminus \{ i \} \colon Q_{ i, j } = 0.75 $,
$ \mathfrak{S}^* \, \mathfrak{S} = Q $,
$ \tilde{\delta} = \delta + \frac{1}{2}( \beta^2 - ( \tilde{\beta} )^2 ) $,
and
$ \tilde{\beta}
=
\tfrac{\beta}{2 d} \sqrt{d \, ( 3 d + 1 ) }
$,
let $ Y \colon [0,T] \times \Omega \to \R $
be an $ \mathscr{F} $-adapted stochastic process with continuous sample paths
which satisfies
that for all $ t \in [ 0, T ] $
it holds $ \P $-a.s.\ that
\begin{equation}
Y_t
=
\tilde{\xi}
+
( r - \tilde{\delta} )
\int_{0}^{t} Y_s \, ds
+
\tilde{\beta}
\int_{0}^{t} Y_s \, dW_s^{0,1,(1)}
,
\end{equation}
let $ \mathbb{F} = ( \mathbb{F}_t )_{ t \in [0,T] } $ be the
filtration generated by $ X $,
let $ \mathfrak{F} = ( \mathfrak{F}_t )_{ t \in [0,T] } $ be the
filtration generated by $ Y $,
and assume for all
$ m, j \in \N $,
$ n \in \{ 0, 1, \ldots, N \} $,
$ i \in \{ 1, \ldots, d \} $,
$ s \in [ 0, T ] $,
$ x = ( x_1, \ldots, x_d ) \in \R^d $
that
$ T =  2 $,
$ N = 50 $,
$ M = 1600$,
$ J_0 =$ 4,096,000,
$ J_m = 8192 $,
$ \varepsilon = 10^{-8} $,
$ \gamma_m =
5 \, [ 10^{ -2 } \, \mathbbm{1}_{ [ 1, 400 ] }(m)
+ 10^{ -3 } \, \mathbbm{1}_{ ( 400 , 800 ] }(m)
\allowbreak
+ 10^{ -4 } \, \mathbbm{1}_{ ( 800, \infty ) }(m) ] $,
$ \xi_i = 1 $,
$ \mu( x ) = ( r - \delta ) \, x $,
$ \sigma(x) = \beta \operatorname{diag}( x_1, \dots, x_d ) \, \mathfrak{S}^* $,
that
\begin{equation}
\mathcal{X}_{ n }^{ m-1, j, (i) }
=
\exp\Bigl(
\bigl[ r - \delta - \tfrac{1}{2} \beta^2  \bigr] t_n
+
\beta \, \bigl\langle \varsigma_i, W_{ t_n }^{ m-1, j } \bigr\rangle_{ \R^d }
\Bigr)
\, \xi_i,
\end{equation}
and that
\begin{equation}
g(s,x)
=
e^{ - r s }
\max \Biggl\{ \Biggl[ \prod_{k=1}^{d} | x_k |^{ \nicefrac{1}{d} } \Biggr] - K, 0 \Biggr\}
.
\end{equation}
The random variable $ \mathcal{P} $ given in \eqref{apprprice} provides approximations of the price
\begin{equation}
\label{eq:exact_GA_call2}
\sup\!\left\{ 
\E\bigl[ 
  g( \tau, X_{ \tau } )
\bigr]
\colon
\substack{
  \tau \colon \Omega \to [0,T] \text{ is an}
\\
  \mathbb{F}\text{-stopping time}
}
\right\}
.
\end{equation}
We report
approximations
of the mean of $ \mathcal{P} $,
of the standard deviation of $ \mathcal{P} $,
of the real number
\begin{equation}
\label{eq:american_GA_call2}
\sup\!\left\{ 
\E\bigl[ 
  e^{ - r \tau }
  \max \{ Y_\tau - K, 0 \}
\bigr]
\colon
\substack{
  \tau \colon \Omega \to [0,T] \text{ is an}
\\
  \mathfrak{F}\text{-stopping time}
}
\right\},
\end{equation}
and of the relative $ L^1 $-approximation error associated with $ \mathcal{P} $,
the uncorrected sample standard deviation of the relative approximation error associated with $ \mathcal{P} $,
and the average runtime in seconds needed for calculating one realisation of $ \mathcal{P} $
for $ d \in \{ 3, 20, 100, 200 \} $ in Table~\ref{tab:example_GA_call2}.
The approximative calculations of the mean of $ \mathcal{P} $, of the standard deviation of $ \mathcal{P} $,
and of the relative $ L^1 $-approximation error associated with $ \mathcal{P} $,
the computations of
the uncorrected sample standard deviation of the relative approximation error associated with $ \mathcal{P} $
as well as the computations of
the average runtime for calculating one realisation of $ \mathcal{P} $
in Table~\ref{tab:example_GA_call2}
each are based on $10$ independent realisations of $ \mathcal{P} $,
which were obtained from an implementation in {\sc Python}.
Furthermore, in the approximative calculations of the relative approximation error associated with $ \mathcal{P} $,
the exact value of the price~\eqref{eq:exact_GA_call2}
was replaced by the number~\eqref{eq:american_GA_call2}
(cf.\ Proposition~\ref{prop:dim_reduction}),
which was approximatively calculated using the binomial tree method on Smirnov's website~\cite{SmirnovWebsite}
with 20,000 nodes. Note that~\eqref{eq:american_GA_call2}
corresponds to the price of an American call option
on a single stock in the Black--Scholes model
with initial stock price $ \tilde{\xi} $, %=1
interest rate $ r $, %=0
dividend yield $ \tilde{\delta} $, %=0.025208333333333 or 0.027421875 or 0.027734375 or 0.0277734375
volatility $ \tilde{\beta} $, %=0.228217732293819 or 0.218303114957162 or 0.216866894661218 or 0.216686697791996
strike price $ K $, %=1
and maturity $ T $. %=2

\begin{table}
\centering
\begin{tabular}{|c|c|c|c|c|c|c|}
\hline
Dimen-&Mean&Standard&Price&Rel.\ $L^1$-&Standard&Average\\
sion $d$&of $ \mathcal{P} $&deviation&\eqref{eq:american_GA_call2}&approx.&deviation&runtime \\
&&of $ \mathcal{P} $&&error&of the rel.&in sec.\ for\\
&&&&&approx.&one realisa-\\
& & & & & error &tion of $ \mathcal{P} $\\
\hline
3 & 0.10698 & 0.00006 & 0.10719 & 0.0020 & 0.0005 & 41.7 \\
20 & 0.10006 & 0.00006 & 0.10033 & 0.0027 & 0.0006 & 57.0 \\
100 & 0.09905 & 0.00006 & 0.09935 & 0.0030 & 0.0006 & 137.2 \\
200 & 0.09891 & 0.00009 & 0.09923 & 0.0032 & 0.0009 & 243.0 \\
\hline
\end{tabular}
\caption{\label{tab:example_GA_call2}%
Numerical simulations of the algorithm of Framework~\ref{algo:general}
for pricing the American geometric average call-type option
from the example in Subsection~\ref{sec:example_GA_call2}.
In the approximative calculations of the relative approximation errors,
the exact value of the price~\eqref{eq:exact_GA_call2}
was replaced by the number~\eqref{eq:american_GA_call2},
which was approximatively calculated using
the binomial tree method on Smirnov's website~\cite{SmirnovWebsite}.}
\end{table}

\paragraph{Another American geometric average call-type example}
\label{sec:example_GA_call1}

In this subsection, we test the algorithm of Framework~\ref{algo:general}
on the example of pricing
an American geometric average call-type option
on up to 400 distinguishable stocks in the Black--Scholes model.

Assume Framework~\ref{algo:examples-setting},
assume that $ d \in \{ 40, 80, 120, \ldots \} $,
let
$ \beta = ( \beta_1, \ldots, \beta_d ) \in \R^d $,
$ \alpha_1, \ldots, \alpha_d \in \R $,
$ r, \tilde{\beta} \in ( 0, \infty ) $,
$ K = 95 $,
$ \tilde{\xi} = 100 $
satisfy
for all $ i \in \{ 1, \ldots, d \} $
that
$ \beta_i = \frac{0.4 \, i}{d} $,
$ \alpha_i = \min \{ 0.01 \, [ ( i - 1 ) \bmod 40 ], 0.4 - 0.01 \, [ ( i - 1 ) \bmod 40 ] \} $,
$ r = \frac{1}{d} \sum_{i=1}^{d} \alpha_i - \frac{d-1}{2 d^2} \| \beta \|_{ \R^d }^2
= 0.1 - \frac{0.08}{d^2} ( d - 1 ) \bigl( \frac{d}{3} + \frac{1}{2} + \frac{1}{6 d} \bigr)
$,
and
$ \tilde{\beta}
=
\tfrac{1}{d} \| \beta \|_{ \R^d }
= \frac{0.4}{d} \bigl( \frac{d}{3} + \frac{1}{2} + \frac{1}{6 d} \bigr)^{ \nicefrac{1}{2} }
$,
let $ Y \colon [0,T] \times \Omega \to \R $
be an $ \mathscr{F} $-adapted stochastic process with continuous sample paths
which satisfies
that for all $ t \in [ 0, T ] $
it holds $ \P $-a.s.\ that
\begin{equation}
Y_t
=
\tilde{\xi}
+
r
\int_{0}^{t} Y_s \, ds
+
\tilde{\beta}
\int_{0}^{t} Y_s \, dW_s^{0,1,(1)}
,
\end{equation}
let $ \mathbb{F} = ( \mathbb{F}_t )_{ t \in [0,T] } $ be the
filtration generated by $ X $,
let $ \mathfrak{F} = ( \mathfrak{F}_t )_{ t \in [0,T] } $ be the
filtration generated by $ Y $,
and assume for all
$ m, j \in \N $,
$ n \in \{ 0, 1, \ldots, N \} $,
$ i \in \{ 1, \ldots, d \} $,
$ s \in [ 0, T ] $,
$ x = ( x_1, \ldots, x_d ) \in \R^d $
that
$ T =  3 $,
$ N = 50 $,
$ M = 1500 $,
$ J_0 =$ 4,096,000,
$ J_m = 8192 $,
$ \varepsilon = 10^{-8} $,
$ \gamma_m =
5 \, [ 10^{ -2 } \, \mathbbm{1}_{ [ 1, \nicefrac{M}{3} ] }(m)
+ 10^{ -3 } \, \mathbbm{1}_{ ( \nicefrac{M}{3}, \nicefrac{2M}{3} ] }(m)
+ 10^{ -4 } \, \mathbbm{1}_{ ( \nicefrac{2M}{3}, \infty ) }(m) ] $,
$ \xi_i = 100 $,
$ \mu(x) = ( \alpha_1 x_1, \ldots, \alpha_d x_d ) $,
$ \sigma(x) = \operatorname{diag}( \beta_1 x_1, \ldots, \beta_d x_d ) $,
that
\begin{equation}
\mathcal{X}_{ n }^{ m-1, j, (i) }
=
\exp\bigl(
\bigl[ \alpha_i - \tfrac{1}{2}( \beta_i )^2 \bigr] t_n
+
\beta_i \, W_{ t_n }^{ m-1, j, (i) }
\bigr)
\, \xi_i,
\end{equation}
and that
\begin{equation}
g(s,x)
=
e^{ - r s }
\max \Biggl\{ \Biggl[ \prod_{k=1}^{d} | x_k |^{ \nicefrac{1}{d} } \Biggr] - K, 0 \Biggr\}
.
\end{equation}
The random variable $ \mathcal{P} $ given in \eqref{apprprice}
provides approximations of the price
\begin{equation}
\label{eq:exact_GA_call1}
\sup\!\left\{ 
\E\bigl[ 
  g( \tau, X_{ \tau } )
\bigr]
\colon
\substack{
  \tau \colon \Omega \to [0,T] \text{ is an}
\\
  \mathbb{F}\text{-stopping time}
}
\right\}
.
\end{equation}
In Table~\ref{tab:example_GA_call1}, we show
approximations
of the mean of $ \mathcal{P} $,
of the standard deviation of $ \mathcal{P} $,
of the real number
\begin{equation}
\label{eq:european_GA_call1}
\E \bigl[
e^{ -r T }
\max \{ Y_T - K, 0 \}
\bigr],
\end{equation}
and of the relative $ L^1 $-approximation error associated with $ \mathcal{P} $,
the uncorrected sample standard deviation of the relative approximation error associated with $ \mathcal{P} $,
and
the average runtime in seconds needed for calculating one realisation of $ \mathcal{P} $
for $ d \in \{ 40, 80, 120, 160, 200, \allowbreak 400 \} $.
The approximative calculations of the mean of $ \mathcal{P} $, of the standard deviation of $ \mathcal{P} $,
and of the relative $ L^1 $-approximation error associated with $ \mathcal{P} $,
the computations of
the uncorrected sample standard deviation of the relative approximation error associated with $ \mathcal{P} $
as well as the computations of
the average runtime for calculating one realisation of $ \mathcal{P} $
in Table~\ref{tab:example_GA_call1}
each are based on $10$ independent realisations of $ \mathcal{P} $,
which were obtained from an implementation in {\sc Python}.
Moreover, in the approximative calculations of the relative approximation error associated with $ \mathcal{P} $,
the exact value of the price~\eqref{eq:exact_GA_call1}
was replaced by the real number
\begin{equation}
\label{eq:american_GA_call1}
  \sup\!\left\{ 
    \E\bigl[ 
      e^{ - r \tau }
      \max \{ Y_\tau - K, 0 \}
    \bigr]
    \colon
    \substack{
      \tau \colon \Omega \to [0,T] \text{ is an}
    \\
      \mathfrak{F}\text{-stopping time}
    }
  \right\}
\end{equation}
(cf.\ Proposition~\ref{prop:dim_reduction}).
It is well known (cf., e.g., Shreve~\cite[Corollary~8.5.3]{Shreve2004}) that
the number~\eqref{eq:american_GA_call1} is equal to
the number~\eqref{eq:european_GA_call1},
which was approximatively computed in {\sc Matlab} R2017b
using Lemma~\ref{lem:black_scholes} above.
Note that
\eqref{eq:american_GA_call1}
corresponds to the price of an American call option
on a single stock in the Black--Scholes model
with initial stock price $ \tilde{\xi} $,
interest rate $ r $,
volatility $ \tilde{\beta} $,
strike price $ K $,
and maturity $ T $,
while
\eqref{eq:european_GA_call1}
corresponds to the price of a European call option
on a single stock in the Black--Scholes model
with initial stock price $ \tilde{\xi} $,
interest rate $ r $,
volatility $ \tilde{\beta} $,
strike price $ K $,
and maturity $ T $.

\begin{table}
\centering
\begin{tabular}{|c|c|c|c|c|c|c|}
\hline
Dimen-&Mean&Standard&Price&Rel.\ $L^1$-&Standard&Average\\
sion $d$&of $ \mathcal{P} $&deviation&\eqref{eq:european_GA_call1}&approx.&deviation&runtime \\
&&of $ \mathcal{P} $&&error&of the rel.&in sec.\ for\\
&&&&&approx.&one realisa-\\
& & & & & error &tion of $ \mathcal{P} $\\
\hline
40 & 23.6878 & 0.0037 & 23.6883 & 0.00012 & 0.00009 & 70.8 \\
80 & 23.7221 & 0.0019 & 23.7235 & 0.00008 & 0.00006 & 112.5 \\
120 & 23.7361 & 0.0028 & 23.7357 & 0.00011 & 0.00004 & 147.9 \\
160 & 23.7414 & 0.0017 & 23.7419 & 0.00006 & 0.00004 & 188.3 \\
200 & 23.7454 & 0.0017 & 23.7456 & 0.00005 & 0.00004 & 224.5 \\
400 & 23.7523 & 0.0010 & 23.7531 & 0.00004 & 0.00003 & 1088.3 \\
\hline
\end{tabular}
\caption{\label{tab:example_GA_call1}%
Numerical simulations of the algorithm of Framework~\ref{algo:general}
for pricing the American geometric average call-type option
from the example in Subsection~\ref{sec:example_GA_call1}.
In the approximative calculations of the relative approximation errors,
the exact value of the price~\eqref{eq:exact_GA_call1}
was replaced by the number~\eqref{eq:european_GA_call1},
which was approximatively computed in {\sc Matlab}.}
\end{table}

\subsection{Examples without known one-dimensional representation}
\label{sec:high_dimensional}

In Subsection~\ref{sec:low_dimensional} above,
numerical results for examples with a one-dimensional representation can be found.
We test in this subsection different examples where such a representation is not known.

\subsubsection{Max-call options}
\label{sec:max_call}

\paragraph{A Bermudan max-call standard benchmark example}
\label{sec:max_std}

In this subsection, we test the algorithm of Framework~\ref{algo:general}
on the example of pricing
a Bermudan max-call option
on up to 500 stocks in the Black--Scholes model
(cf.~\cite[Subsection~4.1]{BeckerCheriditoJentzen2019}).
In the case of up to five underlying stocks, this example is a standard benchmark example in the literature
(cf., e.g.,
\cite[Subsection~5.4]{BroadieGlasserman2004},
\cite[Subsection~8.1]{LongstaffSchwartz2001},
\cite[Section~4]{AndersenBroadie2004},
\cite[Subsection~5.1]{HaughKogan2004},
\cite[Subsection~4.3]{Rogers2002},
\cite[Subsection~3.9]{Firth2005},
\cite[Subsection~4.2]{BenderKolodkoSchoenmakers2006},
\cite[Subsection~5.3]{BC08},
\cite[Subsection~6.1]{BelomestnyBendnerSchoenmakers2009},
\cite[Subsection~4.1]{JainOosterlee2012},
\cite[Subsection~7.2]{SchoenmakersZhangHuang2013},
\cite[Subsection~6.1]{BelomestnyLadkauSchoenmakers2015},
\cite[Subsection~5.2.1]{Lelong2018}).

Assume Framework~\ref{algo:examples-setting},
let
$ H \in \N_0 $,
$ r = 0.05 = 5 \% $,
$ \delta = 0.1 = 10\% $,
$ \beta = 0.2 = 20 \% $,
$ K = 100 $,
let $ \mathbb{F} = ( \mathbb{F}_t )_{ t \in [0,T] } $ be the
filtration generated by $ X $,
assume that each of the
employed neural networks
has $ H $ hidden layers,
and assume for all
$ m, j \in \N $,
$ n \in \{ 0, 1, \ldots, N \} $,
$ i \in \{ 1, \ldots, d \} $,
$ s \in [ 0, T ] $,
$ x = ( x_1, \ldots, x_d ) \in \R^d $
that
$ T = 3 $,
$ N = 9 $,
$ M = 3000 + d $,
$ J_0 =$ 4,096,000 
$ J_m = 8192 $,
$ \varepsilon = 0.1 $,
$ \gamma_m =
5 \, [ 10^{ -2 } \, \mathbbm{1}_{ [1,500 + \nicefrac{d}{5} ] }(m)
+ 10^{ -3 } \, \mathbbm{1}_{ ( 500 + \nicefrac{d}{5}, 1500 + \nicefrac{3d}{5} ] }(m)
+ 10^{ -4 } \, \mathbbm{1}_{ ( 1500 + \nicefrac{3d}{5}, \infty ) }(m) ] $,
$ \xi_i = \xi_1 $,
$ \mu( x ) = ( r - \delta ) \, x $,
$ \sigma(x) = \beta \operatorname{diag}( x_1, \dots, x_d ) $,
that
\begin{equation}
\mathcal{X}_{ n }^{ m-1, j, (i) }
=
\exp\bigl(
\bigl[ r - \delta - \tfrac{1}{2} \beta^2 \bigr] t_n
+
\beta \, W_{ t_n }^{ m-1, j, (i) }
\bigr)
\, \xi_1,
\end{equation}
and that
\begin{equation}
g(s,x)
=
e^{ - r s }
\max \bigl\{ \max\{ x_1, \ldots, x_d \} - K, 0 \bigr\}
.
\end{equation}
The random variable $ \mathcal{P} $ given in \eqref{apprprice}
provides approximations of the price
\begin{equation}
\label{eq:exact_max_std}
\sup\!\left\{ 
\E\bigl[ 
  g( \tau, X_{ \tau } )
\bigr]
\colon
\substack{
  \tau \colon \Omega \to \{ t_0, t_1, \ldots, t_N \} \text{ is an}
\\
  (\mathbb{F}_t )_{ t \in \{ t_0, t_1, \ldots, t_N \} }\text{-stopping time}
}
\right\}
.
\end{equation}
In Table~\ref{tab:ex_max_std1},
we show approximations
of the mean and of the standard deviation of $ \mathcal{P} $,
binomial approximations as well as
95\%~confidence intervals for the price~\eqref{eq:exact_max_std}
according to Andersen \& Broadie~\cite[Table~2 in Section~4]{AndersenBroadie2004} (where available),
95\%~confidence intervals for the price~\eqref{eq:exact_max_std}
according to Broadie \& Cao~\cite[Table~3 in Subsection~5.3]{BC08} (where available),
and the average runtime in seconds needed for calculating one realisation of $ \mathcal{P} $
for
$ ( d, \xi_1 ) \in \{ 2, 3, 5 \} \times \{ 90, 100, 110 \} $ and $ H = 2 $. The approximative calculations of 
the mean and of the standard deviation of $ \mathcal{P} $
as well as the computations of the average runtime for calculating one realisation of $ \mathcal{P} $
in Tables~\ref{tab:ex_max_std1}--\ref{tab:ex_max_layers}
each are based on $10$ independent realisations of $ \mathcal{P} $,
which were obtained from an implementation in {\sc Python}
(cf.\ {\sc Python} code~\ref{code:max_std} in Subsection~\ref{sec:code_max_std} below).

In Table~\ref{tab:ex_max_std2}, we list approximations
of the mean and of the standard deviation of $ \mathcal{P} $
and the average runtime in seconds needed for calculating one realisation of $ \mathcal{P} $
for
$ ( d, \xi_1 ) \in \{ 10, 20, 30, 50, 100, 200, 500 \} \times \{ 90, 100, 110 \} $ and $ H = 2 $.

\begin{table}
\centering
\small
\begin{tabular}{|c|c|c|c|c|c|c|c|}
\hline
Di-&Ini-&Mean&Standard&Bino-&95\% confidence&95\% confidence&Average\\
men-&tial&of $ \mathcal{P} $&deviation&mial&interval in~\cite{AndersenBroadie2004}&interval in~\cite{BC08}& runtime\\
sion&value&&of $ \mathcal{P} $&value&&& in sec.\ for \\
$d$&$\xi_1$&&&in~\cite{AndersenBroadie2004}&&& one realisa-\\
& & & & & & & tion of $ \mathcal{P} $ \\
\hline
2 & 90 & 8.068 & 0.006 & 8.075 & [8.053, 8.082] & -- & 20.9 \\
2 & 100 & 13.901 & 0.010 & 13.902 & [13.892, 13.934] & -- & 21.0 \\
2 & 110 & 21.341 & 0.008 & 21.345 & [21.316, 21.359] & -- & 21.0 \\
\hline
3 & 90 & 11.278 & 0.010 & 11.29 & [11.265, 11.308] & -- & 21.2 \\
3 & 100 & 18.685 & 0.006 & 18.69 & [18.661, 18.728] & -- & 21.3 \\
3 & 110 & 27.560 & 0.007 & 27.58 & [27.512, 27.663] & -- & 21.3 \\
\hline
5 & 90 & 16.631 & 0.009 & -- & [16.602, 16.655] & [16.620, 16.653] & 21.6 \\
5 & 100 & 26.147 & 0.009 & -- & [26.109, 26.292] & [26.115, 26.164] & 21.7 \\
5 & 110 & 36.774 & 0.009 & -- & [36.704, 36.832] & [36.710, 36.798] & 21.7 \\
\hline
\end{tabular}
\caption{\label{tab:ex_max_std1}%
Numerical simulations of the algorithm of Framework~\ref{algo:general}
for pricing the Bermudan max-call option
from the example in Subsection~\ref{sec:max_std}
for $ d \in  \{ 2, 3, 5 \} $ and $ H = 2 $
(cf.\ {\sc Python} code~\ref{code:max_std} in Subsection~\ref{sec:code_max_std} below).}
\end{table}

\begin{table}
\centering
\begin{tabular}{|c|c|c|c|c|}
\hline
Dimen-&Initial&Mean&Standard&Average runtime \\
sion $ d $ &value $\xi_1$&of $ \mathcal{P} $&deviation&in sec.\ for one\\
&&&of $ \mathcal{P} $&realisation of $ \mathcal{P} $\\
\hline
10 & 90 & 26.196 & 0.012 & 22.7 \\
10 & 100 & 38.272 & 0.008 & 22.9 \\
10 & 110 & 50.812 & 0.007 & 22.9 \\
\hline
20 & 90 & 37.692 & 0.011 & 26.9 \\
20 & 100 & 51.572 & 0.010 & 26.9 \\
20 & 110 & 65.510 & 0.010 & 27.5 \\
\hline
30 & 90 & 44.831 & 0.014 & 30.3 \\
30 & 100 & 59.516 & 0.010 & 30.0 \\
30 & 110 & 74.234 & 0.009 & 30.0 \\
\hline
50 & 90 & 53.897 & 0.016 & 36.8 \\
50 & 100 & 69.572 & 0.008 & 36.8 \\
50 & 110 & 85.262 & 0.011 & 36.8 \\
\hline
100 & 90 & 66.359 & 0.010 & 55.6 \\
100 & 100 & 83.390 & 0.014 & 55.6 \\
100 & 110 & 100.421 & 0.014 & 55.6 \\
\hline
200 & 90 & 79.005 & 0.009 & 94.3 \\
200 & 100 & 97.414 & 0.012 & 94.4 \\
200 & 110 & 115.832 & 0.012 & 94.4 \\
\hline
500 & 90 & 95.970 & 0.012 & 265.0 \\
500 & 100 & 116.254 & 0.010 & 261.4 \\
500 & 110 & 136.534 & 0.015 & 264.9 \\
\hline
\end{tabular}
\caption{\label{tab:ex_max_std2}%
Numerical simulations of the algorithm of Framework~\ref{algo:general}
for pricing the Bermudan max-call option
from the example in Subsection~\ref{sec:max_std}
for $ d \in  \{ 10, 20, 30, 50, \allowbreak 100, 200, 500 \} $ and $ H = 2 $
(cf.\ {\sc Python} code~\ref{code:max_std} in Subsection~\ref{sec:code_max_std} below).}
\end{table}

To see the impact of the number of hidden layers used in the neural networks, we additionally report in 
Table~\ref{tab:ex_max_layers} approximation results for $d=5$, $\xi_1 = 100$, and $H \in \{0,1,2,3,4,5\}$.
We used the same number of simulations as in Tables ~\ref{tab:ex_max_std1}--\ref{tab:ex_max_std2}, 
but due to the higher number of hidden layers, we chose
$ M = 5000 $ and
$\forall \, m \in \N \colon \gamma_m = 5[10^{-2} \mathbbm{1}_{[1,1000]}(m) + 10^{-3} \mathbbm{1}_{(1000, 3000]}(m)
+ 10^{-4} \mathbbm{1}_{(3000, \infty)}(m)]$. It can be seen that, in this example, two hidden layers
yield better results than zero or one hidden layer. But more than two hidden layers do not 
lead to an improvement.

\begin{table}[ht]
\centering
\begin{tabular}{|c|c|c|c|}
\hline
Number&Mean&Standard&Average runtime \\
of hidden&of $ \mathcal{P} $&deviation&in sec.\ for one\\
layers $ H $&&of $ \mathcal{P} $&realisation of $ \mathcal{P} $\\
\hline
0 & 25.843 & 0.010 & 11.4 \\ 
1 & 26.134 & 0.009 & 21.7 \\
2 & 26.147 & 0.007 & 37.7 \\ 
3 & 26.147 & 0.012 & 54.6 \\ 
4 & 26.146 & 0.006 & 70.2 \\ 
5 & 26.144 & 0.008 & 84.7 \\ 
\hline
\end{tabular}
\caption{\label{tab:ex_max_layers}%
Numerical results for the Bermudan max-call option
from the example in Subsection~\ref{sec:max_std}
for $ d = 5$ and $\xi_1 = 100$ obtained using networks with 
$H \in \{0,1,2,3,4,5\}$ hidden layers.}
\end{table}

\paragraph{A high-dimensional Bermudan max-call benchmark example}
\label{sec:max_big}

In this subsection, we test the algorithm of Framework~\ref{algo:general}
on the example of pricing
the Bermudan max-call option from the example in Subsection~\ref{sec:max_std}
in a case with 5000 underlying stocks.
All {\sc Python} source codes corresponding to this example
were run
in single precision (float32)
on a NVIDIA Tesla P100 GPU.

Assume Framework~\ref{algo:examples-setting},
let
$ r = 0.05 = 5 \% $,
$ \delta = 0.1 = 10\% $,
$ \beta = 0.2 = 20 \% $,
$ K = 100 $,
let $ \mathbb{F} = ( \mathbb{F}_t )_{ t \in [0,T] } $ be the
filtration generated by $ X $,
and assume for all
$ m, j \in \N $,
$ n \in \{ 0, 1, \ldots, N \} $,
$ i \in \{ 1, \ldots, d \} $,
$ s \in [ 0, T ] $,
$ x = ( x_1, \ldots, x_d ) \in \R^d $
that
$ T = 3 $,
$ d = 5000 $,
$ N = 9 $,
$ J_0 = 2^{20} $,
$ J_m = 1024 $,
$ \varepsilon = 10^{-8} $,
$ \gamma_m =
10^{ -2 } \, \mathbbm{1}_{ [ 1, 2000 ] }(m)
+ 10^{ -3 } \, \mathbbm{1}_{ ( 2000, 4000 ] }(m)
+ 10^{ -4 } \, \mathbbm{1}_{ ( 4000, \infty ) }(m) $,
$ \xi_i = 100 $,
$ \mu( x ) = ( r - \delta ) \, x $,
$ \sigma(x) = \beta \operatorname{diag}( x_1, \dots, x_d ) $,
that
\begin{equation}
\mathcal{X}_{ n }^{ m-1, j, (i) }
=
\exp\bigl(
\bigl[ r - \delta - \tfrac{1}{2} \beta^2 \bigr] t_n
+
\beta \, W_{ t_n }^{ m-1, j, (i) }
\bigr)
\, \xi_i,
\end{equation}
and that
\begin{equation}
g(s,x)
=
e^{ - r s }
\max \bigl\{ \max\{ x_1, \ldots, x_d \} - K, 0 \bigr\}
.
\end{equation}
For sufficiently large $ M \in \N $,
the random variable $ \mathcal{P} $
provides approximations of the price
\begin{equation}
\label{eq:exact_max_big}
\sup\!\left\{ 
\E\bigl[ 
  g( \tau, X_{ \tau } )
\bigr]
\colon
\substack{
  \tau \colon \Omega \to \{ t_0, t_1, \ldots, t_N \} \text{ is an}
\\
  (\mathbb{F}_t )_{ t \in \{ t_0, t_1, \ldots, t_N \} }\text{-stopping time}
}
\right\}
.
\end{equation}
In Table~\ref{tab:ex_max_big}, we show a realisation of $ \mathcal{P} $,
a 95\% confidence interval for the corresponding realisation of the random variable
\begin{equation}
\label{eq:RV_confidence}
\Omega \ni \mathbf{w}
\mapsto
\E \Bigl[
 g\Bigl( 
      \tau^{ 1, \Theta_M( \mathbf{w} ), \mathbb{S}_M( \mathbf{w} ) }
      ,
      \mathcal{X}^{ 0, 1 }_{ 
        \tau^{ 1, \Theta_M( \mathbf{w} ), \mathbb{S}_M( \mathbf{w} ) }
      }
    \Bigr)
\Bigr]
\in \R,
\end{equation}
the corresponding realisation of the relative approximation error associated with $ \mathcal{P} $,
and the runtime in seconds needed for calculating the realisation of $ \mathcal{P} $
for
$ M \in \{ 0, 250, 500, \ldots, \allowbreak 2000 \} \cup \{ 6000 \} $.
In addition,
Figure~\ref{fig:ex_max_big} depicts a realisation of the relative approximation error associated 
with $ \mathcal{P} $ against $ M \in \{ 0, 10, 20, \ldots, 2000 \} $.
For each case, the 95\% confidence interval for the realisation of the random variable~\eqref{eq:RV_confidence}
in Table~\ref{tab:ex_max_big}
was computed based on the corresponding realisation of $ \mathcal{P} $,
the corresponding sample standard deviation,
and the $ 0.975 $ quantile of the standard normal distribution
(cf., e.g., \cite[Subsection~3.3]{BeckerCheriditoJentzen2019}).
Moreover,
in the approximative calculations of the realisation of the relative approximation error associated with $ \mathcal{P} $,
in Table~\ref{tab:ex_max_big} and Figure~\ref{fig:ex_max_big}
the exact value of the price~\eqref{eq:exact_max_big}
was replaced by the value 165.430,
which corresponds to a realisation of $ \mathcal{P} $ with $ M = 6000 $
(cf.\ Table~\ref{tab:ex_max_big}).
All realisations of $ \mathcal{P} $ have been obtained from running a close variant of
the {\sc Python} code used for the example in Subsection~\ref{sec:max_std} with $ d = 5000 $
(cf.\ {\sc Python} code~\ref{code:max_std} in Subsection~\ref{sec:code_max_std} below).

\begin{table}[t]
\centering
\begin{tabular}{|c|c|c|c|c|}
\hline
Number of   & Realisation            & 95\% confidence & Rel.\ approx.& Runtime\\
steps $ M $ & of $ \mathcal{P} $ & interval               & error & in sec.\\
\hline
0 & 106.711 & [106.681, 106.741] & 0.35495 & 157.3 \\
250 & 132.261 & [132.170, 132.353] & 0.20050 & 271.7 \\
500 & 156.038 & [155.975, 156.101] & 0.05677 & 386.0 \\
750 & 103.764 & [103.648, 103.879] & 0.37276 & 500.4 \\
1000 & 161.128 & [161.065, 161.191] & 0.02601 & 614.3 \\
1250 & 162.756 & [162.696, 162.816] & 0.01616 & 728.8 \\
1500 & 164.498 & [164.444, 164.552] & 0.00563 & 842.8 \\
1750 & 163.858 & [163.803, 163.913] & 0.00950 & 957.3 \\
2000 & 165.452 & [165.400, 165.505] & 0.00014 & 1071.9 \\
\hline
6000 & 165.430 & [165.378, 165.483] & 0.00000 & 2899.5 \\
\hline
\end{tabular}
\caption{\label{tab:ex_max_big}%
Numerical simulations of the algorithm of Framework~\ref{algo:general}
for pricing the Bermudan max-call option on 5000 stocks
from the example in Subsection~\ref{sec:max_big}
(cf.\ {\sc Python} code~\ref{code:max_std} in Subsection~\ref{sec:code_max_std} below).
In the approximative calculations of the relative approximation error,
the exact value of the price~\eqref{eq:exact_max_big}
was replaced by the value 165.430,
which corresponds to a realisation of $ \mathcal{P} $ with $ M = 6000 $.}
\end{table}

\begin{figure}[!ht]
\centering
\includegraphics[width=0.75\textwidth]{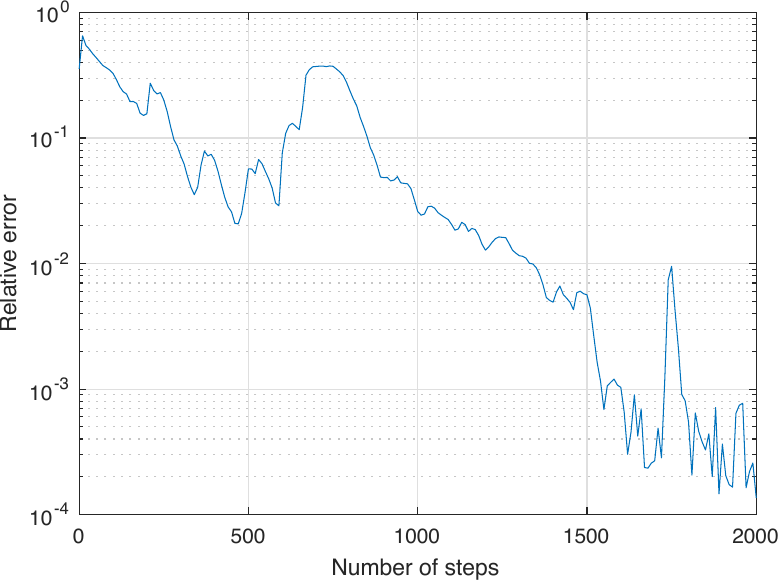}
\caption{\label{fig:ex_max_big}%
Plot of a realisation of the relative approximation error
$ \tfrac{|\mathcal{P}-165.430|}{165.430} $
against $ M \in \{ 0, 10, 20, \ldots, 2000 \} $
in the case of the Bermudan max-call option on 5000 stocks
from the example in Subsection~\ref{sec:max_big}.}
\end{figure}

\paragraph{Another Bermudan max-call example}
\label{sec:max-call_equity}

In this subsection, we test the algorithm of Framework~\ref{algo:general}
on the example of pricing
a Bermudan max-call option
for different maturities and strike prices
on up to 400 \emph{correlated} stocks, that do not pay dividends, in the Black--Scholes model.
This example is taken from Barraquand \& Martineau~\cite[Section~VII]{BarraquandMartineau1995}.

Assume Framework~\ref{algo:examples-setting},
let
$ \eta = \nicefrac{30}{365} $,
$ r = 0.05 \, \eta = 5 \% \cdot \eta $,
$ \beta = 0.4 \sqrt{\eta} = 40 \% \cdot \sqrt{\eta} $,
$ K \in \{ 35, 40, 45 \} $,
$ Q = ( Q_{ i, j } )_{ ( i, j ) \in \{ 1, \ldots, d \}^2 },
\mathfrak{S} = ( \varsigma_1, \ldots, \varsigma_d ) \in \R^{d \times d} $
satisfy for all $ i \in \{ 1, \ldots, d \} $ that
$ Q_{ i, i } = 1 $,
$ \forall \, j \in \{ 1, \ldots, d \} \setminus \{ i \} \colon Q_{ i, j } = 0.5 $,
and
$ \mathfrak{S}^* \, \mathfrak{S} = Q $,
let $ \mathbb{F} = ( \mathbb{F}_t )_{ t \in [0,T] } $ be the
filtration generated by $ X $,
and assume for all
$ m, j \in \N $,
$ n \in \{ 0, 1, \ldots, N \} $,
$ i \in \{ 1, \ldots, d \} $,
$ s \in [ 0, T ] $,
$ x = ( x_1, \ldots, x_d ) \in \R^d $
that
$ N = 10 $,
$ M = 1600 $,
$ J_0 =$ 4,096,000,
$ J_m = 8192 $,
$ \varepsilon = 0.001 $,
$ \gamma_m =
5 \, [ 10^{ -2 } \, \mathbbm{1}_{ [ 1, 400 ] }(m)
+ 10^{ -3 } \, \mathbbm{1}_{ ( 400, 800 ] }(m)
+ 10^{ -4 } \, \mathbbm{1}_{ ( 800, \infty ) }(m) ] $,
$ \xi_i = 40 $,
$ \mu( x ) = r \, x $,
$ \sigma(x) = \beta \operatorname{diag}( x_1, \dots, x_d ) \, \mathfrak{S}^* $,
that
\begin{equation}
\mathcal{X}_{ n }^{ m-1, j, (i) }
=
\exp\Bigl(
\bigl[ r - \tfrac{1}{2} \beta^2  \bigr] t_n
+
\beta \, \bigl\langle \varsigma_i, W_{ t_n }^{ m-1, j } \bigr\rangle_{ \R^d }
\Bigr)
\, \xi_i,
\end{equation}
and that
\begin{equation}
g(s,x)
=
e^{ - r s }
\max \bigl\{ \max\{ x_1, \ldots, x_d \} - K, 0 \bigr\}
.
\end{equation}
The random variable $ \mathcal{P} $ given in \eqref{apprprice}
provides approximations of the price
\begin{equation}
\label{eq:exact_max-call_equity}
\sup\!\left\{ 
\E\bigl[ 
  g( \tau, X_{ \tau } )
\bigr]
\colon
\substack{
  \tau \colon \Omega \to \{ t_0, t_1, \ldots, t_N \} \text{ is an}
\\
  (\mathbb{F}_t )_{ t \in \{ t_0, t_1, \ldots, t_N \} }\text{-stopping time}
}
\right\}
.
\end{equation}
In Table~\ref{tab:ex_max},
we show
approximations
of the mean and of the standard deviation of $ \mathcal{P} $,
Monte Carlo approximations of the
European max-call option price
\begin{equation}
\label{eq:European_price}
 \E\bigl[ g( T, X_T ) \bigr]
\end{equation}
corresponding to \eqref{eq:exact_max-call_equity},
approximations of the price~\eqref{eq:exact_max-call_equity}
according to \cite[Table~4 in Section~VII]{BarraquandMartineau1995}
(where available),
and
the average runtime in seconds needed for calculating one realisation of $ \mathcal{P} $
for
\begin{equation}
\begin{split}
( d, T, K ) \in
\left\{
\begin{array}{cccc}
( 10, 1, 35 ),
&( 10, 1, 40 ),
&( 10, 1, 45 ),
\\( 10, 4, 35 ),
&( 10, 4, 40 ),
&( 10, 4, 45 ),
\\( 10, 7, 35 ),
&( 10, 7, 40 ),
&( 10, 7, 45 ),
\\( 400, 12, 35 ),
&( 400, 12, 40 ),
&( 400, 12, 45 )
\end{array}
\right\}.
\end{split}
\end{equation}%
\begin{table}[!b]
\centering
\begin{tabular}{|c|c|c|c|c|c|c|c|}
\hline
Dimen-&Matu-&Strike&Mean&Standard&European&Price&Average runtime \\
sion $d$&rity $T$&price $K$&of $ \mathcal{P} $&deviation&price~\eqref{eq:European_price}&in~\cite{BarraquandMartineau1995}&in sec.\ for one \\
&&&&of $ \mathcal{P} $&&&realisation of $ \mathcal{P} $\\
\hline
10 & 1 & 35 & 10.364 & 0.001 & 10.365 & 10.36 & 15.8 \\
10 & 1 & 40 & 5.540 & 0.002 & 5.540 & 5.54 & 15.8 \\
10 & 1 & 45 & 1.895 & 0.001 & 1.896 & 1.90 & 15.8 \\
\hline
10 & 4 & 35 & 16.518 & 0.004 & 16.520 & 16.53 & 15.7 \\
10 & 4 & 40 & 11.867 & 0.004 & 11.870 & 11.87 & 15.6 \\
10 & 4 & 45 & 7.799 & 0.005 & 7.804 & 7.81 & 15.6 \\
\hline
10 & 7 & 35 & 20.914 & 0.006 & 20.916 & 20.92 & 15.6 \\
10 & 7 & 40 & 16.373 & 0.009 & 16.374 & 16.38 & 15.5 \\
10 & 7 & 45 & 12.270 & 0.006 & 12.277 & 12.28 & 15.5 \\
\hline
400 & 12 & 35 & 55.714 & 0.012 & 55.714 & -- & 126.7 \\
400 & 12 & 40 & 50.967 & 0.012 & 50.964 & -- & 127.6 \\
400 & 12 & 45 & 46.235 & 0.012 & 46.234 & -- & 126.4 \\
\hline
\end{tabular}
\caption{\label{tab:ex_max}%
Numerical simulations of the algorithm of Framework~\ref{algo:general}
for pricing the Bermudan max-call option from the example in Subsection~\ref{sec:max-call_equity}
(cf.\ {\sc Python} code~\ref{code:max-call_equity} in Subsection~\ref{sec:code_max-call_equity} below).}
\end{table}%
The approximative calculations of the mean and of the standard deviation of $ \mathcal{P} $
as well as the computations of the average runtime for calculating one realisation of $ \mathcal{P} $
in Table~\ref{tab:ex_max}
each are based on $10$ independent realisations of $ \mathcal{P} $,
which were obtained from an implementation in {\sc Python}
(cf.\ {\sc Python} code~\ref{code:max-call_equity} in Subsection~\ref{sec:code_max-call_equity} below).
Furthermore, the Monte Carlo approximations of the European price~\eqref{eq:European_price} in Table~\ref{tab:ex_max}
each were calculated in double precision (float64)
and are based on $ 2 \cdot 10^{10} $ independent realisations of the random variable
$ \Omega \ni \omega \mapsto g( T, X_T( \omega ) ) \in \R $.

\subsubsection{A strangle spread basket option}
\label{sec:strangle_spread}

In this subsection, we test the algorithm of Framework~\ref{algo:general}
on the example of pricing
an American strangle spread basket option
on five correlated stocks in the Black--Scholes model.
This example is taken from Kohler, Krzy\.zak, \& Todorovic~\cite[Section~4]{KohlerKrzyzakTodorovic2010}
(cf.~also Kohler~\cite[Section~3]{Kohler2008}
and Kohler, Krzy\.zak, \& Walk~\cite[Section~4]{KohlerKryzakWalk2008}).

Assume Framework~\ref{algo:examples-setting},
let
$ r = 0.05 = 5 \% $,
$ K_1 = 75 $,
$ K_2 = 90 $,
$ K_3 = 110 $,
$ K_4 = 125 $,
let
$ \mathfrak{S} = ( \varsigma_1, \ldots, \varsigma_5 ) \in \R^{5 \times 5} $
be given by
\begin{equation}
\mathfrak{S} =
\begin{pmatrix}
0.3024 & 0.1354 & 0.0722 & 0.1367 & 0.1641\\
0.1354 & 0.2270 & 0.0613 & 0.1264 & 0.1610\\
0.0722 & 0.0613 & 0.0717 & 0.0884 & 0.0699\\
0.1367 & 0.1264 & 0.0884 & 0.2937 & 0.1394\\
0.1641 & 0.1610 & 0.0699 & 0.1394 & 0.2535
\end{pmatrix},
\end{equation}
let $ \mathbb{F} = ( \mathbb{F}_t )_{ t \in [0,T] } $ be the
filtration generated by $ X $,
and assume for all
$ m, j \in \N $,
$ n \in \{ 0, 1, \ldots, N \} $,
$ i \in \{ 1, \ldots, d \} $,
$ s \in [ 0, T ] $,
$ x = ( x_1, \ldots, x_d ) \in \R^d $
that
$ T =  1 $,
$ d = 5 $,
$ N = 48 $,
$ M = 750 $,
$ J_0 =$ 4,096,000,
$ J_m = 8192 $,
$ \varepsilon = 10^{-8} $,
$ \gamma_m =
5 \, [ 10^{ -2 } \, \mathbbm{1}_{ [ 1, 250 ] }(m)
+ 10^{ -3 } \, \mathbbm{1}_{ ( 250, 500 ] }(m)
+ 10^{ -4 } \, \mathbbm{1}_{ ( 500, \infty ) }(m) ] $,
$ \xi_i = 100 $,
$ \mu( x ) = r \, x $,
$ \sigma(x) = \operatorname{diag}( x_1, \dots, x_d ) \, \mathfrak{S}^* $,
that
\begin{equation}
\mathcal{X}_{ n }^{ m-1, j, (i) }
=
\exp\Bigl(
\bigl[ r - \tfrac{1}{2} \| \varsigma_i \|_{ \R^d }^2  \bigr] t_n
+
\bigl\langle \varsigma_i, W_{ t_n }^{ m-1, j } \bigr\rangle_{ \R^d }
\Bigr)
\, \xi_i,
\end{equation}
and that
\begin{equation}
\begin{split}
g(s,x)
& =
-
e^{ - r s }
\max \Biggl\{
    K_1 - \frac{1}{d} \Biggl[ \sum_{ k = 1 }^{ d } x_k \Biggr],
    0
\Biggr\}
+
e^{ - r s }
\max \Biggl\{
    K_2 - \frac{1}{d} \Bigg[ \sum_{ k = 1 }^{ d } x_k \Biggr],
    0
\Biggr\}
\\ & \quad +
e^{ - r s }
\max \Biggl\{
    \frac{1}{d} \Biggl[ \sum_{ k = 1 }^{ d } x_k \Biggr] - K_3,
    0
\Biggr\}
-
e^{ - r s }
\max \Biggl\{
    \frac{1}{d} \Biggl[ \sum_{ k = 1 }^{ d } x_k \Biggr] - K_4,
    0
\Biggr\}
.
\end{split}
\end{equation}
The random variable $ \mathcal{P} $ given in \eqref{apprprice}
provides approximations of the price
\begin{equation}
\label{eq:exact_strangle_spread}
\sup\!\left\{ 
\E\bigl[ 
  g( \tau, X_{ \tau } )
\bigr]
\colon
\substack{
  \tau \colon \Omega \to [0,T] \text{ is an}
\\
  \mathbb{F}\text{-stopping time}
}
\right\}
.
\end{equation}%
\begin{table}[!b]
\centering
\begin{tabular}{|c|c|c|c|}
\hline
Mean&Standard&Lower&Average runtime \\
of $ \mathcal{P} $&deviation&bound&in sec.\ for one\\
&of $ \mathcal{P} $&in~\cite{KohlerKrzyzakTodorovic2010}&realisation of $ \mathcal{P} $\\
\hline
11.797 & 0.004 & 11.75 & 21.7 \\
\hline
\end{tabular}
\caption{\label{tab:strangle_spread}%
Numerical simulations of the algorithm of Framework~\ref{algo:general}
for pricing the American strangle spread basket option
from the example in Subsection~\ref{sec:strangle_spread}
(cf.\ {\sc Python} code~\ref{code:strangle_spread} in Subsection~\ref{sec:code_strangle_spread} below).}
\end{table}%
Table~\ref{tab:strangle_spread} shows approximations
of the mean and of the standard deviation of $ \mathcal{P} $,
a lower bound for the price~\eqref{eq:exact_strangle_spread}
according to Kohler, Krzy\.zak, \& Todorovic~\cite[Figure~4.5 in Section~4]{KohlerKrzyzakTodorovic2010}
(cf.\ also Kohler~\cite[Figure~2 in Section~3]{Kohler2008}
and, for an upper bound for the price~\eqref{eq:exact_strangle_spread},
Kohler, Krzy\.zak, \& Walk~\cite[Figure~4.2 in Section~4]{KohlerKryzakWalk2008}),
and
the average runtime in seconds needed for calculating one realisation of $ \mathcal{P} $.
Since the mean of $ \mathcal{P} $ is also a lower bound for the price~\eqref{eq:exact_strangle_spread},
a higher value indicates a better approximation of the price~\eqref{eq:exact_strangle_spread}
(cf.\ Table~\ref{tab:strangle_spread}).
The approximative calculations of the mean and of the standard deviation of $ \mathcal{P} $
as well as the computation of the average runtime for calculating one realisation of $ \mathcal{P} $
in Table~\ref{tab:strangle_spread}
each are based on $10$ independent realisations of $ \mathcal{P} $,
which were obtained from an implementation in {\sc Python}
(cf.\ {\sc Python} code~\ref{code:strangle_spread} in Subsection~\ref{sec:code_strangle_spread} below).

\subsubsection{A put basket option in Dupire's local volatility model}
\label{sec:example_Dupire}

In this subsection, we test the algorithm of Framework~\ref{algo:general}
on the example of pricing
an American put basket option
on five stocks in Dupire's local volatility model.
This example is taken from Labart \& Lelong~\cite[Subsection~6.3]{LabartLelong2011arXiv}
with the modification that
we also consider the case where the underlying stocks do not pay any dividends.

Assume Framework~\ref{algo:examples-setting},
let
$ L = 10 $,
$ r = 0.05 = 5 \% $,
$ \delta \in \{ 0 \%, 10 \% \} $,
$ K = 100 $,
assume for all
$ i \in \{ 1, \ldots, d \} $,
$ x \in \R^d $
that
$ \xi_i = 100 $
and
$ \mu( x ) = ( r - \delta ) \, x $,
let $ \beta \colon [ 0, T ] \times \R \to \R $
and $ \boldsymbol{\sigma} \colon [ 0, T ] \times \R^d \to \R^{ d \times d } $
be the functions which satisfy
for all $ t \in [ 0, T ] $, $ x = ( x_1, \ldots, x_d ) \in \R^d $
that
\begin{equation}
\beta( t, x_1 ) =
0.6
\,
e^{ - 0.05 \sqrt{t} }
\bigl(
    1.2
    - e^{
        - 0.1 \, t - 0.001 ( e^{ r t } x_1 - \xi_1 )^2
    }
\bigr) \,
x_1
\end{equation}
and
$ \boldsymbol{\sigma}( t, x ) = \operatorname{diag}( \beta( t, x_1 ), \beta( t, x_2 ), \dots, \beta( t, x_d ) ) $,
let $ S = ( S^{(1)}, \ldots, S^{(d)} ) \colon [0,T] \times \Omega \to \R^d $ 
be an $ \mathscr{F} $-adapted stochastic process with continuous sample paths
which satisfies
that for all $ t \in [ 0, T ] $
it holds $ \P $-a.s.\ that
\begin{equation}
S_t
 = \xi +
\int_{0}^{t} \mu( S_s ) \, ds +
\int_{0}^{t} \boldsymbol{\sigma}( s, S_s ) \, dW_s^{0,1}
,
\end{equation}
let
$
  \mathcal{Y}^{m,j} = ( \mathcal{Y}^{m,j,(1)}, \ldots, \mathcal{Y}^{m,j,(d)} )\colon
  [ 0, T ] \times \Omega \to \R^d
$,
$ j \in \N $,
$ m \in \N_0 $,
be the stochastic processes which satisfy for all
$ m \in \N_0 $,
$ j \in \N $,
$ \ell \in \{ 0, 1, \dots, L - 1 \} $,
$ t \in \bigl[ \frac{\ell T}{L}, \frac{(\ell + 1) T}{L} \bigr] $,
$ i \in \{ 1, \ldots, d \} $
that
$ \mathcal{Y}_0^{m,j,(i)} = \log( \xi_i ) $
and
\begin{equation}
\begin{split}
\mathcal{Y}^{m,j,(i)}_t
& =
\mathcal{Y}^{m,j(i)}_{ \nicefrac{\ell T}{L} }
+
\bigl( t - \tfrac{\ell T}{L} \bigr)
\Bigl( r - \delta - \tfrac{1}{2} \bigl[ \beta\bigl( \tfrac{\ell T}{L}, \exp\bigl( \mathcal{Y}^{m,j,(i)}_{ \nicefrac{\ell T}{L} } \bigr) \bigr) \bigr]^2 \Bigr)
\\ & \quad +
\bigl( \tfrac{t L}{T} - \ell \bigr)
\beta\bigl( \tfrac{\ell T}{L}, \exp\bigl( \mathcal{Y}^{m,j,(i)}_{ \nicefrac{\ell T}{L} } \bigr) \bigr)
\bigl(
W_{ \nicefrac{(\ell + 1) T}{L} }^{m,j,(i)}
-
W_{ \nicefrac{\ell T}{L} }^{m,j,(i)}
\bigr)
,
\end{split}
\end{equation}
let $ \mathbb{F} = ( \mathbb{F}_t )_{ t \in [0,T] } $ be the
filtration generated by $ S $,
let $ \mathfrak{F} = ( \mathfrak{F}_t )_{ t \in [0,T] } $ be the
filtration generated by $ \mathcal{Y}^{0,1} $,
and assume for all
$ m, j \in \N $,
$ n \in \{ 0, 1, \ldots, N \} $,
$ s \in [ 0, T ] $,
$ x = ( x_1, \ldots, x_d ) \in \R^d $
that
$ T =  1 $,
$ d = 5 $,
$ M = 1200 $,
$ \mathcal{X}_{ n }^{m-1,j} = \mathcal{Y}_{ t_n }^{m-1,j} $,
$ J_0 =$ 4,096,000,
$ J_m = 8192 $,
$ \varepsilon = 10^{-8} $,
$ \gamma_m =
5 \, [ 10^{ -2 } \, \mathbbm{1}_{ [ 1, 400 ] }(m)
+ 10^{ -3 } \, \mathbbm{1}_{ ( 400, 800 ] }(m)
+ 10^{ -4 } \, \mathbbm{1}_{ ( 800, \infty ) }(m) ] $,
and 
\begin{equation}
g(s,x)
=
e^{ - r s }
\max \Biggl\{
    K - \frac{1}{d} \Biggl[ \sum_{ i = 1 }^{ d } \exp( x_i ) \Biggr],
    0
\Biggr\}
.
\end{equation}
The random variable $ \mathcal{P} $ given in \eqref{apprprice}
provides approximations of the price
\begin{equation}
\label{eq:example_Dupire}
\sup\!\left\{ 
\E\bigl[ 
  g( \tau, \mathcal{Y}_\tau^{0,1} )
\bigr]
\colon
\substack{
  \tau \colon \Omega \to [0,T] \text{ is an}
\\
  \mathfrak{F}\text{-stopping time}
}
\right\}
,
\end{equation}
which, in turn, is an approximation of the price
\begin{equation}
\sup\!\left\{ 
\E\Biggl[ 
  e^{ - r \tau }
  \max \Biggl\{
      K - \frac{1}{d} \Biggl[ \sum_{ i = 1 }^{ d } S_\tau^{(i)} \Biggr],
      0
  \Biggr\}
\Biggr]
\colon
\substack{
  \tau \colon \Omega \to [0,T] \text{ is an}
\\
  \mathbb{F}\text{-stopping time}
}
\right\}
.
\end{equation}
In Table~\ref{tab:example_Dupire},
we show approximations
of the mean and of the standard deviation of $ \mathcal{P} $
and
the average runtime in seconds needed for calculating one realisation of $ \mathcal{P} $
for $ ( \delta, N ) \in \{ 0 \%, 10 \% \} \times \{ 5, 10, 50, 100 \} $.
For each case, the calculations of the results in Table~\ref{tab:example_Dupire}
are based on $10$ independent realisations of $ \mathcal{P} $,
which were obtained from an implementation in {\sc Python}
(cf.\ {\sc Python} code~\ref{code:example_Dupire} in Subsection~\ref{sec:code_example_Dupire} below).
According to~\cite[Subsection~6.3]{LabartLelong2011arXiv}, the value $ 6.30 $
is an approximation of the price~\eqref{eq:example_Dupire} for $ \delta = 10 \% $.
Furthermore, the European put basket option price 
$ \E\bigl[ g( T, \mathcal{Y}_T^{0,1} ) \bigr] $ corresponding to \eqref{eq:example_Dupire}
was approximatively calculated using
a Monte Carlo approximation based on $ 10^{10} $ realisations of the random variable
$ \Omega \ni \omega \mapsto g( T, \mathcal{Y}_T^{0,1}( \omega ) ) \in \R $
(cf.\ {\sc Python} code~\ref{code:example_Dupire} in Subsection~\ref{sec:code_example_Dupire} below),
which resulted
in the value $ 1.741 $ in the case $ \delta = 0 \% $ and
in the value $ 6.304 $ in the case $ \delta = 10 \% $.
%Note that these results are in accordance with our results in Table~\ref{tab:example_Dupire}.

\begin{table}[!t]
\centering
\begin{tabular}{|c|c|c|c|c|}
\hline
Divi-&Time discreti-&Mean&Standard&Average runtime\\
dend&sation para-&of $ \mathcal{P} $&deviation&in sec.\ for one\\
yield $\delta$&meter $ N $&&of $ \mathcal{P} $&realisation of $ \mathcal{P} $\\
\hline
0\% & 5 & 1.934 & 0.001 & 8.7 \\
0\% & 10 & 1.977 & 0.001 & 12.1 \\
0\% & 50 & 1.976 & 0.001 & 34.2 \\
0\% & 100 & 1.972 & 0.002 & 67.0 \\
\hline
10\% & 5 & 6.301 & 0.003 & 8.2 \\
10\% & 10 & 6.303 & 0.004 & 12.0 \\
10\% & 50 & 6.305 & 0.002 & 34.2 \\
10\% & 100 & 6.303 & 0.003 & 67.0 \\
\hline
\end{tabular}
\caption{\label{tab:example_Dupire}%
Numerical simulations of the algorithm of Framework~\ref{algo:general}
for pricing the American put basket option in Dupire's local volatility model 
from the example in Subsection~\ref{sec:example_Dupire}
(cf.\ {\sc Python} code~\ref{code:example_Dupire} in Subsection~\ref{sec:code_example_Dupire} below).
The corresponding European put basket option price is approximately equal to the value $ 1.741 $ in the case $ \delta = 0 \% $ and
to the value $ 6.304 $ in the case $ \delta = 10 \% $.}
\end{table}

\subsubsection{A path-dependent financial derivative}
\label{sec:example_TsitsiklisVanRoy1999}

In this subsection, we test the algorithm of Framework~\ref{algo:general}
on the example of pricing
a specific path-dependent financial derivative
contingent on prices of a single underlying stock in the Black--Scholes model,
which is formulated as a 100-dimensional optimal stopping problem.
This example is taken from Tsitsiklis \& Van Roy~\cite[Section IV]{TsitsiklisVanRoy1999}
with the modification that we consider a finite instead of an infinite time horizon.

Assume Framework~\ref{algo:examples-setting},
let
$ r = 0.0004 = 0.04 \% $,
$ \beta = 0.02 = 2 \% $,
let $ \mathcal{W}^{m,j} \colon [0,\infty) \times \Omega \to \R $,
$ j \in \N $,
$ m \in \N_0 $,
be independent $ \P $-standard Brownian motions
with continuous sample paths,
let $ S^{m,j} \colon [-100,\infty) \times \Omega \to \R $,
$ j \in \N $,
$ m \in \N_0 $,
and
$ \mathcal{Y}^{m,j} \colon \N_0 \times \Omega \to \R^{100} $,
$ j \in \N $,
$ m \in \N_0 $,
be the stochastic processes which satisfy
for all $ m, n \in \N_0 $, $ j \in \N $, $ t \in [ -100, \infty) $ that
$
S_t^{m,j}
=
\exp\bigl(
    \bigl[ r - \tfrac{1}{2}\beta^2 \bigr] ( t + 100 ) + \beta \, \mathcal{W}_{ t + 100 }^{m,j}
\bigr) \, \xi_1
$
and
\begin{equation}
\begin{split}
& \mathcal{Y}_n^{m,j}
=
\Bigl(
\tfrac{S_{ n - 99 }^{m,j}}{S_{ n - 100 }^{m,j}},
\tfrac{S_{ n - 98 }^{m,j}}{S_{ n - 100 }^{m,j}},
\ldots,
\tfrac{S_{ n }^{m,j}}{S_{ n - 100 }^{m,j}}
\Bigr)
\\ &
=
\bigl(
\exp\bigl(
    \bigl[ r - \tfrac{1}{2}\beta^2 \bigr] + \beta \, [ \mathcal{W}_{ n + 1 }^{m,j} - \mathcal{W}_{ n }^{m,j} ]
\bigr),
\exp\bigl(
    2 \, \bigl[ r - \tfrac{1}{2}\beta^2 \bigr] + \beta \, [ \mathcal{W}_{ n + 2 }^{m,j} - \mathcal{W}_{ n }^{m,j} ]
\bigr),
\\ & \quad\ \
\ldots,
\exp\bigl(
    100 \, \bigl[ r - \tfrac{1}{2}\beta^2 \bigr] + \beta \, [ \mathcal{W}_{ n + 100 }^{m,j} - \mathcal{W}_{ n }^{m,j} ]
\bigr)
\bigr)
,
\end{split}
\end{equation}
let $ \mathbb{F} = ( \mathbb{F}_n )_{ n \in \N_0 } $ be the
filtration generated by $ \mathcal{Y}^{ 0, 1 } $,
and assume
for all
$ m, j \in \N $,
$ n \in \{ 0, 1, \ldots, N \} $,
$ s \in [ 0, T ] $,
$ x = ( x_1, \ldots, x_d ) \in \R^d $
that
$ T \in \N $,
$ d = 100 $,
$ N = T $,
$ M =
1200 \, \mathbbm{1}_{ [ 1, 150 ] }(T) 
+ 1500 \, \mathbbm{1}_{ ( 150, 250 ] }(T)
+ 3000 \, \mathbbm{1}_{ ( 250, \infty ) }(T)
$,
$ \mathcal{X}_n^{ m - 1, j } = \mathcal{Y}_n^{ m - 1, j } $,
$ J_0 =$ 4,096,000,
$ J_m =
8192 \, \mathbbm{1}_{ [ 1, 150 ] }(T) 
+ 4096 \, \mathbbm{1}_{ ( 150, 250 ] }(T)
+ 512 \, \mathbbm{1}_{ ( 250, \infty ) }(T)
$,
$ \varepsilon = 10^{-8} $,
$ \gamma_m =
5 \, [ 10^{ -2 } \, \mathbbm{1}_{ [ 1, \nicefrac{M}{3} ] }(m)
+ 10^{ -3 } \, \mathbbm{1}_{ ( \nicefrac{M}{3}, \nicefrac{2M}{3} ] }(m)
+ 10^{ -4 } \, \mathbbm{1}_{ ( \nicefrac{2M}{3}, \infty ) }(m) ] $,
and
$ g(s,x) = e^{ - r s } \, x_{100} $.
The random variable $ \mathcal{P} $
provides approximations of the real number
\begin{equation}
\label{eq:example_TsitsiklisVanRoy1999}
\sup\!\left\{ 
\E\Bigl[
  e^{ - r \tau } 
  \tfrac{S_{ \tau }^{0,1}}{S_{ \tau - 100 }^{0,1}}
\Bigr]
\colon
\substack{
  \tau \colon \Omega \to \{ 0, 1, \ldots, T \} \text{ is an}
\\
  ( \mathbb{F}_n )_{ n \in \{ 0, 1, \ldots, T \} }\text{-stopping time}
}
\right\}
.
\end{equation}
In Table~\ref{tab:example_TsitsiklisVanRoy1999},
we show approximations
of the mean and of the standard deviation of $ \mathcal{P} $
and
the average runtime in seconds needed for calculating one realisation of $ \mathcal{P} $
for $ T \in \{ 100, 150, 200, 250, 1000 \} $.
For each case, the calculations of the results in Table~\ref{tab:example_TsitsiklisVanRoy1999}
are based on $10$ independent realisations of $ \mathcal{P} $,
which were obtained from an implementation in {\sc Python}
(cf.\ {\sc Python} code~\ref{code:example_TsitsiklisVanRoy1999} in Subsection~\ref{sec:code_example_TsitsiklisVanRoy1999} below).
Note that in this example time is measured in days
and that, roughly speaking,~\eqref{eq:example_TsitsiklisVanRoy1999}
corresponds to the price of a financial derivative
which, if the holder decides to exercise,
pays off the amount given by the ratio between
the current underlying stock price
and the underlying stock price 100 days ago
(cf.\ \cite[Section~IV]{TsitsiklisVanRoy1999} for more details).
According to~\cite[Subsection IV.D]{TsitsiklisVanRoy1999},
the value $ 1.282 $
is a lower bound for the price
\begin{equation}
\label{eq:example_TsitsiklisVanRoy1999_inf}
\sup\!\left\{ 
\E\Bigl[
  e^{ - r \tau } 
  \tfrac{S_{ \tau }^{0,1}}{S_{ \tau - 100 }^{0,1}}
\Bigr]
\colon
\substack{
  \tau \colon \Omega \to \N_0 \text{ is an}
\\
  \mathbb{F}\text{-stopping time}
}
\right\},
\end{equation}
which corresponds to the price~\eqref{eq:example_TsitsiklisVanRoy1999}
in the case of an infinite time horizon.
Since the mean of $ \mathcal{P} $ is a lower bound for the price~\eqref{eq:example_TsitsiklisVanRoy1999},
which, in turn, is a lower bound for the price~\eqref{eq:example_TsitsiklisVanRoy1999_inf},
a higher value indicates a better approximation of the price~\eqref{eq:example_TsitsiklisVanRoy1999_inf}.
In addition, observe that the price~\eqref{eq:example_TsitsiklisVanRoy1999}
is non-decreasing in $ T $.
While in our numerical simulations the approximate value of the mean of $ \mathcal{P} $
is less or equal than $ 1.282 $ for comparatively small time horizons, i.e., for $ T \leq 150 $,
it is already higher for slightly larger time horizons, i.e., for $ T \geq 200 $
(cf.~Table~\ref{tab:example_TsitsiklisVanRoy1999}).

\begin{table}[!hb]
\centering
\begin{tabular}{|c|c|c|c|}
\hline
Time&Mean&Standard&Average runtime\\
horizon $T$&of $ \mathcal{P} $&deviation&in sec. for one\\
&&of $ \mathcal{P} $&realisation of $ \mathcal{P} $\\
\hline
100 & 1.2721 & 0.0001 & 212.2 \\
150 & 1.2820 & 0.0002 & 683.7 \\
200 & 1.2895 & 0.0002 & 293.0 \\
250 & 1.2959 & 0.0001 & 408.4 \\
1000 & 1.2998 & 0.0007 & 864.5 \\
\hline
\end{tabular}
\caption{\label{tab:example_TsitsiklisVanRoy1999}%
Numerical simulations of the algorithm of Framework~\ref{algo:general}
for pricing the path-dependent financial derivative
from the example in Subsection~\ref{sec:example_TsitsiklisVanRoy1999}
(cf.\ {\sc Python} code~\ref{code:example_TsitsiklisVanRoy1999} in Subsection~\ref{sec:code_example_TsitsiklisVanRoy1999} below).
According to~\cite[Subsection IV.D]{TsitsiklisVanRoy1999},
the value $ 1.282 $
is a lower bound for the price~\eqref{eq:example_TsitsiklisVanRoy1999_inf}.}
\end{table}

\section{{\sc Python} source codes}
\label{sec:source_codes}

In Subsections~\ref{sec:code_max_std}--\ref{sec:code_example_TsitsiklisVanRoy1999} below we present {\sc Python} source codes associated to the numerical simulations in Subsections~\ref{sec:max_std}, \ref{sec:max-call_equity}, \ref{sec:strangle_spread}, \ref{sec:example_Dupire}, and~\ref{sec:example_TsitsiklisVanRoy1999} above. The following {\sc Python} source code, {\sc Python} code~\ref{code:common} in Subsection~\ref{sec:code_common} below,
contains the main part of the implementation of the
algorithm in Framework~\ref{algo:general}
and Framework~\ref{algo:examples-setting}
and
is employed in each of the {\sc Python} source codes in Subsections~\ref{sec:code_max_std}--\ref{sec:code_example_TsitsiklisVanRoy1999}.

\subsection{A {\sc Python} source code for the algorithm}
\label{sec:code_common}
\lstset{caption={\it common.py}}

\lstinputlisting[label={code:common}]{common.py}

\subsection[A {\sc Python} source code associated to Subsection~\ref{sec:max_std}]{A {\sc Python} source code associated to the numerical simulations in Subsection~\ref{sec:max_std}}
\label{sec:code_max_std}
\lstset{caption={\it example\_4\_4\_1\_1.py}}

\lstinputlisting[label={code:max_std}]{example_4_4_1_1.py}

\pagebreak

\subsection[A {\sc Python} source code associated to Subsection~\ref{sec:max-call_equity}]{A {\sc Python} source code associated to the numerical simulations in Subsection~\ref{sec:max-call_equity}}
\label{sec:code_max-call_equity}
\lstset{caption={\it example\_4\_4\_1\_3.py}}

\lstinputlisting[label={code:max-call_equity}]{example_4_4_1_3.py}

\subsection[A {\sc Python} source code associated to Subsection~\ref{sec:strangle_spread}]{A {\sc Python} source code associated to the numerical simulations in Subsection~\ref{sec:strangle_spread}}
\label{sec:code_strangle_spread}
\lstset{caption={\it example\_4\_4\_2.py}}

\lstinputlisting[label={code:strangle_spread}]{example_4_4_2.py}

\subsection[A {\sc Python} source code associated to Subsection~\ref{sec:example_Dupire}]{A {\sc Python} source code associated to the numerical simulations in Subsection~\ref{sec:example_Dupire}}
\label{sec:code_example_Dupire}
\lstset{caption={\it example\_4\_4\_3.py}}

\lstinputlisting[label={code:example_Dupire}]{example_4_4_3.py}

\subsection[A {\sc Python} source code associated to Subsection~\ref{sec:example_TsitsiklisVanRoy1999}]{A {\sc Python} source code associated to the numerical simulations in Subsection~\ref{sec:example_TsitsiklisVanRoy1999}}
\label{sec:code_example_TsitsiklisVanRoy1999}
\lstset{caption={\it example\_4\_4\_4.py}}

\lstinputlisting[label={code:example_TsitsiklisVanRoy1999}]{example_4_4_4.py}

\section*{Acknowledgements}

This project has been partially supported through the ETH Research Grant \mbox{ETH-47 15-2}
`Mild stochastic calculus and numerical approximations for nonlinear stochastic evolution equations with L\'evy noise',
by the project
`Construction of New Smoothness Spaces on Domains'
(project number I 3403)
funded by the Austrian Science Fund (FWF),
and
by the project
`Deep artificial neural network approximations for stochastic partial differential equations: Algorithms and convergence proofs'
(project number 184220)
funded by the Swiss National Science Foundation (SNSF).

\bibliographystyle{acm}
\bibliography{bibfile}

\begin{thebibliography}{10}

\bibitem{AitSahliaCarr1997}
{\sc AitSahlia, F., and Carr, P.}
\newblock {American Options: A Comparison of Numerical Methods}.
\newblock In {\em Numerical Methods in Finance}, L.~C.~G. Rogers and D.~Talay,
  Eds., vol.~13 of {\em Publ. Newton Inst.} Cambridge Univ. Press, Cambridge,
  1997, pp.~67--87.

\bibitem{Andersen2000}
{\sc Andersen, L.}
\newblock {A simple approach to the pricing of Bermudan swaptions in the
  multi-factor LIBOR market model}.
\newblock {\em J.\ Comput.\ Finance 3}, 2 (2000), 5--32.

\bibitem{AndersenBroadie2004}
{\sc Andersen, L., and Broadie, M.}
\newblock {Primal-Dual Simulation Algorithm for Pricing Multidimensional
  American Options}.
\newblock {\em Management Sci. 50}, 9 (2004), 1222--1234.

\bibitem{BallyPages2003b}
{\sc Bally, V., and Pagès, G.}
\newblock {Error analysis of the optimal quantization algorithm for obstacle
  problems}.
\newblock {\em Stochastic Process. Appl. 106}, 1 (2003), 1--40.

\bibitem{BarraquandMartineau1995}
{\sc Barraquand, J., and Martineau, D.}
\newblock {Numerical Valuation of High Dimensional Multivariate American
  Securities}.
\newblock {\em The Journal of Financial and Quantitative Analysis 30}, 3
  (1995), 383--405.

\bibitem{Barron1993}
{\sc Barron, A.~R.}
\newblock {Universal approximation bounds for superpositions of a sigmoidal
  function}.
\newblock {\em IEEE Trans. Inform. Theory 39}, 3 (1993), 930--945.

\bibitem{BayerHappolaTempone2019}
{\sc Bayer, C., Häppölä, J., and Tempone, R.}
\newblock {Implied stopping rules for {A}merican basket options from
  {M}arkovian projection}.
\newblock {\em Quant. Finance 19}, 3 (2019), 371--390.

\bibitem{BayerTemponeWolfers2018arXiv}
{\sc {Bayer}, C., {Tempone}, R., and {Wolfers}, S.}
\newblock {Pricing American Options by Exercise Rate Optimization}.
\newblock {\em ArXiv e-prints\/} (2018), 17 pages.
\newblock arXiv:1809.07300.

\bibitem{BeckerCheriditoJentzen2019}
{\sc Becker, S., Cheridito, P., and Jentzen, A.}
\newblock {Deep optimal stopping}.
\newblock {\em J. Mach. Learn. Res. 20\/} (2019), Paper No.\ 74, 1--25.

\bibitem{BeckerCheriditoJentzen2020}
{\sc Becker, S., Cheridito, P., and Jentzen, A.}
\newblock Pricing and hedging american-style options with deep learning.
\newblock {\em Journal of Risk and Financial Management 13}, 7 (2020).

\bibitem{Bellman1957}
{\sc Bellman, R.}
\newblock {\em {Dynamic Programming}}.
\newblock Princeton Univ. Press, Princeton, N. J., 1957.

\bibitem{Belomestny2011}
{\sc Belomestny, D.}
\newblock {On the rates of convergence of simulation-based optimization
  algorithms for optimal stopping problems}.
\newblock {\em Ann. Appl. Probab. 21}, 1 (2011), 215--239.

\bibitem{Belomestny2011a}
{\sc Belomestny, D.}
\newblock {Pricing {B}ermudan options by nonparametric regression: optimal
  rates of convergence for lower estimates}.
\newblock {\em Finance Stoch. 15}, 4 (2011), 655--683.

\bibitem{Be13}
{\sc Belomestny, D.}
\newblock {Solving optimal stopping problems via empirical dual optimization}.
\newblock {\em Ann. Appl. Probab. 23}, 5 (2013), 1988--2019.

\bibitem{BelomestnyBendnerSchoenmakers2009}
{\sc Belomestny, D., Bender, C., and Schoenmakers, J.}
\newblock {True upper bounds for {B}ermudan products via non-nested {M}onte
  {C}arlo}.
\newblock {\em Math. Finance 19}, 1 (2009), 53--71.

\bibitem{BelomestnyDickmannNagapetyan2015}
{\sc Belomestny, D., Dickmann, F., and Nagapetyan, T.}
\newblock {Pricing {B}ermudan options via multilevel approximation methods}.
\newblock {\em SIAM J. Financial Math. 6}, 1 (2015), 448--466.

\bibitem{BelomestnyLadkauSchoenmakers2015}
{\sc Belomestny, D., Ladkau, M., and Schoenmakers, J.}
\newblock {Multilevel simulation based policy iteration for optimal
  stopping---convergence and complexity}.
\newblock {\em SIAM/ASA J. Uncertain. Quantif. 3}, 1 (2015), 460--483.

\bibitem{BSD13}
{\sc Belomestny, D., Schoenmakers, J., and Dickmann, F.}
\newblock {Multilevel dual approach for pricing {A}merican style derivatives}.
\newblock {\em Finance Stoch. 17}, 4 (2013), 717--742.

\bibitem{BenderKolodkoSchoenmakers2006}
{\sc Bender, C., Kolodko, A., and Schoenmakers, J.}
\newblock {Policy iteration for {A}merican options: overview}.
\newblock {\em Monte Carlo Methods Appl. 12}, 5-6 (2006), 347--362.

\bibitem{BenderKolodkoSchoenmakers2008}
{\sc Bender, C., Kolodko, A., and Schoenmakers, J.}
\newblock {Enhanced policy iteration for {A}merican options via scenario
  selection}.
\newblock {\em Quant. Finance 8}, 2 (2008), 135--146.

\bibitem{BenderSchweizerZhuo2017}
{\sc Bender, C., Schweizer, N., and Zhuo, J.}
\newblock {A primal-dual algorithm for {BSDE}s}.
\newblock {\em Math. Finance 27}, 3 (2017), 866--901.

\bibitem{BS08}
{\sc Berridge, S.~J., and Schumacher, J.~M.}
\newblock {An irregular grid approach for pricing high-dimensional {A}merican
  options}.
\newblock {\em J. Comput. Appl. Math. 222}, 1 (2008), 94--111.

\bibitem{BouchardTouzi2004}
{\sc Bouchard, B., and Touzi, N.}
\newblock {Discrete-time approximation and {M}onte-{C}arlo simulation of
  backward stochastic differential equations}.
\newblock {\em Stochastic Process. Appl. 111}, 2 (2004), 175--206.

\bibitem{BC08}
{\sc Broadie, M., and Cao, M.}
\newblock {Improved lower and upper bound algorithms for pricing {A}merican
  options by simulation}.
\newblock {\em Quant. Finance 8}, 8 (2008), 845--861.

\bibitem{BG97}
{\sc Broadie, M., and Glasserman, P.}
\newblock {Pricing {A}merican-style securities using simulation}.
\newblock {\em J. Econom. Dynam. Control 21}, 8-9 (1997), 1323--1352.

\bibitem{BroadieGlasserman2004}
{\sc Broadie, M., and Glasserman, P.}
\newblock {A stochastic mesh method for pricing high-dimensional American
  options}.
\newblock {\em J.\ Comput.\ Finance 7}, 4 (2004), 35--72.

\bibitem{BroadieGlassermanHa2000}
{\sc Broadie, M., Glasserman, P., and Ha, Z.}
\newblock {Pricing {A}merican options by simulation using a stochastic mesh
  with optimized weights}.
\newblock In {\em Probabilistic constrained optimization}, vol.~49 of {\em
  Nonconvex Optim. Appl.} Kluwer Acad. Publ., Dordrecht, 2000, pp.~26--44.

\bibitem{Carriere1996}
{\sc Carriere, J.~F.}
\newblock {Valuation of the early-exercise price for options using simulations
  and nonparametric regression}.
\newblock {\em Insurance Math. Econom. 19}, 1 (1996), 19--30.

\bibitem{ChenGlasserman2007}
{\sc Chen, N., and Glasserman, P.}
\newblock {Additive and multiplicative duals for {A}merican option pricing}.
\newblock {\em Finance Stoch. 11}, 2 (2007), 153--179.

\bibitem{ChenWan2019arXiv}
{\sc Chen, Y., and Wan, J. W.~L.}
\newblock {Deep Neural Network Framework Based on Backward Stochastic
  Differential Equations for Pricing and Hedging American Options in High
  Dimensions}.
\newblock {\em ArXiv e-prints\/} (2019), 35 pages.
\newblock arXiv:1909.11532.

\bibitem{Christensen2014}
{\sc Christensen, S.}
\newblock {A method for pricing {A}merican options using semi-infinite linear
  programming}.
\newblock {\em Math. Finance 24}, 1 (2014), 156--172.

\bibitem{CompanyEgorovaJodarSoleymani2017}
{\sc Company, R., Egorova, V., J{ó}dar, L., and Soleymani, F.}
\newblock {Computing stable numerical solutions for multidimensional American
  option pricing problems: a semi-discretization approach}.
\newblock {\em ArXiv e-prints\/} (2017), 16 pages.
\newblock arXiv:1701.08545.

\bibitem{Cybenko1989}
{\sc Cybenko, G.}
\newblock {Approximation by superpositions of a sigmoidal function}.
\newblock {\em Math. Control Signals Systems 2}, 4 (1989), 303--314.

\bibitem{DaPratoZabczyk1992}
{\sc Da~Prato, G., and Zabczyk, J.}
\newblock {\em {Stochastic Equations in Infinite Dimensions}}, vol.~44 of {\em
  Encyclopedia of Mathematics and its Applications}.
\newblock Cambridge Univ. Press, Cambridge, 1992.

\bibitem{DavisKaratzas1994}
{\sc Davis, M. H.~A., and Karatzas, I.}
\newblock {A Deterministic Approach to Optimal Stopping}.
\newblock In {\em {Probability, Statistics and Optimisation}}, Wiley Ser.
  Probab. Math. Statist. Probab. Math. Statist. Wiley, Chichester, 1994,
  pp.~455--466.

\bibitem{DesaiFariasMoallemi2012}
{\sc Desai, V.~V., Farias, V.~F., and Moallemi, C.~C.}
\newblock {Pathwise Optimization for Optimal Stopping Problems}.
\newblock {\em Management Sci. 58}, 12 (2012), 2292--2308.

\bibitem{EHanJentzen2017}
{\sc E, W., Han, J., and Jentzen, A.}
\newblock {Deep learning-based numerical methods for high-dimensional parabolic
  partial differential equations and backward stochastic differential
  equations}.
\newblock {\em Commun. Math. Stat. 5}, 4 (2017), 349--380.

\bibitem{Egloff2005}
{\sc Egloff, D.}
\newblock {Monte {C}arlo algorithms for optimal stopping and statistical
  learning}.
\newblock {\em Ann. Appl. Probab. 15}, 2 (2005), 1396--1432.

\bibitem{EgloffKohlerTodorovic2007}
{\sc Egloff, D., Kohler, M., and Todorovic, N.}
\newblock {A dynamic look-ahead {M}onte {C}arlo algorithm for pricing
  {B}ermudan options}.
\newblock {\em Ann. Appl. Probab. 17}, 4 (2007), 1138--1171.

\bibitem{FergusonChap1online}
{\sc Ferguson, T.~S.}
\newblock {Optimal Stopping and Applications: Chapter 1. Stopping Rule
  Problems}.
\newblock Mathematics Department, UCLA, available online at
  \url{https://www.math.ucla.edu/~tom/Stopping/sr1.pdf}, last accessed on
  February 1, 2021.

\bibitem{Firth2005}
{\sc Firth, N.~P.}
\newblock {High dimensional American options}.
\newblock PhD thesis, University of Oxford, 2005.

\bibitem{FujiiTakahashiTakahashi2019}
{\sc Fujii, M., Takahashi, A., and Takahashi, M.}
\newblock {Asymptotic Expansion as Prior Knowledge in Deep Learning Method for
  High dimensional BSDEs}.
\newblock {\em Asia-Pac.\ Financ.\ Markets 26\/} (2019), 391--408.

\bibitem{Garcia2003}
{\sc García, D.}
\newblock {Convergence and biases of {M}onte {C}arlo estimates of {A}merican
  option prices using a parametric exercise rule}.
\newblock {\em J. Econom. Dynam. Control 27}, 10 (2003), 1855--1879.

\bibitem{Glasserman2004}
{\sc Glasserman, P.}
\newblock {\em {Monte {C}arlo Methods in Financial Engineering}}, vol.~53 of
  {\em Applications of Mathematics (New York). Stochastic Modelling and Applied
  Probability}.
\newblock Springer-Verlag, New York, 2004.

\bibitem{GlorotBengio2010}
{\sc Glorot, X., and Bengio, Y.}
\newblock {Understanding the difficulty of training deep feedforward neural
  networks}.
\newblock In {\em {Proceedings of the Thirteenth International Conference on
  Artificial Intelligence and Statistics}\/} (2010-05), Y.~W. Teh and
  M.~Titterington, Eds., vol.~9 of {\em Proceedings of Machine Learning
  Research}, PMLR, pp.~249--256.

\bibitem{GobetLemorWarin2005}
{\sc Gobet, E., Lemor, J.-P., and Warin, X.}
\newblock {A regression-based {M}onte {C}arlo method to solve backward
  stochastic differential equations}.
\newblock {\em Ann. Appl. Probab. 15}, 3 (2005), 2172--2202.

\bibitem{GoldbergChen2018arXiv}
{\sc {Goldberg}, D.~A., and {Chen}, Y.}
\newblock {Beating the curse of dimensionality in options pricing and optimal
  stopping}.
\newblock {\em ArXiv e-prints\/} (2018), 62 pages.
\newblock arXiv:1807.02227.

\bibitem{GoudenegeMolentZanette2019arXiv}
{\sc Goudenège, L., Molent, A., and Zanette, A.}
\newblock {Variance Reduction Applied to Machine Learning for Pricing
  Bermudan/American Options in High Dimension}.
\newblock {\em ArXiv e-prints\/} (2019), 25 pages.
\newblock arXiv:1903.11275.

\bibitem{GuyonHenry2014}
{\sc Guyon, J., and Henry-Labordère, P.}
\newblock {\em {Nonlinear Option Pricing}}.
\newblock Chapman \& Hall/CRC Financial Mathematics Series. CRC Press, Boca
  Raton, FL, 2014.

\bibitem{HanJentzenE2018}
{\sc Han, J., Jentzen, A., and E, W.}
\newblock {Solving high-dimensional partial differential equations using deep
  learning}.
\newblock {\em Proc. Natl. Acad. Sci. USA 115}, 34 (2018), 8505--8510.

\bibitem{HaughKogan2004}
{\sc Haugh, M.~B., and Kogan, L.}
\newblock {Pricing {A}merican options: a duality approach}.
\newblock {\em Oper. Res. 52}, 2 (2004), 258--270.

\bibitem{HornikStinchcombeWhite1989}
{\sc Hornik, K., Stinchcombe, M., and White, H.}
\newblock {Multilayer feedforward networks are universal approximators}.
\newblock {\em Neural Networks 2}, 5 (1989), 359--366.

\bibitem{IoffeSzegedy2015}
{\sc Ioffe, S., and Szegedy, C.}
\newblock {Batch Normalization: Accelerating Deep Network Training by Reducing
  Internal Covariate Shift}.
\newblock In {\em {Proceedings of the 32nd International Conference on Machine
  Learning}\/} (2015-07), F.~Bach and D.~Blei, Eds., vol.~37 of {\em
  Proceedings of Machine Learning Research}, PMLR, pp.~448--456.

\bibitem{JainOosterlee2012}
{\sc Jain, S., and Oosterlee, C.~W.}
\newblock {Pricing high-dimensional {B}ermudan options using the stochastic
  grid method}.
\newblock {\em Int. J. Comput. Math. 89}, 9 (2012), 1186--1211.

\bibitem{Jamshidian2007}
{\sc Jamshidian, F.}
\newblock {The duality of optimal exercise and domineering claims: a
  {D}oob-{M}eyer decomposition approach to the {S}nell envelope}.
\newblock {\em Stochastics 79}, 1-2 (2007), 27--60.

\bibitem{JentzenSalimovaWelti2017}
{\sc {Jentzen}, A., {Salimova}, D., and {Welti}, T.}
\newblock {Strong convergence for explicit space-time discrete numerical
  approximation methods for stochastic {B}urgers equations}.
\newblock {\em ArXiv e-prints\/} (2017), 60 pages.
\newblock arXiv:1710.07123. Published in \emph{J.\ Math.\ Anal.\ Appl.}

\bibitem{JiangPowell2015}
{\sc Jiang, D.~R., and Powell, W.~B.}
\newblock {An approximate dynamic programming algorithm for monotone value
  functions}.
\newblock {\em Oper. Res. 63}, 6 (2015), 1489--1511.

\bibitem{Kallsen2009}
{\sc Kallsen, J.}
\newblock {\em {Option Pricing}}.
\newblock Springer Berlin Heidelberg, 2009, pp.~599--613.

\bibitem{ks91}
{\sc Karatzas, I., and Shreve, S.~E.}
\newblock {\em {Brownian Motion and Stochastic Calculus}}, second~ed., vol.~113
  of {\em Graduate Texts in Mathematics}.
\newblock Springer-Verlag, New York, 1991.

\bibitem{KingmaBa2015}
{\sc Kingma, D.~P., and Ba, J.}
\newblock {Adam: A Method for Stochastic Optimization}.
\newblock {\em ArXiv e-prints\/} (2014), 15 pages.
\newblock arXiv:1412.6980. Published as a conference paper at ICLR 2015.

\bibitem{Klenke2014}
{\sc Klenke, A.}
\newblock {\em {Probability theory. A comprehensive course}}, second~ed.
\newblock Universitext. Spring\-er, London, 2014.

\bibitem{KloedenPlaten1992}
{\sc Kloeden, P.~E., and Platen, E.}
\newblock {\em {Numerical Solution of Stochastic Differential Equations}},
  vol.~23 of {\em Applications of Mathematics (New York)}.
\newblock Springer-Verlag, Berlin, 1992.

\bibitem{Kohler2008}
{\sc Kohler, M.}
\newblock {A regression-based smoothing spline {M}onte {C}arlo algorithm for
  pricing {A}merican options in discrete time}.
\newblock {\em AStA Adv. Stat. Anal. 92}, 2 (2008), 153--178.

\bibitem{Kohler2010}
{\sc Kohler, M.}
\newblock {A Review on Regression-based {M}onte {C}arlo Methods for Pricing
  {A}merican Options}.
\newblock In {\em {Recent Developments in Applied Probability and Statistics}}.
  Physica, Heidelberg, 2010, pp.~37--58.

\bibitem{KohlerKryzak2012}
{\sc Kohler, M., and Krzyżak, A.}
\newblock {Pricing of {A}merican options in discrete time using least squares
  estimates with complexity penalties}.
\newblock {\em J. Statist. Plann. Inference 142}, 8 (2012), 2289--2307.

\bibitem{KohlerKrzyzakTodorovic2010}
{\sc Kohler, M., Krzyżak, A., and Todorovic, N.}
\newblock {Pricing of high-dimensional {A}merican options by neural networks}.
\newblock {\em Math. Finance 20}, 3 (2010), 383--410.

\bibitem{KohlerKryzakWalk2008}
{\sc Kohler, M., Krzyżak, A., and Walk, H.}
\newblock {Upper bounds for {B}ermudan options on {M}arkovian data using
  nonparametric regression and a reduced number of nested {M}onte {C}arlo
  steps}.
\newblock {\em Statist. Decisions 26}, 4 (2008), 275--288.

\bibitem{KolodkoSchoenmakers2006}
{\sc Kolodko, A., and Schoenmakers, J.}
\newblock {Iterative construction of the optimal {B}ermudan stopping time}.
\newblock {\em Finance Stoch. 10}, 1 (2006), 27--49.

\bibitem{KulikovGusyatnikov2016}
{\sc Kulikov, A.~V., and Gusyatnikov, P.~P.}
\newblock {Stopping Times for Fractional Brownian Motion}.
\newblock In {\em {Computational Management Science: State of the Art 2014}},
  R.~J. Fonseca, G.-W. Weber, and J.~Telhada, Eds., vol.~682 of {\em Lecture
  Notes in Economics and Mathematical Systems}. Springer International
  Publishing, 2016, pp.~195--200.

\bibitem{LabartLelong2011arXiv}
{\sc Labart, C., and Lelong, J.}
\newblock {A Parallel Algorithm for solving BSDEs - Application to the pricing
  and hedging of American options}.
\newblock {\em ArXiv e-prints\/} (2011), 25 pages.
\newblock arXiv:1102.4666. Published in \emph{Monte Carlo Methods Appl.}

\bibitem{LambertonLapeyre2008}
{\sc Lamberton, D., and Lapeyre, B.}
\newblock {\em {Introduction to Stochastic Calculus Applied to Finance}},
  second~ed.
\newblock Chapman \& Hall/CRC Financial Mathematics Series. Chapman \&
  Hall/CRC, Boca Raton, FL, 2008.

\bibitem{LapeyreLelong2019arXiv}
{\sc {Lapeyre}, B., and {Lelong}, J.}
\newblock {Neural network regression for Bermudan option pricing}.
\newblock {\em ArXiv e-prints\/} (2019), 23 pages.
\newblock arXiv:1907.06474.

\bibitem{Lelong2018}
{\sc Lelong, J.}
\newblock {Dual pricing of {A}merican options by {W}iener chaos expansion}.
\newblock {\em SIAM J. Financial Math. 9}, 2 (2018), 493--519.

\bibitem{Lelong2019arXiv}
{\sc {Lelong}, J.}
\newblock {Pricing path-dependent Bermudan options using Wiener chaos
  expansion: an embarrassingly parallel approach}.
\newblock {\em ArXiv e-prints\/} (2019), 23 pages.
\newblock arXiv:1901.05672.

\bibitem{LongstaffSchwartz2001}
{\sc Longstaff, F.~A., and Schwartz, E.~S.}
\newblock {Valuing American Options by Simulation: A Simple Least-Squares
  Approach}.
\newblock {\em Rev.\ Financ.\ Stud. 14}, 1 (2001), 113--147.

\bibitem{LordFangBervoetsOosterlee2008}
{\sc Lord, R., Fang, F., Bervoets, F., and Oosterlee, C.~W.}
\newblock {A fast and accurate {FFT}-based method for pricing early-exercise
  options under {L}évy processes}.
\newblock {\em SIAM J. Sci. Comput. 30}, 4 (2008), 1678--1705.

\bibitem{m55}
{\sc Maruyama, G.}
\newblock {Continuous {M}arkov processes and stochastic equations}.
\newblock {\em Rend. Circ. Mat. Palermo (2) 4\/} (1955), 48--90.

\bibitem{Oksendal2003}
{\sc {\O}ksendal, B.}
\newblock {\em {Stochastic Differential Equations. An Introduction with
  Applications}}, sixth~ed.
\newblock Universitext. Springer-Verlag, Berlin, 2003.

\bibitem{PeskirShiryaev2006}
{\sc Peskir, G., and Shiryaev, A.}
\newblock {\em {Optimal Stopping and Free-Boundary Problems}}.
\newblock Lectures in Mathematics ETH Zürich. Birkhäuser Verlag, Basel, 2006.

\bibitem{Rogers2002}
{\sc Rogers, L. C.~G.}
\newblock {Monte {C}arlo valuation of {A}merican options}.
\newblock {\em Math. Finance 12}, 3 (2002), 271--286.

\bibitem{Rogers2010}
{\sc Rogers, L. C.~G.}
\newblock {Dual valuation and hedging of {B}ermudan options}.
\newblock {\em SIAM J. Financial Math. 1}, 1 (2010), 604--608.

\bibitem{SchoenmakersZhangHuang2013}
{\sc Schoenmakers, J., Zhang, J., and Huang, J.}
\newblock {Optimal dual martingales, their analysis, and application to new
  algorithms for {B}ermudan products}.
\newblock {\em SIAM J. Financial Math. 4}, 1 (2013), 86--116.

\bibitem{Schweizer2002}
{\sc Schweizer, M.}
\newblock {On {B}ermudan options}.
\newblock In {\em {Advances in Finance and Stochastics}}. Springer, Berlin,
  2002, pp.~257--270.

\bibitem{Shreve2004}
{\sc Shreve, S.~E.}
\newblock {\em {Stochastic Calculus for Finance {II}. Continuous-Time Models}}.
\newblock Springer Finance. Springer-Verlag, New York, 2004.

\bibitem{SirignanoSpiliopoulos2017}
{\sc Sirignano, J., and Spiliopoulos, K.}
\newblock {Stochastic gradient descent in continuous time}.
\newblock {\em SIAM J. Financial Math. 8}, 1 (2017), 933--961.

\bibitem{SirignanoSpiliopoulos2018}
{\sc Sirignano, J., and Spiliopoulos, K.}
\newblock {D{GM}: A deep learning algorithm for solving partial differential
  equations}.
\newblock {\em J. Comput. Phys. 375\/} (2018), 1339--1364.

\bibitem{SmirnovWebsite}
{\sc Smirnov, M.}
\newblock Javascript options and implied volatility calculator.
\newblock \url{http://www.math.columbia.edu/~smirnov/options13.html}.
\newblock Columbia University. Last accessed on February 1, 2021.

\bibitem{SolanTsirelsonVieille2012arXiv}
{\sc {Solan}, E., {Tsirelson}, B., and {Vieille}, N.}
\newblock {Random Stopping Times in Stopping Problems and Stopping Games}.
\newblock {\em ArXiv e-prints\/} (2012), 21 pages.
\newblock arXiv:1211.5802.

\bibitem{Tilley1993}
{\sc Tilley, J.~A.}
\newblock {Valuing American options in a path simulation model}.
\newblock {\em Transactions of the Society of Actuaries 45\/} (1993), 83--104.

\bibitem{TsitsiklisVanRoy1999}
{\sc Tsitsiklis, J.~N., and Van~Roy, B.}
\newblock {Optimal stopping of {M}arkov processes: {H}ilbert space theory,
  approximation algorithms, and an application to pricing high-dimensional
  financial derivatives}.
\newblock {\em IEEE Trans. Automat. Control 44}, 10 (1999), 1840--1851.

\bibitem{TsitsiklisRoy2001}
{\sc Tsitsiklis, J.~N., and Van~Roy, B.}
\newblock {Regression methods for pricing complex {A}merican-style options}.
\newblock {\em IEEE Trans.\ Neural Networks 12}, 4 (2001), 694--703.

\bibitem{WangChenSudjiantoLiu2018arXiv}
{\sc {Wang}, H., {Chen}, H., {Sudjianto}, A., {Liu}, R., and {Shen}, Q.}
\newblock {Deep Learning-Based BSDE Solver for Libor Market Model with
  Application to Bermudan Swaption Pricing and Hedging}.
\newblock {\em ArXiv e-prints\/} (2018), 36 pages.
\newblock arXiv:1807.06622.

\end{thebibliography}

\end{document}